
\documentclass[10pt, a4paper,reqno]{amsart}

\addtolength{\textheight}{2cm}
\addtolength{\textwidth}{2cm}
\addtolength{\oddsidemargin}{-1cm}
\addtolength{\evensidemargin}{-1cm}
\addtolength{\topmargin}{-1cm}

\usepackage{color} \definecolor{bleu_sombre}{rgb}{0,0,0.6}  \definecolor{rouge_sombre}{rgb}{0.8,0,0}\definecolor{vert_sombre}{rgb}{0,0.6,0}
\usepackage[plainpages=false,colorlinks,linkcolor=bleu_sombre,citecolor=rouge_sombre,urlcolor=vert_sombre,breaklinks]{hyperref}

\usepackage[english]{babel}

\usepackage{amsmath,amssymb,amsthm,graphicx,amsfonts,url,color,enumerate,dsfont,stmaryrd}


\theoremstyle{plain}
\newtheorem{theorem}{{Theorem}}[section]
\newtheorem*{theorem*}{{Theorem}}
\newtheorem{proposition}[theorem]{Proposition}
\newtheorem*{proposition*}{Proposition}
\newtheorem{corollary}[theorem]{Corollary}
\newtheorem*{corollary*}{Corollary}
\newtheorem{lemma}[theorem]{Lemma}
\newtheorem*{lemma*}{Lemma}

\theoremstyle{definition}
\newtheorem{definition}[theorem]{Definition}
\newtheorem*{definition*}{Definition}

\theoremstyle{remark}
\newtheorem{remark}[theorem]{Remark}

\makeatletter

\@addtoreset{equation}{section}  
\makeatother

\renewcommand{\leq}{\leqslant}	\renewcommand{\geq}{\geqslant}
\renewcommand{\bar}[1]{\overline{#1}}
\renewcommand\over[2]{{\,\buildrel #1\over#2\,}}


\newcommand{\inv}{^{-1}}


\newcommand{\abs}[1]{\left\vert #1\right\vert}        
\newcommand{\nr}[1]{\left\Vert #1\right\Vert}         
\newcommand{\innp}[2]{\left< #1 , #2 \right>}         


\newcommand{\Dom}{\Dc}			


\newcommand{\pppg}[1] {\left< #1 \right>} 	
\newcommand{\symb} {\Sc}		



\newcommand{\singl}[1]{\left\{ #1 \right\}}		
\newcommand{\Ii}[2]   {\llbracket #1,#2 \rrbracket}      
\newcommand{\R}{\mathbb{R}}		\newcommand{\C}{\mathbb{C}}
\newcommand{\N}{\mathbb{N}}

\newcommand{\1}[1]{\mathds 1 _{#1}}

\newcommand{\st}{\,:\,}					


\newcommand{\divg}{\mathop{\rm{div}}\nolimits}
\newcommand{\restr}[2]{\left.#1\right|_{#2}}         
\renewcommand{\Re}{\mathop{\rm{Re}}\nolimits}        
\renewcommand{\Im}{\mathop{\rm{Im}}\nolimits}        
	


 
    

\DeclareMathOperator{\Id}{Id}                        
 


\renewcommand{\a}{\alpha}\renewcommand{\b}{\beta}\newcommand{\g}{\gamma}\renewcommand{\d}{\delta}\newcommand{\D}{\Delta}\newcommand{\e}{\varepsilon} \newcommand{\y}{\eta}\renewcommand{\th}{\theta}\newcommand{\Th}{\Theta}\renewcommand{\l}{\lambda}\newcommand{\m}{\mu}\newcommand{\n}{\nu}\newcommand{\x}{\xi}\newcommand{\s}{\sigma}\renewcommand{\t}{\tau}\newcommand{\f}{\varphi}\newcommand{\vf}{\phi}\newcommand{\h}{\chi}\newcommand{\p}{\psi}\renewcommand{\o}{\omega}\renewcommand{\O}{\Omega}

\newcommand{\Ac}{{\mathcal A}}\newcommand{\Cc}{{\mathcal C}}\newcommand{\Dc}{{\mathcal D}}\newcommand{\Ec}{{\mathcal E}}\newcommand{\Hc}{{\mathcal H}}\newcommand{\Kc}{{\mathcal K}}\newcommand{\Lc}{{\mathcal L}}\newcommand{\Oc}{{\mathcal O}}\newcommand{\Rc}{{\mathcal R}}\newcommand{\Sc}{{\mathcal S}}\newcommand{\Tc}{{\mathcal T}}\newcommand{\Uc}{{\mathcal U}}\newcommand{\Vc}{{\mathcal V}}\newcommand{\Wc}{{\mathcal W}}

\newcommand{\ad}{{\rm{ad}}}

\newcommand{\detail}[1]
{
}

\newcommand{\qandq}{\quad \text{and} \quad}

\newcounter{stepproof}
\newcommand{\stepp}{\stepcounter{stepproof} \noindent {\bf $\bullet$}\quad }

\begin{document}

\newcommand{\Pg}{P}\newcommand{\Ho}{\Pg}
\newcommand{\Pgg}{P_G}
\newcommand{\Hz}{H_z}
\newcommand{\Pii}{P_\y}
\newcommand{\tPii}{\hat P_{\y,z}}
\newcommand{\Pcc}{P_{\y,c}}
\newcommand{\Hii}{H_{\y,z}}
\newcommand{\tHii}{\hat H_{\y,z}}
\newcommand{\Hcc}{H_{\y,c}}
\newcommand{\Ri}{R_{\y}}
\newcommand{\Riz}{\hat R_{\y}}
\newcommand{\RRiz}{\hat \Rc^{\y}}
\newcommand{\Kcc}{K_{\y}(z)}
\newcommand{\Kcco}{K_{\y}}

\newcommand{\tThiota}{\tilde \Th}
\newcommand{\Rgg}{R_{G}}
\newcommand{\Agg}{\Ac_G}

\newcommand{\nul}{{\nu_l}}
\newcommand{\nur}{{\nu_r}}
\newcommand{\nus}{{\nu_*}}
\newcommand{\anul}{{\abs \nul}}
\newcommand{\anur}{{\abs \nur}}
\newcommand{\anus}{{\abs \nus}}
\newcommand{\tnul}{{\tilde \nu_l}}
\newcommand{\tnur}{{\tilde \nu_r}}
\newcommand{\tnus}{{\tilde \nu_*}}
\newcommand{\mul}{{\mu_l}}
\newcommand{\mur}{{\mu_r}}
\newcommand{\Rci}{\Rc^\y}
\newcommand{\tdelta}{\s}
\newcommand{\Phil}{\Phi_l(z)}
\newcommand{\Phir}{\Phi_r(z)}
\newcommand{\tPhil}{\tilde \Phi_l(z)}
\newcommand{\tPhir}{\tilde \Phi_r(z)}
\newcommand{\tVc}{\tilde \Vc}

\newcommand{\sqP}{\sqrt{\Pg}}

\title{Local decay for the damped wave equation in the energy space}
\author{Julien Royer}
\curraddr{Institut de Math\'ematiques de Toulouse \\
118 route de Narbonne, 31062 Toulouse C\'edex 9}
\email{julien.royer@math.univ-toulouse.fr}

\subjclass[2010]{35B40, 35L05, 47A40, 47A55, 47B44}
\keywords{Local energy decay, damped wave equation, Mourre's commutators method, non-selfadjoint operators, resolvent estimates.}

\begin{abstract}
We improve a previous result about the local energy decay for the damped wave equation on $\R^d$. The problem is governed by a Laplacian associated with a long range perturbation of the flat metric and a short range absorption index. Our purpose is to recover the decay $\Oc(t^{-d+\e})$ in the weighted energy spaces. The proof is based on uniform resolvent estimates, given by an improved version of the dissipative Mourre theory. In particular we have to prove the limiting absorption principle for the powers of the resolvent with inserted weights.
\end{abstract}

\maketitle

\section{Introduction and statements of the results}

We consider on $\R^d$, $d\geq 3$, the damped wave equation:
\begin{equation} \label{wave-lap}
 \begin{cases}
\partial_t^2 u  + \Ho u + a(x) \partial_t u = 0,  & \text{on  }  \R_+ \times \R^d, \\
\restr{(u , \partial_t u )}{t = 0} = (u_0, u_1), &  \text{on } \R^d.
 \end{cases}
\end{equation}
The operator $\Pg$ is a Laplace-Beltrami operator (or a Laplacian in divergence form) associated to a long-range perturbation of the usual flat metric (see \eqref{hyp-long-range} below). In particular it is a self-adjoint and non-negative operator on some weighted space $L^2_w = L^2(w(x) \, dx)$ where $w$ is bounded above and below by a positive constant ($w = \det (g(x))^{\frac 12}$ if $\Pg = -\D_g$ and $w = 1$ for a Laplacian in divergence form). The absorption index $a$ is smooth, takes non-negative values and is of short range (see \eqref{hyp-a-short-range}). Our purpose is to improve the result of \cite{boucletr14} concerning the local energy decay for the solution of this problem.\\

Consider $u_0,u_1$ in the Schwartz space $\Sc$, and let $u$ be the solution of \eqref{wave-lap}. It is straightforward to check that if $a = 0$ the total energy 
\[
E_P(t) := \nr{\sqrt \Pg u(t)}^2_{L^2_w} + \nr{\partial_t u(t)}^2_{L^2_w}
\]
is conserved. In general we have 
\[
\frac {d} {dt}E_\Pg(t)= - \int_{\R^d} a(x) \abs{\partial_t u (t)}^2 w(x)\,dx \leq 0. 
\]

In \cite{boucletr14} we have proved that if all the bounded geodesics go through the damping region $\singl{a > 0}$ (this is the so-called Geometric Control Condition, see Assumption \eqref{hyp-damping} below), then the energy which is not dissipated by the medium eventually escapes to infinity, as is the case for the analogous self-adjoint problem under the usual non-trapping condition (see \eqref{hyp-non-trapping}). 

This work came after many papers dealing with the self-adjoint case $a=0$. We mention for instance \cite{laxmp63} for the free wave equation outside some star-shaped obstacle, \cite{morawetzrs77} and \cite{melrose79} for a non-trapping obstacle, \cite{ralston69} for the necessity of the non-trapping condition to obtain uniform local energy decay and \cite{burq98} for a logarithmic decay with loss of regularity but without any geometric assumption. All these papers deal with a self-adjoint and compactly supported perturbation of the free wave equation on the Euclidean space. We also mention \cite{bonyh12} and \cite{bouclet11} for a long-range perturbation of the free laplacian, and \cite{alouik02,khenissi03} for the dissipative wave equation outside a compact obstacle. In \cite{boucletr14} we considered the damped wave equation for a Laplacian associated to a long-range perturbation of the flat metric. Here we improve the result obtained in this setting.\\

Let $\Hc_\Pg$ be the Hilbert completion of $\Sc \times \Sc$ for the norm 
\[
\nr{(u,v)}_{\Hc_\Pg}^2 = \nr{\sqrt \Pg u}^2_{L^2_w} + \nr{v}^2_{L^2_w}.
\]
We consider on $\Hc_\Pg$ the operator 
\begin{equation} \label{def-Ac}
 \Ac = \begin{pmatrix} 0 & I \\ \Ho & -i a \end{pmatrix}
\end{equation}
with domain 
\begin{equation} \label{dom-A}
\Dom(\Ac) = \singl{(u,v) \in \Hc_\Pg \st (v , \Pg u) \in \Hc_\Pg}.
\end{equation}
Then $u$ is a solution to the problem \eqref{wave-lap} if and only if $U = (u,i \partial_t u)$ is a solution to
\begin{equation} \label{wave-A}
\begin{cases}
 (\partial_t  + i \Ac  ) U(t) = 0,\\
U(0) = U_0,
\end{cases}
\end{equation}
where $U_0 = (u_0,i u_1)$. The operator $\Ac$ is maximal dissipative on $\Hc_\Pg$ (see Proposition 3.5 in \cite{boucletr14}). This implies in particular that $-i\Ac$ generates a contractions semigroup, and hence for $U_0 \in \Dom(\Ac)$ the problem \eqref{wave-A} has a unique solution $U : t \mapsto  e^{-it\Ac} U_0 \in C^0(\R_+,\Dom(\Ac)) \cap C^1(\R_+^*, \Hc_\Pg)$. Then the first component $u$ of $U$ is the unique solution of \eqref{wave-lap}.
\\

Now we describe more precisely the operators $\Pg$ which we consider. We first consider the case of a Laplace-Beltrami operator associated to a metric $g(x)$:
\[
\Pg := - \D_g = - \sum_{j,k=1}^d \abs{g(x)}^{-\frac 12} \frac \partial {\partial x_j} \abs{g(x)}^{\frac 12} G_{j,k}(x) \frac {\partial} {\partial x_k},
\]
where $\abs{g(x)} = \det (g_{j,k}(x))$ and $(G_{j,k}) = (g_{j,k})\inv$. The metric $g(x)$ is a long-range perturbation of the flat metric: for some $\rho > 0$ we have 
\begin{equation} \label{hyp-long-range}
\abs{\partial^\a \big( g(x) - I_d \big)} \leq C_\a \pppg x ^{-\rho - \abs \a},
\end{equation}
where $\pppg x = \big( 1 + \abs x^2 \big) ^{\frac 12}$. The same holds for $G$. We recall from \cite{bouclet11} that we can assume without loss of generality that $\abs {g(x)} = 1$ outside some compact subset of $\R^d$. Thus there exist $b_1,\dots b_d \in C_0^\infty(\R^d)$ such that $-\D_g$ is of the form
\begin{equation} \label{def-Pg}
\Pg = - \divg G(x) \nabla + W, \quad \text{where }  W = \sum_{j=1}^d b_j(x) D_j.
\end{equation}
Here and everywhere below, $D_j$ stands for $-i\partial_{x_j}$. Now let $w = \abs{g(x)}^{\frac 12}$. Then $w$ is bounded above and below by positive constants, and 
\begin{equation} \label{Pg-selfadjoint}
\Pg \text{ is self-adjoint and non-negative on $L^2_w$ with domain $H^2_w$,}
\end{equation}
where $H^2_w$ is the usual weighted Sobolev space endowed with the norm $\nr{u}_{H^2_w} = \sum_{\abs \a \leq 2} \nr{w \partial^a u}_{L^2}$.
As suggested by \eqref{def-Pg}, we will see the Laplace-Beltrami operator as a perturbation of a Laplacian in divergence form. Thus we can also simply consider the case 
\[
\Pg := -\divg G(x) \nabla,
\]
where $G$ satisfies \eqref{hyp-long-range}. Then \eqref{def-Pg} and \eqref{Pg-selfadjoint} hold with $b_1 = \dots = b_d = 0$ (so $W = 0$) and $w = 1$.

Concerning the dissipative term, we said than the absorption index $a$ is non-negative and of short range. This means that
\begin{equation} \label{hyp-a-short-range}
\abs{\partial^\a a(x)} \leq C_\a \pppg x ^{- 1 -\rho - \abs \a}.
\end{equation}

For $\d \in \R$ we denote by $\Hc^\d$ the Hilbert completion of $\Sc \times \Sc$ for the norm 
\begin{equation} \label{norm-Hc-delta}
\nr{(u,v)}_{\Hc^\d} := \nr{\pppg x^\d \nabla u}_{L^2} + \nr{\pppg x^\d v}_{L^2}.
\end{equation}
We also set $\Hc = \Hc^0$. Then we have $\Hc = \dot H^1 \times L^2$, where $\dot H^1$ is the usual homogeneous Sobolev space. We observe that the spaces $\Hc$ and $\Hc_\Pg$ are equal with equivalent norms.\\

The main result of this paper is the following:

\begin{theorem}  [Local energy decay] \label{th-loc-decay-improved}
Assume that every bounded geodesic goes through the damping region (see \eqref{hyp-damping} below). Let $\d > d + \frac 12$ and $\e > 0$. Then there exists $C \geq 0$ such that for $U_0 \in \Hc^{\d}$ and $t \geq 0$ we have
\[
\nr{e^{-it\Ac} U_0}_{\Hc^{-\d}} \leq C \pppg t ^{-(d-\e)} \nr{U_0}_{\Hc^\d}.
\]
With the notation of \eqref{wave-lap}, this estimate reads
\[
\nr{\pppg x^{-\d} \nabla u(t)}_{L^2} + \nr{\pppg x^{-\d} \partial_t u(t)}_{L^2} \leq C \pppg t ^{-(d-\e)} \left( \nr{\pppg x^{\d} \nabla u_0}_{L^2} + \nr{\pppg x^{\d} u_1}_{L^2} \right).
\]
\end{theorem}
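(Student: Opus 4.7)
The strategy is to represent the propagator $e^{-it\Ac}$ as a contour integral of the resolvent $\Rc(z) := (\Ac - z)^{-1}$ and to gain negative powers of $t$ by repeated integration by parts in the spectral variable. Since $\Ac$ is maximal dissipative its spectrum lies in the closed lower half-plane and $\Rc$ is holomorphic on $\singl{\Im z > 0}$. After deforming the contour to the real axis, using a meromorphic extension of $\Rc$ across $\R$ made possible by the dissipation together with GCC \eqref{hyp-damping}, one obtains a Stone-like formula of the schematic form
\[
e^{-it\Ac}U_0 = \frac{1}{2\pi i} \int_\R e^{-it\l} \bigl[\Rc(\l + i 0) - \Rc(\l - i 0)\bigr] U_0 \, d\l,
\]
which I would split dyadically into a low-frequency piece $\abs \l \leq \l_0$, a bounded medium-frequency piece, and a high-frequency tail, treated separately.

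For the medium and high frequencies the natural tool is a dissipative Mourre estimate with conjugate operator of the form $\frac 12 (x \cdot D + D \cdot x)$, smoothly cut off away from the damping region. Under \eqref{hyp-damping} the Mourre estimate holds on every compact subinterval of $(0,+\infty)$, yielding the limiting absorption principle for $\Rc(\l \pm i 0)$ between weighted spaces $\pppg x^{-s}$ with $s > \frac 12$. To reach the rate $\pppg t^{-(d-\e)}$ one has to integrate by parts of order $\sim d$ in $\l$, which in turn requires uniform bounds of the type
\[
\Bigl\| \pppg x^{-\d}\, \Rc(\l \pm i 0)\, B_1\, \Rc(\l \pm i 0)\, B_2 \cdots B_{n-1}\, \Rc(\l \pm i 0)\, \pppg x^{-\d} \Bigr\|_{\Hc \to \Hc} \leq C,
\]
where the inserted $B_j$ are short-range weights of size $\pppg x^{-1-\rho}$ coming from commutators and differentiated cutoffs. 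The main obstacle lies exactly here: the standard Mourre machinery and its straightforward dissipative extension used in \cite{boucletr14} only control single-resolvent boundary values, which is what limited the previous decay rate to $\pppg t^{-d + \d_0}$ for some fixed $\d_0 > 0$. Establishing such multi-resolvent estimates with intermediate weights for the non-selfadjoint $\Ac$ is the improved dissipative Mourre theory advertised in the abstract, and is what permits replacing $\d_0$ by an arbitrary $\e > 0$.

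The low-frequency regime is what ultimately fixes the rate. A Schur-complement computation expresses $\Rc(z)$ in terms of $(\Pg - z^2 - i a z)^{-1}$, so that its behaviour as $z \to 0$ is inherited, modulo a controlled dissipative perturbation, from the classical Jensen--Kato-type expansion of $(\Pg - z^2)^{-1}$ between weighted $L^2$ spaces for $d \geq 3$, with logarithmic corrections in even dimension. Positivity of $a$ combined with GCC prevents a threshold singularity at zero and makes the expansion effective to the order needed. Inserting it into the Stone-like formula and integrating by parts $d$ times produces the claimed decay, the loss $\e$ absorbing the logarithmic terms and matching the assumption $\d > d + \frac 12$, which is exactly what is needed to compensate for the powers of $\pppg x$ generated during the successive commutations.
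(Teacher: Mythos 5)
Your overall architecture (resolvent representation of the propagator, integration by parts in the spectral variable, frequency splitting, multi-resolvent weighted estimates) points in the right direction, but several of its load-bearing steps are not justified and one of them would fail. First, the representation formula: you invoke a Stone-like formula with the jump $\Rc(\l+i0)-\Rc(\l-i0)$, justified by ``a meromorphic extension of $\Rc$ across $\R$ made possible by the dissipation together with GCC.'' No such continuation is available (or needed) here: $\Ac$ is maximal dissipative with a long-range, non-analytic, non-compactly-supported perturbation, so $(\Ac-z)^{-1}$ is only defined in $\C_+$ and there is no resonance-theoretic continuation to lean on. The paper instead multiplies the equation by $e^{itz}\h(t)$ with a time cutoff, obtains $\int \h(t)e^{itz}U(t)\,dt=(\Ac-z)^{-1}V(z)$ for $\Im z=\m>0$, and inverts the Fourier transform at fixed $\m>0$; only boundary values from above (uniform estimates in $\C_+$, Theorems \ref{th-inter-freq}--\ref{th-low-freq}) are ever used. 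Second, and more seriously, your high-frequency treatment by ``integration by parts of order $\sim d$'' does not close in the energy space: to make the integral over $\abs\t\gg1$ converge after integrating by parts you need decay in $\t$ of the transformed data, which costs Sobolev regularity of $(u_0,u_1)$ — exactly the loss of derivatives that the theorem is designed to remove and that was present in \cite{boucletr14}. The paper's key device here is the Rauch--Tsutsumi argument: integrate by parts $N$ times, use Plancherel (twice) together with the uniform bounds on $(\Ac-z)^{-1-N}$ to get $L^2$-in-time control of the high-frequency piece, then the semigroup identity plus Cauchy--Schwarz to convert this into a pointwise bound of size $\sqrt t$, and finally interpolate with the trivial bound to get arbitrary polynomial decay $\pppg t^{-\g}$ with weight $\s>\g+\frac12$. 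Nothing in your proposal plays this role.

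Two further points are factually off. The low-frequency analysis in the paper does not rest on a Jensen--Kato expansion, and the damping plays no role there at all (Remark \ref{estim-a-nul}: the low-frequency estimates hold with $a=0$); GCC is used only at high frequencies, as a substitute for non-trapping. The $\e$-loss does not come from ``logarithmic corrections in even dimension'' but from the low-frequency bound $\abs z^{d-1-N-\e}$ for the derivatives of the resolvent in the long-range setting, obtained by a scaling argument combined with the improved Mourre theory with inserted weights (the inserted factors being the short-range $a(x)$ appearing in $R^{(N)}(z)$, which is not a power of $R(z)$), Hardy-type inequalities, and perturbative resolvent identities — all of which must also be organized so that only $\nr{\pppg x^{\d}\nabla u_0}$ and not $\nr{\pppg x^\d u_0}_{L^2}$ enters, another point your sketch does not address. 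Finally, the claim that the standard dissipative Mourre theory of \cite{boucletr14} ``limited the previous decay rate'' is inaccurate: that paper already obtained $t^{-(d-\e)}$; the present improvement concerns the functional framework (energy space, no loss of derivatives, weaker weights), not the rate.
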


The proof of this result relies on uniform resolvent estimates for the operator $\Ac$. After a Fourier transform, the solution $U$ of \eqref{wave-A} can be written as the integral over $\t \in \R$ of $(\Ac-(\t+i0))\inv$. Thus we need estimates for the resolvent $(\Ac-z)\inv$ when $\Im(z) \searrow 0$. As usual, the difficulties arise when $\t$ is close to 0 and for $\abs \t \gg 1$. Let 
\[
\C_+ = \singl{z \in \C \st \Im(z) > 0}.
\]
We begin with the statement concerning the intermediate frequencies:

\begin{theorem}[Intermediate frequency estimates] \label{th-inter-freq}
Let $N \in \N$, $\d > N + \frac 12$ and $\g \in ]0, 1]$. Then there exists $C \geq 0$ such that for all $z \in \C_+$ with $\g \leq \abs z \leq \g\inv$ we have 
\[
\nr{(\Ac-z)^{-1-N}}_{\Lc(\Hc^\d,\Hc^{-\d})} \leq C.
\]
\end{theorem}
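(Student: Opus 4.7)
The plan is to reduce the matrix resolvent $(\Ac-z)^{-1-N}$ to scalar estimates on the Schrödinger-type operator
\[
K(z) := \Pg - z^2 - i z a,
\]
and then to establish a limiting absorption principle for iterated powers of $K(z)^{-1}$ with weights inserted between consecutive factors, via a dissipative version of the Mourre commutator method.

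First I would perform a Schur complement computation: solving $(\Ac-z)(u,v)=(f,g)$ on $\Hc_\Pg$ gives $v = zu + f$ and $u = K(z)^{-1}\bigl(g + (z + ia) f\bigr)$, so $(\Ac-z)^{-1}$ is built from $K(z)^{-1}$, bounded multiplications by $z$ and by the short-range function $a$, and the gradient $\nabla$ needed to recover the $\Hc$-norm. Differentiating $N$ times in $z$ and using $K'(z)=-2z-ia$, $K''(z)=-2$, $K^{(k)}(z)=0$ for $k\geq 3$, one expresses $N!\,(\Ac-z)^{-1-N}$ as a finite sum of alternating products
\[
M_0\, K(z)^{-k_1}\, M_1\, K(z)^{-k_2}\, \cdots\, K(z)^{-k_m}\, M_m,
\]
with $k_1+\cdots+k_m \leq N+1$ and each $M_j$ uniformly bounded on the relevant weighted scale. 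Controlling the $\Hc^\d\to\Hc^{-\d}$ norm then reduces to uniform $L^2$-estimates for such products with $\pppg{x}^{-\d}$ weights on the outside; the $\sqrt\Pg$ factors required to recover the $\Hc$-norm are handled through the identity $\sqrt\Pg\, K(z)^{-1}\sqrt\Pg = \Id + (z^2+iza)\,K(z)^{-1}$.

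The core technical input is then the iterated limiting absorption principle
\[
\Bigl\| \pppg{x}^{-\alpha_0}\, K(z)^{-1}\, \pppg{x}^{-\alpha_1}\, K(z)^{-1}\,\cdots\, K(z)^{-1}\, \pppg{x}^{-\alpha_m}\Bigr\|_{\Lc(L^2)} \leq C,
\]
uniform in $z$ on the compact annulus $\{\g \leq \abs z \leq \g^{-1}\} \cap \C_+$, with exponents whose total meets the Kato threshold compatible with $\d > N+\tfrac12$. This is obtained by dissipative Mourre theory using as conjugate operator a modification of the generator of dilations $A = \tfrac12(x\cdot D + D\cdot x)$, truncated outside a large ball to accommodate the long-range character of $g$ and to keep $[\Pg,iA]$ positive modulo compacts. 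The Geometric Control Condition then yields a Mourre-type inequality
\[
\Im\innp{K(z)u}{Au} \geq c\nr{u}^2 - C\nr{\chi u}^2,
\]
modulo the non-negative damping contribution $\innp{au}{u}$, which is absorbed using the dissipative sign; the single-resolvent LAP follows by the standard differential-inequality argument.

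The main obstacle is the passage from one resolvent to the $N+1$ nested resolvents with inserted $\pppg{x}$-weights. In the self-adjoint setting one promotes the simple LAP to higher powers by combining it with the functional calculus of $\Pg$ (Jensen--Mourre--Perry), but here $K(z)$ is non-selfadjoint and admits no such calculus. I would instead work directly with the repeated commutators $\ad_A^k(K(z))$, establish a $C^k$-type regularity of $K(z)$ with respect to $A$, and commute the scalar weights $\pppg{x}^{-\alpha_j}$ through the successive resolvent factors while controlling the remainder terms at each stage through the dissipative estimate. It is precisely this step that forces the threshold $\d > N+\tfrac12$, corresponding to the Kato regularity needed to close $N+1$ nested resolvents.
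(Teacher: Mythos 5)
Your overall skeleton does match the paper's route: reduce to the scalar resolvent $R(z)=(\Pg-iza-z^2)^{-1}$, expand $\frac{d^N}{dz^N}(\Ac-z)^{-1}$ into products of resolvents with inserted factors $a(x)$, and control those by a dissipative Mourre/limiting-absorption argument with weights. But two of your steps would not deliver Theorem \ref{th-inter-freq} as stated. First, you derive the positive commutator estimate from the Geometric Control Condition. Theorem \ref{th-inter-freq} makes no damping or geometric assumption, and none is needed: on the compact annulus $\g\leq\abs z\leq\g^{-1}$ the Mourre estimate for $\Pg$ with (a perturbation of) the generator of dilations holds at any fixed positive energy modulo a compact error, which is removed by absence of embedded eigenvalues; the GCC only enters the high-frequency Theorem \ref{th-high-freq}, where uniformity in the semiclassical parameter is at stake. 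As written, your argument proves the intermediate-frequency statement only under an extra hypothesis.

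Second, and more seriously, your reduction ``to uniform $L^2$-estimates with $\pppg x^{-\d}$ weights on the outside'' skips exactly the difficulty that motivates working in the energy space: the norm $\nr{U_0}_{\Hc^\d}$ controls $\pppg x^\d\nabla u_0$ but not $\pppg x^\d u_0$, so the upper-left entry $\frac{d^N}{dz^N}\big(R(z)(ia+z)\big)$ cannot be handled by sandwiching weighted $L^2$ resolvent bounds (a Hardy-type inequality converts $\pppg x^{\s}u_0$ into $\pppg x^{\d}\nabla u_0$ only for $\s<\d-1$, which would force $\d>N+\tfrac32$ rather than $\d>N+\tfrac12$). The paper's device is the identity $R(z)(ia+z)=z^{-1}R(z)\Ho-z^{-1}$, harmless precisely because $\abs z\geq\g$, so that every term acts on $u_0$ through derivatives; likewise the lower-left entry $R^{(N)}(z)\Ho$ is written in divergence form so one derivative lands on $u_0$ and one on the resolvent. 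Your substitute identity $\sqrt\Pg K(z)^{-1}\sqrt\Pg=\Id+(z^2+iza)K(z)^{-1}$ is false when $a\neq0$ (since $\sqrt\Pg$ does not commute with $K(z)^{-1}$), and in any case the spaces here are built from weighted $\nabla$, not $\sqrt\Pg$. Finally, the passage from one resolvent to $N+1$ nested resolvents with inserted factors, which you describe only as commuting the weights through and controlling remainders, is the genuinely hard technical core: the paper does not improvise it but invokes the dissipative Mourre theory with inserted factors (Theorem 1.5 of the earlier paper, sharpened here in Section \ref{sec-mourre} through the splitting by $\1{\R_\pm}(A)$ in Lemma \ref{lem-super-mourre}); at the sharp threshold $\d>N+\frac12$ an unstructured commutation scheme is not known to close.
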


For high frequencies, we know that the wave propagates along the underlying classical flow (in a sense made rigorous by semiclassical analysis, see for instance \cite{zworski} for the general theory and the Egorov Theorem in particular). Here this corresponds to the geodesic flow on $\R^{2d} \simeq T^* \R^d$ for the metric $G(x)\inv$ (that is the geodesic flow of the metric $g(x) $ when $\Pg = -\D_g$). It is the Hamiltonian flow corresponding to the symbol 
\[
p(x,\x) = \innp{G(x) \x}{\x}.
\]
We denote by $\vf^t = (X(t),\Xi(t))$ this flow. Let 
\[
\O_b = \singl{w \in p\inv(\singl 1) \st \sup _{t\in\R} \abs {X(t,w)} < +\infty}
\]
be the set of bounded geodesics. We say that the classical flow is non-trapping if 
\begin{equation} \label{hyp-non-trapping}
\O_b = \emptyset.
\end{equation}
We say that the damping condition on bounded geodesics (or Geometric Control Condition, see \cite{raucht74,bardoslr92}) is satisfied if every bounded geodesic goes through the damping region $\singl{a(x) > 0}$:
\begin{equation} \label{hyp-damping}
\forall w \in \O_b, \exists T \in \R, \quad a \big(X(T,w)\big) > 0.
\end{equation}
In particular we recover the non-trapping condition when $a = 0$. Under this damping assumption we can prove the following result:

\begin{theorem}[High frequency estimates] \label{th-high-freq}
Let $N \in \N$, $\d > N + \frac 12$ and $\g > 0$. Assume that the damping condition \eqref{hyp-damping} is satisfied. Then there exists $C \geq 0$ such that for all $z \in \C_+$ with $\abs z \geq \g$ we have 
\[
\nr{(\Ac-z)^{-1-N}}_{\Lc(\Hc^\d,\Hc^{-\d})} \leq C.
\]
\end{theorem}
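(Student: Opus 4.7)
The plan is to split the estimate according to the size of $|z|$. The bounded range $\g \leq |z| \leq \g^{-1}$ is already covered by Theorem~\ref{th-inter-freq}, so it remains to treat the high-frequency regime $|z| \geq \g^{-1}$ by semiclassical analysis. I would set $h = 1 / |z|$ and write $z = h^{-1} \zeta$ with $\zeta$ in a compact subset of $\overline{\C_+}$ satisfying $|\zeta| = 1$. After a suitable unitary rescaling, $\Ac$ becomes an $h$-pseudodifferential matrix operator $\Ac_h$ whose principal symbol has eigenvalues $\pm \sqrt{p(x,\x)}$ with $p(x,\x) = \innp{G(x) \x}{\x}$. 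The theorem thus reduces to a uniform semiclassical estimate
\[
\nr{(\Ac_h - \zeta)^{-1-N}}_{\Lc(\Hc^\d, \Hc^{-\d})} \leq C h^{-(1+N)}
\]
for $\zeta$ in this set, with $C$ independent of $h \in (0,1]$; the prefactor $h^{-(1+N)}$ is exactly what is absorbed when one translates back to $(\Ac - z)^{-1-N}$.

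For this I would build a conjugate operator $A_h = \Opw(\tilde a)$ out of an escape function $\tilde a$ for the Hamiltonian flow $\varphi^t$ of $p$. A global escape function does not exist on $\O_b$, but under the geometric control condition \eqref{hyp-damping} every trapped orbit visits $\singl{a > 0}$, so the damping compensates for the missing positivity of $\{p, \tilde a\}$ along trapped trajectories. On a neighborhood of the energy shell $p = 1$ this yields a dissipative Mourre estimate of the form
\[
\chi(\Ac_h)^* \bigl( [\Ac_h, iA_h] + 2 \Im \Ac_h \bigr) \chi(\Ac_h) \geq \alpha h\, \chi(\Ac_h)^* \chi(\Ac_h) + \Oc(h^\infty),
\]
where $\chi$ is an energy cutoff localizing near $|\zeta| = 1$ and $\alpha > 0$. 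The case $N = 0$ of the theorem then follows from the dissipative Mourre theory of \cite{boucletr14}, which provides the limiting absorption principle $(\Ac_h - \zeta)^{-1} : \Hc^{1/2 + \e} \to \Hc^{-1/2 - \e}$ uniformly in $\zeta$.

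The genuine new difficulty is the iteration to higher powers $1+N$ with weights of order $\d > N + \frac 12$. Following the Jensen--Mourre--Perry strategy, I would expand $(\Ac_h - \zeta)^{-1-N}$ as a finite sum of products of resolvents $(\Ac_h - \zeta)^{-1}$ interleaved with iterated commutators $\ad_{A_h}^k(\Ac_h)$. On the energy shell of interest each such commutator carries an additional factor $h \pppg x^{-1}$, so iterating $N$ times lets one absorb exactly $N$ extra powers of $\pppg x ^{-1}$ between adjacent resolvents, consuming the $\d - \frac 12 > N$ budget of weights and reducing the estimate to $N+1$ applications of the $N = 0$ LAP. The main obstacle is that the commutators $\ad_{A_h}^k(\Ac_h)$ are not symmetric in the dissipative setting, so the usual self-adjoint positive commutator argument cannot be closed directly: one must carry along the imaginary part of $\Ac_h$ at each step and absorb the resulting non-symmetric error terms using the dissipative Mourre estimate itself. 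This is precisely the improved dissipative Mourre theory announced in the abstract, and I expect this bookkeeping (uniform in $h$, $N$, and the inserted weights) to be where the bulk of the technical work lies. Once the semiclassical bound above is established, scaling back from $\zeta$ to $z$ produces the uniform estimate claimed in the theorem for $|z| \geq \g^{-1}$, and combining with Theorem~\ref{th-inter-freq} finishes the proof.
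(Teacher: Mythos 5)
Your plan is a genuinely different route from the paper's, which redoes no semiclassical analysis at all for the high-frequency regime: the paper imports the weighted $L^2$ bounds for $\pppg x^{-\d}R^{(N)}(z)\pppg x^{-\d}$ from Theorem 1.5 of \cite{boucletr14}, upgrades them to allow one derivative on each side (Proposition \ref{prop-high-freq}), and then obtains the energy-space bound purely algebraically from the matrix formula \eqref{res-Ac-N}, the key trick for the upper-left entry being $R(z)(ia+z)=z^{-1}R(z)\Ho-z^{-1}$. As written, however, your direct matrix-level argument has two genuine gaps.

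First, the iteration to powers. You claim each iterated commutator $\ad_{A_h}^k(\Ac_h)$ carries a factor $h\pppg x^{-1}$, so that the $\d-\tfrac12>N$ weight budget lets you reduce $(\Ac_h-\zeta)^{-1-N}$ to $N+1$ applications of the $N=0$ limiting absorption principle. The factor $h$ is right, but the spatial decay is not: for a dilation-type escape function the principal part of $\{p,\tilde a\}$ is $2p$, which does not decay in $x$ (its positivity on the energy shell is exactly what the Mourre estimate exploits), and only the long-range part decays, and only like $\pppg x^{-\rho}$ with possibly $\rho<1$. So the commutators cannot be treated as inserted weights, single-resolvent LAPs cannot simply be composed, and the multiple-resolvent estimate requires the Jensen--Mourre--Perry-type inductive machinery — in the dissipative setting with inserted factors this is Theorem 5.16 of \cite{boucletr14} (or Theorem \ref{th-mourre} here), which is precisely the part you leave as ``bookkeeping''. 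Second, you never connect the conjugate-operator weighted estimates to the specific spaces $\Hc^{\pm\d}$ of the statement: a Mourre argument on the energy space with a (diagonal) dilation-type conjugate operator naturally yields bounds in the $\sqrt{\Ho}$-based norms \eqref{norm-HcPg-delta}, whereas the theorem uses \eqref{norm-Hc-delta}, i.e. $\pppg x^{\d}\nabla u$ with no weighted $L^2$ control of the first component; the paper explicitly notes that these norms are not obviously equivalent for $\d\neq 0$, and circumventing this (together with the absence of an $L^2$ weight on $u_0$) is exactly why it routes everything through \eqref{res-Ac-N}, the identity above, and scalar estimates with derivatives inserted on both sides. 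Without these two ingredients the proposal does not yet yield the stated uniform bound in $\Lc(\Hc^\d,\Hc^{-\d})$.
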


It is known that for low frequencies we cannot estimate uniformly all the derivatives of the resolvent. This explains the restriction of the rate of decay in Theorem \ref{th-loc-decay-improved}.

\begin{theorem}[Low frequency estimates] \label{th-low-freq}
Let $N \in \N$, $\d > N + \frac 12$ and $\e > 0$. Then there exist a neighborhood $\Uc$ of 0 in $\C$ and $C \geq 0$ such that for all $z \in \Uc \cap \C_+$ we have 
\[
\nr{(\Ac-z)^{-1-N}}_{\Lc(\Hc^\d,\Hc^{-\d})} \leq C \left( 1 + \abs z ^{d-1-N-\e} \right).
\]
\end{theorem}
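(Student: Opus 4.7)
The plan is to reduce the estimate for $(\Ac-z)^{-1-N}$ to low-frequency estimates for iterated products of the self-adjoint resolvent $R(z):=(\Pg-z^2)^{-1}$ with inserted weights. Solving $(\Ac-z)\binom u v = \binom f g$ by elimination ($v=zu+f$, so $(\Pg-z^2-iaz)u=g+(ia+z)f$) yields the block formula
\[
(\Ac-z)^{-1}=\begin{pmatrix} R_w(z)(ia+z) & R_w(z) \\ I+zR_w(z)(ia+z) & zR_w(z) \end{pmatrix}, \qquad R_w(z):=(\Pg-z^2-iaz)^{-1}.
\]
Since $(\Ac-z)^{-1-N}=\frac1{N!}\partial_z^N(\Ac-z)^{-1}$, differentiating and expanding by Leibniz produces a finite sum of terms of the form
\[
z^\mu\, R_w(z)\,\Omega_1\,R_w(z)\,\Omega_2\cdots\Omega_{k-1}\,R_w(z),
\]
with $k\leq N+1$, $\mu\in\N$, and each $\Omega_j\in\{2z,\,ia(x)\}$. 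The short-range multipliers $a(x)$ act as inserted weights of order $\pppg x^{-1-\rho}$, while the scalar factors $z$ improve the $|z|$-behaviour directly.

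Using the Neumann-type identity $R_w(z)=R(z)+iz\,R(z)\,a\,R_w(z)$ iteratively, each such term can be rewritten, up to a controlled remainder, as a finite sum of products involving $R(z)$ and short-range multiplications (combinations of $a(x)$ and scalar factors of $z$). The task therefore reduces to bounding expressions of the form
\[
\nr{\pppg x^{-\d}\,R(z)\,\Theta_1\,R(z)\,\Theta_2\cdots\Theta_{k-1}\,R(z)\,\pppg x^{-\d}}_{L^2\to L^2},
\]
uniformly in $z\in\Uc\cap\C_+$, where $\Theta_j$ are short-range multipliers with decay $\pppg x^{-1-\rho}$.

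The core analytical step is a low-frequency limiting absorption principle for such iterated resolvents with inserted weights. I would obtain it by a refined dissipative Mourre argument with the generator of dilations as conjugate operator, adapted to the asymptotically Euclidean Laplacian, and by incorporating the inserted weights directly into the iterated commutator estimates. The target estimate is of the form
\[
\nr{\pppg x^{-\d}\,R(z)\,\Theta_1\cdots\Theta_{k-1}\,R(z)\,\pppg x^{-\d}}_{L^2\to L^2}\lesssim 1+\abs z^{d-1-(k-1)-\e'}
\]
for $\d$ large enough (depending on $k$ and on the total decay of the $\Theta_j$'s). Combined with the powers of $z$ already carried by each term of the Leibniz expansion, this yields the desired bound $1+\abs z^{d-1-N-\e}$ on the matrix norm of $(\Ac-z)^{-1-N}$ from $\Hc^\d$ to $\Hc^{-\d}$.

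The main obstacle is precisely this inserted-weight LAP in the low-frequency regime: a naive iteration of the single-resolvent LAP would lose a factor $\abs z^{-1}$ at each application of $R(z)$, coming from the threshold at $0$, and would produce a much weaker estimate. One must exploit the fact that each inserted short-range weight recovers a power of $\abs z$ at the threshold, which is visible only if the weights are built into the Mourre scheme rather than inserted perturbatively between independent resolvent bounds. The small loss $\e$ is a threshold artefact (logarithmic corrections for even $d$, limited regularity of the expansion in $\sqrt{z^2+iaz}$) that has to be absorbed into an arbitrarily small power of $\abs z$.
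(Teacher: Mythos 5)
Your top-level skeleton --- the block formula for $(\Ac-z)^{-1}$, the Leibniz expansion of $(\Ac-z)^{-1-N}$ into iterated resolvents with inserted factors $ia(x)$ or $2z$, and a low-frequency limiting absorption principle proved by scaling plus a Mourre argument in which the short-range insertions are built in as weights --- is indeed the strategy of the paper. But there is a genuine gap at the reduction step: you bound the matrix entries in weighted $L^2\to L^2$ norm, which gives an estimate in $\Lc(L^{2,\d}\times L^{2,\d},\,L^{2,-\d}\times L^{2,-\d})$, not in $\Lc(\Hc^\d,\Hc^{-\d})$. In the energy space the input norm controls only $\pppg x^\d\nabla u_0$ and $\pppg x^\d u_1$, never $\pppg x^\d u_0$ itself, and the output norm puts $\pppg x^{-\d}\nabla$ on the first component. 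Hence the first column of \eqref{res-Ac-N}, namely $R^{(N)}(z)(ia+z)+R^{(N-1)}(z)$ and $R^{(N)}(z)\Ho$, cannot be treated by your core estimate: one has to use the factor $\Ho$ to move derivatives onto $u_0$ (treating the first-order part $W$ separately, as in Proposition \ref{prop-Agg-Ac}), trade surplus weight for a derivative via a generalized Hardy inequality (Lemma \ref{lem-hardy-gen}), prove the iterated-resolvent estimates with derivatives $D^{\nul}$, $D^{\nur}$ inserted on both sides (Theorem \ref{th-low-freq-R}, Proposition \ref{prop-low-freq-R}), exploit the refined weight condition $\d_l,\d_r>n+\tfrac12-\Vc$ coming from the inserted-weight Mourre theory (Corollary \ref{cor-sigma-r-un}), and give a separate argument for $N=0$ where only $\d>\tfrac12$ is available (step 3 of Proposition \ref{prop-res-Agg}); the high-frequency identity $R(z)(ia+z)=z^{-1}(R(z)\Ho-1)$ is useless as $z\to0$. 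None of this appears in your proposal.

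Relatedly, your claimed core estimate $1+\abs z^{\,d-1-(k-1)-\e'}$ for $\pppg x^{-\d}R\,\Theta_1\cdots\Theta_{k-1}R\,\pppg x^{-\d}$ is one power of $\abs z$ stronger than what the scaling/Mourre method yields without side derivatives: the correct exponent (Proposition \ref{prop-low-freq-R}) is $-2(n+1)+S+\Vc+\anul+\anur$ with $S<d$, so powers of $\abs z$ are gained only from inserted decaying factors, scalar factors of $z$, and derivatives composed on either side; for $R^{(N)}(z)$ with plain weights this gives $\abs z^{\,d-N-2-\e}$, not $\abs z^{\,d-N-1-\e}$. The exponent $d-1-N-\e$ of the theorem is recovered precisely because each entry of the energy-space matrix carries an extra gradient, an extra $z$, or the factor $\Ho$; asserting the stronger bare bound hides this and your sketch supplies no mechanism for it. A secondary point: the Neumann-type removal of the damping, $R_w=R+izRaR_w$, is not obviously summable at low frequency (with $\d>\tfrac12$ one only has $\nr{\pppg x^{-\d}R(z)\pppg x^{-\d}}\lesssim\abs z^{-1}$, so $zR(z)a$ is merely $O(1)$); one must either keep a remainder containing $R_w$, which then needs its own limiting absorption principle, or, as the paper does, run the dissipative Mourre argument directly on the damped operator and only later compare with the small-perturbation model $\Ri(z)$ through finitely many resolvent identities.
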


In order to prove Theorems \ref{th-inter-freq}, \ref{th-high-freq} and \ref{th-low-freq} we estimate the resolvent for the corresponding Schr\"odinger operator on $L^2$. We recall from Propositions 3.4 and 3.5 in \cite{boucletr14} that for $z \in \C_+$ we have on $\Sc \times \Sc$
\begin{equation} \label{res-Ac}
(\Ac-z)\inv =
\begin{pmatrix} 
R(z) (ia + z) &   R(z)\\
I +  R(z) (zia + z^2) &  z R(z)
\end{pmatrix}
= 
\begin{pmatrix} 
R(z) (ia + z) &   R(z)\\
R(z) \Ho &  z R(z)
\end{pmatrix},
\end{equation}
where 
\begin{equation} \label{def-R}
R(z) = \big( \Pg - iz a(x) - z^2 \big) \inv.
\end{equation}
The resolvent estimates of $\Ac$ on $\Hc$ will be deduced from estimates for $R(z)$ on $L^2$.\\

The purpose of this paper is not to improve the rate of the local energy decay, which is the same as in \cite{boucletr14}. Even in the self-adjoint setting (see \cite{bonyh12}), which is contained in our result, this is the best decay known for a general long-range perturbation of the Laplacian. We recall that estimates of size $\Oc(t^{-d})$ have been obtained in \cite{guillarmouhs13} ($d$ odd) and \cite{tataru13} ($d = 3$), but the metric is of scattering type in the first case, and it is radial up to short range terms in the second. We also recall that a stronger damping would not improve the estimate. On the contrary, with a strong damping the contribution of low frequencies tends to behave as the solution of a heat equation, for which the local energy decay is weaker (see for instance \cite{marcatin03, nishihara03, narazaki04,hosonoo04}).

The difference with \cite{boucletr14} is that we have an estimate in the (weighted) energy space. The improvement is twofold. The main point is that we get rid of the loss of decay: if $U_0 = (u_0,iu_1)$, then the estimates no longer depend on the sizes of $u_0$ and $u_1$ in the (weighted) Sobolev spaces $H^{2,\d}$ and $H^{1,\d}$, respectively. For this we will use a method adapted from \cite{rauch78,tsutsumi84} to deduce the time decay from the resolvent estimates for the contribution of high frequencies.

The second point is that the estimate no longer depends on the size of $u_0$ in $L^{2,\d}$, as is the case in \cite{boucletr14} or \cite{kopylova10}. This means in particular that we cannot deduce directly the estimates for $(\Ac-z)\inv$ from the estimates of $R(z)$ as an operator on $L^2$. However this is natural since the energy of the wave does not depend on the $L^2$-norm of $u_0$. This question has not been raised in the above mentioned papers for the following two reasons. First, when $u_0$ is supported in a fixed compact of $\R^d$ (as is the case in most of the papers dealing with the local energy decay) then by the Poincar\'e inequality the difference between the norms of $u_0$ in the homogeneous or inhomogeneous Sobolev spaces $\dot H^1$ and $H^1$ is irrelevant.
On the other hand, in the self-adjoint case we can write 
\begin{equation} \label{diag-Ac0}
\Ac_0 := \begin{pmatrix} 0 & I \\ \Ho & 0 \end{pmatrix} = \Phi \inv \Tc_0 \Phi,
\end{equation}
where 
\begin{equation*} \label{def-P}
\Tc_0 =  \begin{pmatrix} \sqP & 0 \\ 0 & - \sqP \end{pmatrix}, \quad 
\Phi = \frac 1 {\sqrt 2} \begin{pmatrix}  \sqP & 1 \\  \sqP & - 1 \end{pmatrix}
\quad \text{and} \quad  
\Phi\inv = \frac 1 {\sqrt 2}  \begin{pmatrix} 1/\sqP & 1/\sqP \\ 1 & -1 \end{pmatrix}  .
\end{equation*}
The operators $\Phi$ and $\Phi\inv$ are isometries in $\Lc \big(\Hc_\Pg , (L^2_w)^2 \big)$ and $\Lc \big( (L^2_w)^2,\Hc_\Pg \big)$ respectively, and $\Tc_0$ defines on operator on $(L^2_w)^2$ with domain $\Dom(\sqP)^2$. Thus the properties of $\Ac_0$ on $\Hc_\Pg$ follow directly from the analogous properties of $\sqP$ on $L^2_w$. In particular the resolvent estimates for $\Ac_0$ follow from the estimates on $(\sqP - z)\inv$ and $(-\sqP-z)\inv$ in $L^2_w$. Back in the energy space, this gives estimates for the norms
\begin{equation} \label{norm-HcPg-delta}
\nr{(u,v)}^2_{\Hc_\Pg^\d} = \nr{\pppg x^\d \sqP u}_{L^2_w}^2 + \nr{\pppg x^\d v}^2_{L^2_w}.
\end{equation}
This is what is implicitely used for instance in \cite{bonyh12}. Of course in the dissipative case we cannot diagonalize the non-selfadoint operator $\Ac$ as in \eqref{diag-Ac0}.

Even if we cannot use this nice reduction to a problem on $L^2$, we could use the norms \eqref{norm-HcPg-delta} for our problem. We have chosen the norms \eqref{norm-Hc-delta} instead. It is easy to see that this gives two equivalent norms when $\d = 0$. This is not so clear for $\d \neq 0$. Since we study the localisation of the energy for large times, we prefer the norm which involves the local operator $\nabla$ rather than the norm defined with the non-local operator $\sqP$. Thus, even if our purpose is to deal with the dissipative case, the estimates in $\Lc(\Hc^{\d},\Hc^{-\d})$ are interesting even in the self-adjoint setting.\\

Moreover, even if this does not play any role in Theorem \ref{th-loc-decay-improved}, we notice that we have improved the weight in the low frequency estimates for $N$ small. Again, this gives a more natural result than in \cite{boucletr14}. We recall that the weight $\pppg x \inv$ is sharp to obtain uniform estimates for the resolvent of the Schr\"odinger operator $(P-z)\inv$ (see \cite{boucletr15}). Thus we will have to use accurately the structure of the wave operator and of the energy space (based on the fact that we estimate the derivatives of the solution and not the $L^2$-norm of $u$ itself) to get a uniform estimate with weight $\pppg x ^{-\d}$ for any $\d > \frac 12$ (see Remark \ref{rem-weight}).\\

In our analysis we will have to use an improved version of the uniform estimates given by the Mourre commutators method. We recall that given a maximal dissipative operator $H$ on some Hilbert space $\Hc$, the idea of the Mourre method is to prove uniform estimates for
\[
\pppg A^{-\d} (H-z)\inv \pppg A^{-\d},
\]
where $\Im(z) > 0$, $\Re(z)$ belongs to some interval of $\R$, $A$ is a (self-adjoint) conjugate operator (in a sense given by Definition \ref{def-mourre}) and $\d > \frac 12$. The main assumption is a (spectrally localized) lower bound on the commutator between the self-adjoint part of $H$ and $A$. The original result (for a self-adjoint operator $H$) has been proved in \cite{mourre81}. The uniform estimates for the powers of the resolvent have been proved in \cite{jensenmp84,jensen85}: under additionnal assumptions on the multiple commutators between $H$ and $A$ we can show uniform estimates for 
\[
\pppg A^{-\d} (H-z)^{-1-N} \pppg A^{-\d}, \qquad \text{where} \quad  \d > N+\frac 12.
\]
Then there have been a lot of improvements in many directions. We refer to \cite{amrein} for a general overview on the subject. 

The case of a dissipative operator $H$ has been studied in \cite{art-mourre,boucletr14,art-mourre-formes}. We proved in particular that we can insert some operators between the resolvents. If the operators $\Phi_1,\dots,\Phi_N$ have good commutation properties with $A$, then we can generalize the estimate above for the operator
\begin{equation} \label{inserted-factors}
\pppg A^{-\d} (H-z)\inv \Phi_1 (H-z)\inv \Phi_2 (H-z)\inv \dots \Phi_N (H-z)\inv \pppg A^{-\d}.
\end{equation}
This was useful since the derivatives of the resolvent $R(z)$ are not given by its powers. For instance we have for the first derivative
\begin{equation} \label{der-R}
R'(z) = R(z) (ia + 2z) R(z).
\end{equation}
When $a = 0$, this is $2z R(z)^2$, but in the general case we have to use the Mourre method with the inserted operator $a(x)$.

Here we follow the same general idea to see that we can weaken the weights on both sides in \eqref{inserted-factors} if some of the inserted operators $\Phi_j$ are themselves of the form $\pppg A^{-\d_j}$. For instance we can prove uniform estimates for an operator of the form
\begin{equation} \label{ex-super-mourre}
\pppg A^{-(\d-1)} (H-z)^{-k} \pppg A \inv (H-z)^{-1-N +k}(z) \pppg A^{-(\d-1)}.
\end{equation}
Here $k \in \Ii 1 N$ and $\d$ is greater than $N+\frac 12$ as before. 

We know that for the resolvent of a Schr\"odinger operator we often use the generator of dilations as the conjugate operator, and then we replace the weights $\pppg A^{-\d}$ by $\pppg x^{-\d}$. Here we are going to use the decay of the absorption index $a$ in \eqref{der-R} to play the role of a weight. As a consequence, weaker weights will be required on the left and on the right, which will be crucial in our analysis.

We remark in \eqref{ex-super-mourre} that if we add one power of $\pppg A \inv$ between the resolvents we can remove one power of $\pppg A \inv$ on both sides. We will generalize this by adding more powers of $\pppg A \inv$ between the resolvents (see Theorem \ref{th-mourre}).\\

This paper is organized as follows. In Section \ref{sec-mourre} we state and prove this new version of the Mourre method with inserted weights in the abstract setting. Then in Section \ref{sec-res-estim-L2} we improve the results of \cite{boucletr14} concerning the estimates of the derivatives of $R(z)$ on $L^2$. Then we deduce in Section \ref{sec-res-estim-energy} the estimates of Theorems \ref{th-inter-freq}, \ref{th-high-freq} and \ref{th-low-freq} concerning the resolvent of $\Ac$ in the energy space $\Hc$. Finally we use these resolvent estimates in Section \ref{sec-loc-decay} to prove Theorem \ref{th-loc-decay-improved}.

\section{Mourre's method with inserted weights} \label{sec-mourre}

\newcommand{\Hco}{\Hc_0}

In this section we show how to insert weights between the resolvents in the estimates given by the Mourre method. We first recall the abstract setting. Even if we will only consider dissipative perturbations in the sense of operators as in \cite{art-mourre,boucletr14} we introduce the more general setting of perturbations in the sense of forms as described in \cite{art-mourre-formes}.\\

Let $\Hco$ be a Hilbert space. We recall that the operator $H$ on $\Hco$ with domain $\Dom(H)$ is said to be dissipative if 
\[
\forall \f \in \Dom(H), \quad \Im \innp{H\f}{\f} \leq 0.
\]
In this case we say that $H$ is maximal dissipative if it has no other dissipative extension than itself. This is equivalent to the fact that $(H-z)$ has a bounded inverse for some (and hence any) $z \in \C_+$.\\

Let $q_0$ be a quadratic form closed, densely defined, symmetric and bounded below on $\Hco$, with domain $\Kc = \Dom(q_0)$. Let $q_\Th$ be another symmetric form on $\Hco$, non-negative and $q_0$-bounded. Let $q = q_0 - i q_\Th$. Let $H_0$ be the self-adjoint operator corresponding to $q_0$ and $H$ be the maximal dissipative operator corresponding to $q$. We denote by $\tilde H : \Kc \to \Kc^*$ the operator which satisfies $q(\f,\p) = \innp{\tilde H \f} \p _{\Kc^*,\Kc}$ for all $\f,\p \in \Kc$. We similarly introduce the operators $\tilde H_0,\Th \in \Lc(\Kc,\Kc^*)$ corresponding to the forms $q_0$ and $q_\Th$, respectively. By the Lax-Milgram Theorem, the operator $(\tilde H -z)$ has a bounded inverse in $\Lc(\Kc^*,\Kc)$ for all $z \in \C_+$. Moreover for $\f \in \Hco \subset \Kc^*$ we have $(H-z)\inv \f = (\tilde H - z)\inv \f$.\\

Now we introduce the conjugate operator for $H$. To simplify the discussion, we consider an operator which is conjugate to $H$ at any order:

\begin{definition} \label{def-mourre}
Let $A$ be a self-adjoint operator on $\Hco$. We say that $A$ is a conjugate operator (in the sense of forms) to $H$ on the interval $J$, at any order, and with bounds $\a \in ]0,1]$, $\b \geq 0$ and $\Upsilon_N \geq 0$ for $N \in \N^*$ if the following conditions are satisfied:
\begin{enumerate} [(i)]
\item \label{item-Kc-invariant}
The form domain $\Kc$ is left invariant by $e^{-itA}$ for all $t \in \R$. We denote by $\Ec$ the domain of the generator of $\restr{e^{-itA}}{\Kc}$.
\item The commutators $B^0 = [\tilde H_0,iA]$ and $B_1 = [\tilde H,iA]$, \emph{a priori} defined as operators in $\Lc(\Ec,\Ec^*)$, extend to operators in $\Lc(\Kc,\Kc^*)$. Then for all $n \in \N^*$ the operator $[B_{n},iA]$ defined (inductively) in $\Lc(\Ec,\Ec^*)$ extends to an operator in $\Lc(\Kc,\Kc^*)$, which we denote by $B_{n+1}$. 
\item For all $N \in \N^*$ we have 
\[
\nr {B_1} \leq \sqrt \a \Upsilon_N \qandq \nr {B_1 + \b \Th} \nr{B^0}  + \b \nr{[\Th,A]} + \sum_{n=2}^{N+1} {\nr{B_n}}  \leq \a \Upsilon_N ,
\]
where all the norms are in $\Lc(\Kc,\Kc^*)$.
\item We have 
\begin{equation} \label{hyp-mourre}
\1 J (H_0) (B^0 + \b \Th) \1 J (H_0) \geq \a \1 J (H_0).
\end{equation}
\end{enumerate}
\end{definition}

Let $\n \in \N$ and $n_0,\dots,n_\n \in \N^*$. Let $j \in \Ii 0 \n$. We consider $\Phi_{j,0} \in \Lc(\Kc,\Hco)$, $\Phi_{j,1},\dots,\Phi_{j,n_j-1} \in \Lc(\Kc,\Kc^*)$ and $\Phi_{j,n_j} \in \Lc(\Hco,\Kc^*)$. We assume (inductively) on $m \in \N$ that the operator 
\[
\ad_{iA}^m (\Phi_{j,0}) := [\ad_{iA}^{m-1} (\Phi_{j,0}),iA]
\]
(with $\ad_{iA}^0 (\Phi_{j,0}) = \Phi_{j,0}$), at least defined as an operator in $\Lc(\Ec,\Ec^*)$, can be extended to an operator in $\Lc(\Kc,\Hco)$. We assume similarly that the commutators $\ad_{iA}^m (\Phi_{j,k})$ for $m \in \N$ and $k \in \Ii 1 {n_j-1}$ extend to operators in $\Lc(\Kc,\Kc^*)$, and finally that the commutators $\ad_{iA}^m (\Phi_{j,n_j})$ for $m \in \N$ extend to operators in $\Lc(\Hco,\Kc^*)$. Then for $k \in \Ii 1 {n_j-1}$ and $N \in \N^*$ we set
\[
\nr{\Phi_{j,k}}_{\Cc_N(A,\Kc,\Kc^*)} = \sum_{m=0}^N \nr{\ad_{iA}^m (\Phi_{j,k})}_{\Lc(\Kc,\Kc^*)}.
\]
We similarly define $\nr{\Phi_{j,0}}_{\Cc_N(A,\Kc,\Hco)}$ and $\nr{\Phi_{j,n_j}}_{\Cc_N(A,\Hco,\Kc^*)}$, and then
\[
\nr{(\Phi_{j,0},\dots,\Phi_{j,n_j})}_{\Cc_N^{n_j}(A)} = \nr{\Phi_{j,0}}_{\Cc_N(A,\Kc,\Hco)} \nr{\Phi_{j,n_j}}_{\Cc_N(A,\Hco,\Kc^*)}\prod_{k=1}^{n_j-1} \nr{\Phi_{j,k}}_{\Cc_N(A,\Kc,\Kc^*)}.
\]
For $z \in \C_+$ we set  
\begin{equation} \label{def-Rcj}
\Rc_j(z) = \Phi_{j,0} (\tilde H-z)\inv  \Phi_{j,1} (\tilde H-z)\inv \dots \Phi_{j,n_j-1} (\tilde H-z)\inv \Phi_{j,n_j} \quad \in \Lc(\Hco).
\end{equation}

In \cite{art-mourre-formes} we have proved uniform estimates for $\Rc_j(z)$. In particular for $I \Subset J$ and $\d > n_j - \frac 12$ there exists $C \geq 0$ (which only depends on the relative bound of $q_\Th$ with respect to $q_0$, $\d$, $J$, $I$ and the constants $\a$, $\b$ and $\Upsilon_N$ which appear in Definition \ref{def-mourre}) such that for
\[
z \in \C_{I,+} := \singl{z \in \C_+ \st \Re(z) \in I}
\]
we have 
\[
\nr{\pppg {A}^{-\d} \Rc_j(z) \pppg A ^{-\d}}_{\Lc(\Hco)} \leq C \nr{(\Phi_{j,0},\dots,\Phi_{j,\n_j})}_{\Cc_{\n_j}^{\n_j}(A)}.
\]
Let $\d_1,\dots,\d_{\n} \in \R$ and for $z \in \C_+$:
\begin{equation} \label{def-Rc}
\Rc(z) = \Rc_0(z) \pppg A^{-\d_1} \Rc_1(z) \dots \pppg A^{-\d_{\n}} \Rc_\n(z)  \quad \in \Lc(\Hco) .
\end{equation}
If $\d_1,\dots,\d_{\n}$ are non-negative we can consider $\Phi_{j-1,n_{j-1}} \pppg A^{-\d_j} \Phi_{j,0} \in \Lc(\Kc,\Kc^*)$ for $j \in \Ii 1 {\n}$ as an inserted factor, and we directly obtain a uniform bound for 
\[
\pppg A^{-\d} \Rc(z) \pppg A^{-\d}, \quad \text{where } \d > n_0 + \dots + n_\n - \frac 12.
\]
Our porpose is to use the inserted weights $\pppg A^{-\d_j}$ to weaken the weights $\pppg A^{-\d}$ on both sides. For this we will use the following lemma:

\begin{lemma} \label{lem-super-mourre} 
Let $\nu \in \N$. Let $\Rc_0,\dots,\Rc_\nu \in \Lc(\Hco)$. Let $Q$ be a self-adjoint operator on $\Hco$ with $Q \geq 1$. Let $P_-,P_+ \in \Lc(\Hco)$ be such that $P_+ + P_- = \Id_{\Hco}$. Assume that $Q$ commutes with $P_-$ and $P_+$. Let $n_0,\dots,n_\nu \in \N^*$ and $\g_0,\dots,\g_\nu \in \R_+$. Assume that for $j \in \Ii 0 \nu$, $\s > n_j - \frac 12$, $\s_l \geq 0$ and $\s_r \geq 0$ there exists $c \geq 0$ such that
\begin{enumerate}[(i)]
\item $\nr{Q^{-\s} \Rc_j Q^{-\s}} \leq c \g_j$,
\item $\nr{Q^{\s-n_j} P_- \Rc_j Q^{-\s}} \leq c \g_j$,
\item $\nr{Q^{-\s} \Rc_j P_+ Q^{\s - n_j}} \leq c \g_j$,
\item $\nr{Q^{\s_l} P_- \Rc_j P_+ Q^{\s_r}} \leq c \g_j$.
\end{enumerate}
Let $\d_0,\dots,\d_{\nu+1},\d_-,\d_+ \in \R$ and $\d_l,\d_r \in \R_+$ be such that 
\begin{equation} \label{hyp-d-+}
\forall k \in \Ii 0 {\nu}, \quad \sum_{j=0}^k \d_j > \sum_{j=0}^{k} n_j - \frac 12 \quad \text{and} \quad  \sum_{j=k+1}^{\nu+1} \d_j > \sum_{j=k}^{\nu} n_j - \frac 12,
\end{equation}
and moreover:
\[
\d_+ \geq \sum_{j=0}^{\nu} n_j  - \sum_{j=0}^{\nu} \d_j\quad \text{and} \quad \d_-   \geq \sum_{j=0}^{\nu} n_j -  \sum_{j=1}^{\nu+1} \d_j.
\]
Let 
\[
\Rc = \Rc_0 Q^{-\d_1} \Rc_1 Q^{-\d_2} \dots \Rc_{\nu-1} Q^{-\d_{\nu}} \Rc_\nu .
\]
Then there exists $C \geq 0$ which only depends on $\nu$, $\d_0,\dots,\d_{\nu+1},\d_-,\d_+,\d_l,\d_r$ and the constants $c$ which appear in (i)-(iv) such that 
\begin{enumerate}[(I)]
\item $\nr{Q^{-\d_0} \Rc Q^{-\d_{\nu+1}} } \leq C \g_0\dots\g_\nu$.
\item $\nr{Q^{- \d_-} P_- \Rc Q^{-\d_{\nu+1}}} \leq C \g_0\dots\g_\nu$,
\item $\nr{Q^{-\d_0} \Rc P_+ Q^{-\d_+}} \leq C \g_0\dots\g_\nu$,
\item $\nr{Q^{\d_l} P_- \Rc P_+ Q^{\d_r}} \leq C \g_0\dots\g_\nu$.
\end{enumerate}
\end{lemma}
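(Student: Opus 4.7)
The plan is to prove the four conclusions (I)--(IV) simultaneously by induction on $\nu$. The central algebraic step is that, writing $\hat\Rc := \Rc_1 Q^{-\d_2}\cdots\Rc_\nu$ so that $\Rc = \Rc_0 Q^{-\d_1}\hat\Rc$, the insertion of $\Id = P_- + P_+$ between $Q^{-\d_1}$ and $\hat\Rc$, together with the commutation of $Q$ with $P_\pm$, yields
\[
\Rc = \Rc_0 \,P_+ Q^{-\d_1} \hat\Rc + \Rc_0 \,Q^{-\d_1} P_- \hat\Rc.
\]
In each summand the first factor involves $\Rc_0$ with some configuration of projections attached, and the second factor is a similar product with one fewer resolvent.

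For the base case $\nu = 0$ we have $\Rc = \Rc_0$, and each of (I)--(IV) reduces directly to the matching hypothesis (i)--(iv) upon choosing a single intermediate exponent $\s$ (or $\s_l, \s_r$) compatible with both outer weights. For $\nu = 0$ the conditions (\ref{hyp-d-+}) read $\d_0, \d_1 > n_0 - 1/2$, and these together with the explicit lower bounds on $\d_\pm$ furnish exactly the compatibility needed.

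For the inductive step, each of the two summands above is a product of a $\Rc_0$-factor (carrying the outer $P_\pm$ from the statement being proved, plus an inserted $P_+$ or $P_-$) and a $\hat\Rc$-factor (carrying the complementary outer projection and a shifted left exponent that absorbs the residual weight from $Q^{-\d_1}$). The $\Rc_0$-factor is bounded by whichever of (i)--(iv) matches its configuration of projections, and the $\hat\Rc$-factor is bounded by the inductive hypothesis in the appropriate one of (I)--(IV). For instance, in proving (I), the $P_+$-summand factors as $(Q^{-\d_0}\Rc_0 P_+ Q^{\d_0 - n_0})(Q^{n_0-\d_0-\d_1}\hat\Rc Q^{-\d_{\nu+1}})$, the first factor being bounded by (iii) with $\s = \d_0$ and the second by the inductive (I) with shifted left weight $\d_0 + \d_1 - n_0$; the $P_-$-summand is bounded by (i) applied to $\Rc_0$ combined with inductive (II) applied to $P_-\hat\Rc$. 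In (II), (III), (IV) the same scheme applies, with (ii) or (iv) in place of (i) or (iii) and with inductive (III) or (IV) in place of (I) or (II), as appropriate.

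The principal difficulty is the bookkeeping of exponents. At each decomposition a free intermediate parameter $\s$ (or $\s_l, \s_r$) must simultaneously satisfy the strict lower bound $\s > n_j - 1/2$ imposed by (i)--(iii), a lower bound ensuring the residual outer exponent on the $\Rc_0$-side is nonpositive (which depends on $\d_\pm$ for (II)--(IV)), and an upper bound ensuring that, after absorbing $Q^{-\d_1}$, the residual middle exponent between the two factors is nonpositive. In each of the eight sub-cases (four conclusions, each split into two summands) the existence of such $\s$ reduces to exactly one instance of (\ref{hyp-d-+}) (typically the $k = 0$ inequality $\sum_{j=1}^{\nu+1}\d_j > \sum_{j=0}^\nu n_j - 1/2$) together with the relevant hypothesis on $\d_-, \d_+, \d_l, \d_r$, while the partial-sum conditions (\ref{hyp-d-+}) needed for the inductive step applied to $\hat\Rc$ are precisely the tail conditions for $\Rc$ shifted by one index and are thus automatic. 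The underlying idea is captured entirely by the single decomposition above; the main obstacle is carrying out this case analysis systematically across all eight configurations.
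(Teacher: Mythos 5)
Your proof is correct and follows essentially the same route as the paper: a simultaneous induction on $\nu$ over the four estimates, splitting with $\Id = P_- + P_+$ at an inserted weight and pairing one hypothesis (i)--(iv) for a single factor with one inductive conclusion (I)--(IV) for the remaining product, the exponent bookkeeping resting on \eqref{hyp-d-+} and the constraints on $\d_\pm,\d_l,\d_r$. The only difference is that you peel off the first factor $\Rc_0$ while the paper peels off the last factor $\Rc_\nu$, which is an immaterial mirror-image variant given the left–right symmetry of the hypotheses.
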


Notice that the assumptions on the exponents $\d_0,\dots,\d_{\nu+1},\d_-,\d_+$ do not imply that they are all non negative. This abstract lemma will be applied in the context of the Mourre method as follows: $Q$ is the weight $\pppg A$, $P_+$ and $P_-$ are the spectral projections $\1 {\R_+}(A)$ and $\1 {\R_-}(A)$, respectively, and $\Rc_{j}$ is as in \eqref{def-Rcj}: the product of $n_j$ resolvents, maybe with inserted factors. The assumptions (i)-(iv) will be consequences of the dissipative Mourre theory (see Theorem 5.16 in \cite{boucletr14}), and we will use estimate (I) with inserted weights.

\begin{proof}[Proof of Lemma \ref{lem-super-mourre}]
The result is given by the assumptions if $\nu = 0$. We prove the general case by induction on $\nu$. Let 
\[
\Rc^* = \Rc_0 Q^{-\d_1} \Rc_1 Q^{-\d_2} \dots \Rc_{\nu-2} Q^{-\d_{\nu-1}} \Rc_{\nu-1},
\]
so that $\Rc= \Rc^* Q^{-\d_{\nu}} \Rc_\nu$. We set 
\[
\d_* = \sum_{j=0}^{\nu-1} \d_j \qandq n_* = \sum_{j=0}^{\nu-1} n_j.
\]
We have 
\[
Q^{-\d_0} \Rc Q^{-\d_{\nu+1}} = Q^{-\d_0} \Rc^* P_+  Q^{-\d_{\nu}} \Rc_\nu Q^{-\d_{\nu+1}} + Q^{-\d_0} \Rc^*  Q^{-\d_{\nu}} P_- \Rc_\nu Q^{-\d_{\nu+1}}
\]
and hence 
\begin{align*}
\nr{Q^{-\d_0} \Rc Q^{-\d_{\nu+1}}}
& \leq \nr{Q^{-\d_0} \Rc^* P_+ Q^{\d_* - n_*}}   \nr{ Q^{n_* - \d_* - \d_{\nu}} \Rc_\nu Q^{-\d_{\nu+1}}} \\
& \quad + \nr{Q^{-\d_0} \Rc^*  Q^{n_\nu - \d_{\nu+1} -\d_{\nu}} } \nr{Q^{\d_{\nu+1} - n_\nu} P_- \Rc_\nu Q^{-\d_{\nu+1}}} \\
& \lesssim \g_0 \dots \g_{\nu-1} \times \g_\nu.
\end{align*}
We have used (III)$_{\nu-1}$ with $\d_+ = n_* - \d_*$, (i) with $\s = \min(\d_* + \d_{\nu} - n_* , \d_{\nu+1}) > n_\nu - \frac 12$, (I)$_{\nu-1}$ with $\d_{\nu}$ replaced by $\d_{\nu} + \d_{\nu+1} - n_\nu$ and finally (ii) with $\s = \d_{\nu+1}$. This proves (I)$_\nu$. We follow the same idea for the other estimates. We have 
\begin{align*}
\nr{Q^{- \d_-} P_- \Rc Q^{-\d_{\nu+1}}}
& \leq \nr{Q^{-\d_-} P_- \Rc^* Q^{n_\nu - \d_{\nu+1} - \d_{\nu}}} \nr{Q^{\d_{\nu+1}-n_\nu} P_- \Rc_\nu Q^{-\d_{\nu+1}}} \\
& \quad + \nr{Q^{-\d_-} P_- \Rc^* P_+ Q^{n_\nu  - \d_{\nu}}} \nr{ Q^{-n_\nu} \Rc_\nu Q^{-\d_{\nu+1}}}\\
& \lesssim  \g_0 \dots \g_{\nu-1} \times \g_\nu.
\end{align*}
For the third estimate we write:
\begin{align*}
\nr{Q^{- \d_0}  \Rc P_+ Q^{-\d_+}}
& \leq \nr{Q^{-\d_0} \Rc^* P_+ Q^{\d_* - n_*} } \nr{Q^{n_* - \d_* - \d_{\nu}} \Rc_\nu P_+ Q^{-\d_+}} \\
& \quad + \nr{Q^{-\d_0} \Rc^* Q^{n_\nu - \d_{\nu+1} - \d_{\nu}}} \nr{ Q^{\d_{\nu+1}-n_\nu} P_- \Rc_\nu P_+ Q^{-\d_{\nu+1}}}\\
& \lesssim  \g_0 \dots \g_{\nu-1} \times \g_\nu.
\end{align*}
And finally for $\d$ large enough:
\begin{align*}
\nr{Q^{\d_l} P_- \Rc P_+ Q^{\d_r}}
& \leq \nr{Q^{\d_l} P_- \Rc^* Q^{- \d}} \nr{ Q^{\d} P_- \Rc_\nu P_+ Q^{\d_r}}\\
& \quad +  \nr{Q^{\d_l} P_- \Rc^* P_+ Q^{\d}} \nr{ Q^{-\d} \Rc_\nu P_+ Q^{\d_r}}\\
& \lesssim  \g_0 \dots \g_{\nu-1} \times \g_\nu.
\end{align*}
This concludes the proof of the lemma.
\end{proof}

Starting from the results of \cite{art-mourre-formes} we can then deduce the following improved version of the dissipative Mourre method:

\begin{theorem} \label{th-mourre}
Let $H$ be a maximal dissipative operator on $\Hco$ as described at the beginning of the section. Let $A$ be a conjugate operator in the sense of Definition \ref{def-mourre}. Let $\Phi_{j,k}$ for $j \in \Ii 0 \n$ and $k \in \Ii 0 {n_j}$ be as above. Let $I$ be a compact subset of $\mathring J$. Let $\d_0,\dots,\d_{\nu+1},\d_-,\d_+ \in \R$ and $\d_l,\d_r \in \R_+$ be as in Lemma \ref{lem-super-mourre}. Let $\Rc(z)$ be as in \eqref{def-Rc}. Then there exists $C \geq 0$ such that for all $z \in \C_{I,+}$ we have 
\[
\nr{\pppg A^{-\d_0} \Rc(z) \pppg A^{-\d_{\n+1}}} _{\Lc(\Hco)} \leq C \prod_{j=0}^\n \nr{(\Phi_{j,0},\dots,\Phi_{j,n_j})}_{\Cc_{n_j}^{n_j}}, 
\]
\[
\nr{\pppg A^{-\d_-} \1 {\R_-}(A) \Rc(z) \pppg A^{-\d_{\n+1}}} _{\Lc(\Hco)} \leq C \prod_{j=0}^\n \nr{(\Phi_{j,0},\dots,\Phi_{j,n_j})}_{\Cc_{n_j}^{n_j}}, 
\]
\[
\nr{\pppg A^{-\d_0} \Rc(z) \1{\R_+}(A) \pppg A^{-\d_+}} _{\Lc(\Hco)} \leq C \prod_{j=0}^\n \nr{(\Phi_{j,0},\dots,\Phi_{j,n_j})}_{\Cc_{n_j}^{n_j}}, 
\]
and
\[
\nr{\pppg A^{\d_l} \1 {\R_-}(A)\Rc(z)\1 {\R_+}(A) \pppg A^{\d_r}} _{\Lc(\Hco)} \leq C \prod_{j=0}^\n \nr{(\Phi_{j,0},\dots,\Phi_{j,n_j})}_{\Cc_{n_j}^{n_j}}.
\]
\end{theorem}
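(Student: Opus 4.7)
The plan is to reduce the theorem to Lemma~\ref{lem-super-mourre}, applied with $Q = \pppg A \geq 1$ and the spectral projections $P_- = \1_{\R_-}(A)$, $P_+ = \1_{\R_+}(A)$. These projections commute with $Q$, satisfy $P_- + P_+ = \Id_{\Hco}$, and the four conclusions (I)--(IV) of the lemma match exactly the four estimates of the theorem, with the weights specified via $\d_0,\dots,\d_{\n+1},\d_-,\d_+,\d_l,\d_r$ and with $\g_j := \nr{(\Phi_{j,0},\dots,\Phi_{j,n_j})}_{\Cc_{n_j}^{n_j}}$ playing the role of the factors appearing in (i)--(iv).

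The core step is therefore to verify the four hypotheses of Lemma~\ref{lem-super-mourre} for each single-factor operator $\Rc_j(z)$ from~\eqref{def-Rcj}, uniformly in $z \in \C_{I,+}$. Hypothesis (i) is the symmetric weighted estimate
\[
\nr{\pppg A^{-\s} \Rc_j(z) \pppg A^{-\s}}_{\Lc(\Hco)} \leq c \g_j \qquad (\s > n_j - \tfrac 12),
\]
which is precisely the main output of \cite{art-mourre-formes} applied to $\Rc_j(z)$. The asymmetric estimates (ii), (iii) and (iv) --- in which one or both outer weights $\pppg A^{-\s}$ are replaced by a projection $P_\pm$ composed with a weight of opposite sign --- follow from the same Mourre machinery, as already recorded in Theorem~5.16 of \cite{boucletr14} for the operator case and straightforwardly adapted to the form setting of \cite{art-mourre-formes}. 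The mechanism is that on $\Ran(P_-)$ (respectively $\Ran(P_+)$) the operator $-A$ (resp.\ $A$) is non-negative, so the lower bound~\eqref{hyp-mourre} combined with the iteration on the order of the commutators allows one to trade a weight $\pppg A^{-\s}$ for a weight $\pppg A^{\s - n_j}$ whenever it is multiplied by $P_-$ or $P_+$ on the appropriate side. The assumption $I \Subset \mathring J$ enters precisely here, since one inserts spectral cut-offs of $H_0$ supported in $J$ on both sides of $\Rc_j(z)$ and treats the remainder by the standard functional-calculus trick (the cut-off part contributes a negligible $\1_{J^c}(H_0)(H-z)^{-k}$-type term because of the gap between the real spectral parameter and $J^c$).

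Granted (i)--(iv) for each $\Rc_j(z)$, Lemma~\ref{lem-super-mourre} applies verbatim to the product $\Rc(z)$ of~\eqref{def-Rc}, and its four conclusions (I)--(IV) give the four estimates of the theorem, with constants that are products of the constants $c$ appearing in (i)--(iv) (hence uniform in $z \in \C_{I,+}$). The main obstacle is the bookkeeping in the second step: one must confirm that the asymmetric estimates (ii)--(iv) are indeed available from the existing dissipative Mourre theory with the prescribed commutator norms, and with no extra dependence on $\Phi_{j,k}$ beyond its $\Cc_{n_j}^{n_j}$-norm. This amounts to a rereading of the proofs in \cite{art-mourre-formes}: the usual iteration only uses commutator expansions up to order $n_j$ together with the decomposition $\Id = P_- + P_+$, which is exactly what the hypotheses of Lemma~\ref{lem-super-mourre} encode. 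Once this verification is carried out, the theorem is an immediate algebraic consequence of Lemma~\ref{lem-super-mourre}.
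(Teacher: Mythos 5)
Your proposal is correct and follows essentially the same route as the paper: the theorem is obtained by applying Lemma~\ref{lem-super-mourre} with $Q=\pppg A$, $P_\pm=\1_{\R_\pm}(A)$ and $\Rc_j=\Rc_j(z)$, the hypotheses (i)--(iv) being supplied uniformly in $z\in\C_{I,+}$ by the dissipative Mourre theory of \cite{art-mourre-formes} (Theorem 5.16 in \cite{boucletr14}), exactly as the paper indicates in the discussion preceding the lemma. Your extra remarks on the mechanism behind the asymmetric estimates are consistent with that cited theory and do not change the argument.
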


We will use this result in the proof of Proposition \ref{prop-estim-res-amort}. We recall that when $H$ is (a perturbation of) the free Laplacian on $L^2(\R^d)$ we usually consider (a perturbation of) the generator of dilations as the conjugate operator:
\[
 A = - \frac i2 ( x \cdot \nabla + \nabla \cdot x) = - i \, (x \cdot \nabla) - \frac {id}2 .
\]
We record in the following proposition the properties of $A$ we are going to use in this paper:
\begin{proposition} \label{prop-A}
\begin{enumerate}[(i)]
\item For $\th \in \R$, $u \in \Sc$ and $x \in \R^d$ we have
\begin{equation*} 
 (e^{i\th A} u) (x) = e^{\frac {d\th}2} u ( e^\th x).
\end{equation*}
\item For $j \in \Ii 1 d$ and $\g \in C^\infty(\R^d)$ we have on $\Sc$:
\begin{equation*} 
 [\partial_j , i A] = \partial_j \quad \text{and} \quad [\g,iA] = -(x \cdot \nabla) \g.
\end{equation*}
 \item For $p \in [1,+\infty]$, $\th \in \R$ and $u \in \Sc$ we have
\[
 \nr{e^{i\th A} u}_{L^p} = e^{\th \left( \frac d 2 - \frac d p\right)} \nr u_{L^p}.
\]
\end{enumerate}
\end{proposition}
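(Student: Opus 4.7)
The three statements are all standard facts about the generator of dilations, and each follows from a direct computation. The overall plan is to prove (i) first by an ODE/uniqueness argument, then to derive (ii) from the explicit form $iA = x\cdot\nabla + d/2$, and finally to obtain (iii) from (i) by a change of variables.

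For (i), I would set $v(\theta,x) := e^{d\theta/2} u(e^\theta x)$ for $u \in \Sc$ and check that $v$ solves the abstract Cauchy problem $\partial_\theta v = iAv$, $v(0,\cdot) = u$. The initial condition is obvious, and a direct differentiation gives
\[
\partial_\theta v(\theta,x) = \tfrac{d}{2} e^{d\theta/2} u(e^\theta x) + e^{d\theta/2} e^\theta x \cdot (\nabla u)(e^\theta x),
\]
while $iA v(\theta,x) = (x\cdot \nabla)v(\theta,x) + \tfrac{d}{2} v(\theta,x)$ yields exactly the same expression. Since $A$ is essentially self-adjoint on $\Sc$, the unitary group $e^{i\theta A}$ is determined by this ODE on $\Sc$, so $(e^{i\theta A}u)(x) = v(\theta,x)$.

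For (ii), I would just write $iA = x\cdot \nabla + d/2$ on $\Sc$ and compute. For the first identity, $[\partial_j, x\cdot \nabla] = \sum_k [\partial_j, x_k \partial_k] = \sum_k \delta_{jk} \partial_k = \partial_j$, and the scalar $d/2$ contributes nothing. For the second, for $\g \in C^\infty$ acting as a multiplication operator, $[\g, x\cdot\nabla]\varphi = \g(x\cdot\nabla)\varphi - (x\cdot\nabla)(\g \varphi) = -(x\cdot\nabla\g)\varphi$, which is the desired formula.

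For (iii), the change of variables $y = e^\theta x$ (so $dy = e^{d\theta} dx$) in
\[
\nr{e^{i\theta A} u}_{L^p}^p = \int_{\R^d} e^{pd\theta/2} \abs{u(e^\theta x)}^p \, dx = e^{\theta(pd/2 - d)} \int_{\R^d} \abs{u(y)}^p \, dy
\]
gives the claim after taking the $p$-th root; the case $p=\infty$ is immediate from (i). There is no real obstacle in this proof — the only thing to be careful about is to invoke a uniqueness statement (Stone's theorem or density of $\Sc$ in $\Dom(A)$) when passing from the ODE check to the identification of $e^{i\theta A}$.
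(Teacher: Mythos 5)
Your proposal is correct: the paper states Proposition \ref{prop-A} without proof, as a record of standard properties of the generator of dilations, and your computations (the ODE check plus uniqueness for the unitary group in (i), the commutator computations from $iA = x\cdot\nabla + \tfrac d2$ in (ii), and the change of variables $y = e^\th x$ in (iii)) are exactly the standard arguments that are being left implicit. The only point worth keeping explicit, as you do, is the identification step in (i) — that $\Sc$ is invariant under the dilation group and is a core for $A$, so the explicit formula indeed gives $e^{i\th A}$ — but this is classical and your sketch of it is adequate.
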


\section{Uniform resolvent estimates in $L^2$.} \label{sec-res-estim-L2}

In this section we prove on $L^2$ the uniform estimates for the derivatives of the resolvent $R(z)$ defined by \eqref{def-R}. This will be used in the next section to obtain estimates on $(\Ac - z)^{-N-1} \in \Lc(\Hc)$.
We first recall from \cite[Proposition 5.9]{boucletr14} that the derivatives of $R(z)$ are not the powers thereof:

\begin{proposition} \label{prop-der-R2}
Let $N \in \N$. Then the derivative $R^{(N)}(z)$ is a linear combination of terms of the form 
\begin{equation} \label{terme-dec-RN}
z^{\o} R(z) a(x)^{\nu_1} R(z)  a(x)^{\nu_2} \dots  a(x)^{\nu_n} R(z),
\end{equation}
where $n \in \Ii 0 N$ (there are $n+1$ factors $R(z)$), $\o \in \N$ and $\nu_1,\dots,\nu_n \in \{ 0,1\}$ are such that 
\begin{equation} \label{eq-dec-RN}
N = 2n - \o - \Vc, \quad \text{where } \Vc = \nu_1 + \dots + \nu_n.
\end{equation}
\end{proposition}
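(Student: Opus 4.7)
The plan is to proceed by induction on $N$. The base case $N = 0$ is immediate: the single term $R(z)$ is of the required form with $n = 0$, $\omega = 0$, $\Vc = 0$, and the identity $2n - \omega - \Vc = 0$ holds trivially.

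For the inductive step, the key input is the identity $R'(z) = R(z)\bigl(ia(x) + 2z\bigr) R(z)$, obtained by differentiating $\bigl(P - iz a(x) - z^2\bigr) R(z) = I$. I would then apply the Leibniz rule to a generic term
\[
T = z^\omega R(z) a(x)^{\nu_1} R(z) \dots a(x)^{\nu_n} R(z)
\]
appearing in $R^{(N)}(z)$, producing three kinds of contributions. First, $\partial_z z^\omega = \omega z^{\omega-1}$ multiplies the remaining string unchanged, yielding the new parameters $(n, \omega-1, \Vc)$. Second, differentiating any one of the $n+1$ resolvents and extracting the $ia(x)$ part of the factor $\bigl(ia(x) + 2z\bigr)$ inserts a new resolvent together with a new $a(x)$, producing $(n+1, \omega, \Vc + 1)$. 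Third, extracting instead the $2z$ part inserts a new resolvent and raises the power of $z$ by one, yielding $(n+1, \omega+1, \Vc)$; the newly created slot between the two inserted resolvents carries no factor of $a$, so the corresponding $\nu$ equals $0$.

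In each of these three cases a direct check shows that the new parameters $(n', \omega', \Vc')$ still satisfy $2n' - \omega' - \Vc' = N+1$, and moreover $n' \leq N+1$. Collecting the numerical constants ($\omega$, $i$, $2$) and summing over the $n+1$ choices of which resolvent to differentiate, $R^{(N+1)}(z)$ is again a linear combination of terms of the stated form.

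There is no genuine obstacle here beyond careful bookkeeping, since the statement is essentially a combinatorial identity encoding the product rule applied to $R(z)$. The only subtle point to verify is that the exponents $\nu_i$ remain in $\{0,1\}$, which is automatic: each differentiation inserts at most one new factor $a(x)$, and it always lands in a freshly created empty slot between two consecutive resolvents, never adjacent to a preexisting $a(x)^{\nu_i}$.
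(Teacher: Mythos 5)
Your induction is correct: the base case, the use of $R'(z)=R(z)\bigl(ia+2z\bigr)R(z)$ (which is exactly \eqref{der-R}), and the bookkeeping $2n'-\omega'-\Vc'=N+1$ with $n'\leq N+1$ and $\nu_i\in\{0,1\}$ all check out. The paper only recalls this statement from Proposition 5.9 of \cite{boucletr14}, and the argument there is this same differentiation-and-induction scheme, so your proof matches the intended one.
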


\subsection{Intermediate and high frequency estimates}

For intermediate and high frequencies we will use directly the estimates given in \cite{boucletr14}. We only have to check that we can add derivatives on both sides of the resolvent:

\begin{proposition} \label{prop-high-freq}
Let $N \in \N$ and $\d > N + \frac 12$. Let $\nul, \nur \in \N^d$ be such that $\abs \nul \leq 1$ and $\abs \nur \leq 1$. Let $\g \in ]0,1]$.
\begin{enumerate}[(i)]
\item There exists $C \geq 0$ such that for all $z \in \C_+$ with $\g \leq \abs z \leq \g\inv$ we have 
\[
\nr{\pppg x ^{-\d} D^{\nul} R^{(N)}(z) D^{\nur} \pppg x^{-\d}}_{\Lc(L^2)} \leq C.
\]
\item Assume that the damping condition \eqref{hyp-damping} is satisfied. Then there exists $C \geq 0$ such that for all $z \in \C_+$ with $\abs z \geq \g$ we have 
\[
\nr{\pppg x ^{-\d} D^{\nul} R^{(N)}(z) D^{\nur} \pppg x^{-\d}}_{\Lc(L^2)} \leq C \abs z^{\abs \nul + \abs \nur - 1}.
\]
\end{enumerate}
\end{proposition}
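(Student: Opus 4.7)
My approach is to use the decomposition of $R^{(N)}(z)$ from Proposition~\ref{prop-der-R2} to reduce to products of resolvents, and then introduce the derivatives $D^{\nu_l}$ and $D^{\nu_r}$ through a weighted elliptic estimate. By Proposition~\ref{prop-der-R2}, $R^{(N)}(z)$ is a linear combination of terms $z^\omega R(z) S_\nu(z)$ with $S_\nu(z) = a(x)^{\nu_1} R(z) \cdots a(x)^{\nu_n} R(z)$, so that $R(z) S_\nu(z)$ is a product of $n+1$ resolvents with bounded interpolating factors. Since $\delta > N + \frac 12 \geq n + \frac 12$, the weighted resolvent bounds of \cite{boucletr14} give
\[
\nr{\pppg x^{-\delta} R(z) S_\nu(z) \pppg x^{-\delta}}_{\Lc(L^2)} = \Oc(1)
\]
in the intermediate range, and $\Oc(|z|^{-(n+1)})$ in the high-frequency range. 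The final bound will then follow by summing over all terms with the $z^\omega$ factor and using the identity $N = 2n - \omega - \Vc$.

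\medskip

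To introduce the derivatives, I will use a weighted energy estimate. Writing $u = R(z) h$ so that $\Pg u = (iza + z^2) u + h$, and testing against $\pppg x^{-2\delta} u$, the ellipticity of $G$ in $\Pg = -\divg G \nabla + W$ together with Cauchy--Schwarz and absorption of the boundary terms coming from $\nabla \pppg x^{-2\delta} = \Oc(\pppg x^{-2\delta-1})$ will yield
\[
\nr{\pppg x^{-\delta} \nabla u}_{L^2}^2 \lesssim (1 + |z|^2) \nr{\pppg x^{-\delta} u}_{L^2}^2 + \abs{\innp{h}{\pppg x^{-2\delta} u}}.
\]
Applied with $h = S_\nu(z) \pppg x^{-\delta} g$, this handles the case $|\nu_l| = 1$, $\nu_r = 0$: both $\nr{\pppg x^{-\delta} u}$ and $\nr{\pppg x^{-\delta} h}$ are controlled by the no-derivative estimates already in \cite{boucletr14}. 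The case $\nu_l = 0$, $|\nu_r| = 1$ then reduces to the previous one by duality: $D_j$ is self-adjoint, $R(z)^* = R(-\bar z)$, and the sets $\{\g \leq |z| \leq \g\inv\}$ and $\{|z| \geq \g\}$ are invariant under $z \mapsto -\bar z$.

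\medskip

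For the remaining case $|\nu_l| = |\nu_r| = 1$, I will apply the same energy estimate once more to $u = R(z) S_\nu(z) D_k \pppg x^{-\delta} g$: the norm $\nr{\pppg x^{-\delta} u}$ is controlled by the just-established $(\nu_l, \nu_r) = (0, e_k)$ bound, and the pairing $\abs{\innp{h}{\pppg x^{-2\delta} u}}$ with $h = S_\nu(z) D_k \pppg x^{-\delta} g$ is controlled by integrating $D_k$ by parts to the right factor and invoking the dual $(e_k, 0)$ estimate. The main technical obstacle is the high-frequency bookkeeping: I need to check that the powers of $|z|$ assembled from the $z^\omega$ factor, the weight-only bound $|z|^{-(n+1)}$ for $R(z) S_\nu(z)$, and the additional $|z|^{|\nu_l| + |\nu_r|}$ produced by the energy estimate combine, through the identity $N = 2n - \omega - \Vc$, to the claimed growth $\Oc(|z|^{|\nu_l| + |\nu_r| - 1})$. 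This is automatic for intermediate frequencies since $|z|$ is bounded above and below.
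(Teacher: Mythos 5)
Your proof is correct in outline, and it takes a genuinely different route from the paper. The paper handles the derivatives by inserting spectral cutoffs $\h_z(\Pg)=\h(\Pg/\abs z^2)$: on the range of $\h_z(\Pg)$ a derivative costs $\abs z^{\anul}$ (functional and pseudodifferential calculus commuted with the weights), and one multiplies directly the no-derivative bound $\nr{\pppg x^{-\d}R^{(N)}(z)\pppg x^{-\d}}\lesssim \abs z^{-1}$ of Theorem 1.5 in \cite{boucletr14}; on the range of $(1-\h_z)(\Pg)$ it exploits the ellipticity of $\Pg-z^2$ through the resolvent identity with $R_0(z)=(\Pg-z^2)^{-1}$, which yields the gain $\abs z^{\abs\nu-2}$, together with uniform bounds on the auxiliary terms $\tilde T(z)$. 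You instead trade the derivatives by a weighted quadratic-form (energy) estimate for $u=R(z)h$ plus duality, reducing everything to weighted bounds for products of resolvents with inserted factors $a$; your exponent count via $N=2n-\o-\Vc$ and $n\leq N$ does close ($\o-n\leq 0$, resp.\ $\o-n+1\leq 1$). Both proofs ultimately rest on the same input, namely the dissipative Mourre bounds with inserted factors of \cite{boucletr14} (your per-product bound $\Oc(\abs z^{-(n+1)})$ is not literally Theorem 1.5 there, but follows from its proof, an appeal the paper itself also makes). What each buys: your argument avoids pseudodifferential/functional calculus with weights and is more elementary; the paper's avoids duality and any induction over the number of resolvent factors.

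Two points in your version need attention. First, the identity $R(z)^*=R(-\bar z)$ on $L^2$ is false when $\Pg=-\D_g$ with $W=\sum_j b_jD_j\neq 0$, since $W$ is not symmetric on the flat $L^2$; note that Proposition \ref{prop-high-freq}, unlike Theorem \ref{th-low-freq-R}, is stated without the restriction ``$\nur=0$ or $b_1=\dots=b_d=0$''. The repair is routine: since $\Pg$ is self-adjoint on $L^2_w$, one has $R(z)^*=wR(-\bar z)w^{-1}$ on $L^2$ (equivalently, take adjoints in $L^2_w$, where $D_j^{*}=D_j+w^{-1}(D_jw)$); the factor $w$ commutes with $\pppg x^{\pm\d}$, is bounded above and below, and $[D_j,w^{\pm 1}]$ is a bounded compactly supported function, so the correction terms are controlled by the already-established cases with one derivative fewer and do not worsen the powers of $\abs z$. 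Second, your case $\anul=\anur=1$ uses one-derivative bounds for $S_\nu$ itself, i.e.\ for a product of $n$ resolvents, not only for the full term $R(z)S_\nu(z)$; so the energy-estimate step should be stated and proved for all such products (induction on the number of resolvent factors, which your weight $\d>N+\frac12$ permits), and the case $n=0$ handled separately by commuting $D_k$ with $\pppg x^{-2\d}$ and absorbing $\nr{\pppg x^{-\d}\nabla u}$ by Cauchy--Schwarz. With these adjustments the argument is complete.
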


\begin{proof}
We assume that \eqref{hyp-damping} holds and prove the second statement. The proof of the first statement is analogous. Let $\h \in C_0^\infty(\R,[0,1])$ be equal to 1 on $[-2,2]$. For $z \in \C_+$ we set $\h_z : \t \mapsto \h \big( \t / {\abs z^2} \big)$. By pseudo-differential and functional calculus, the operators
\[
D^\nul (1+ \Pg)^{-\frac {\abs \nul}2}, \quad 
\left(\frac {1+ \Pg}{\abs z^2}\right)^{\frac {\abs \nul}2} \left(1+ \frac {\Pg}{\abs z^2}\right)^{-\frac {\abs \nul}2} \qandq  \left(1+ \frac {\Pg}{\abs z^2}\right)^{\frac {\abs \nul} 2}\h_z(\Pg)
\]
are uniformly bounded in $L^{2,-\d}$ for $\abs z \geq \g$, so
\[
\nr{\pppg x^{-\d} D^\nul \h_z(\Pg) \pppg x^\d} \lesssim \abs z^{\abs \nul}.
\]
With a similar estimate on the right and according to Theorem 1.5 in \cite{boucletr14} we obtain 
\begin{eqnarray*}
\lefteqn{\nr{\pppg x ^{-\d} D^{\nul} \h_z(\Pg) R^{(N)}(z) \h_z(\Pg) D^{\nur} \pppg x^{-\d}}}\\
&&\leq \nr{\pppg x^{-\d} D^\nul \h_z(\Pg) \pppg x^\d} \nr{\pppg x ^{-\d} R^{(N)}(z) \pppg x^{-\d}} \nr{\pppg x^{\d}  \h_z(\Pg) D^\nur \pppg x^{-\d}}\\
&& \lesssim \abs z^{\abs \nul + \abs \nur -1}.
\end{eqnarray*}
For $z \in \C_+$ we set $R_0(z) = (\Pg-z^2)\inv$. As above we obtain
\[
\nr{\pppg x^{-\d} D^\nul (1-\h_z)(\Pg) R_0(z) \pppg x^\d}_{\Lc(L^2)} \lesssim \abs z^{\abs \nul - 2}.
\]
We use the decomposition and the notation of Proposition \ref{prop-der-R2}. Let $T(z)$ be a term like \eqref{terme-dec-RN}. We set 
\[
\tilde T(z) = z^{\o} a(x)^{\nu_1} R(z)  a(x)^{\nu_2} \dots  a(x)^{\nu_n} R(z)
\]
($\tilde T(z) = \Id_{L^2}$ if $n = 0$). By the resolvent identity we have 
\begin{eqnarray*}
\lefteqn{\nr{\pppg x^{-\d} D^\nul (1-\h_z)(\Pg) T(z) \h_z(\Pg) D^\nur \pppg x^{-\d}}}\\
&& \leq \nr{\pppg x^{-\d} D^\nul (1-\h_z)(\Pg)R_0(z) \pppg x^\d} \nr{ \pppg x^{-\d} \big( \tilde T(z) +iz a^{\n_1}(x) T(z) \big) \pppg x^{-\d} }\\
&& \qquad \times \nr{\pppg x^\d \h_z(\Pg) D^\nur \pppg x^{-\d}}\\
&& \lesssim \abs z ^{\abs \nul + \abs \nur - 2}.
\end{eqnarray*}
We have used the fact that $\pppg x^{-\d} \big( \tilde T +iz a^{\n_1}(x) T \big) \pppg x^{-\d}$ is uniformly bounded, which can be proved exactly as Theorem 1.5 in \cite{boucletr14}. Similarly we prove that 
\[
\nr{\pppg x^{-\d} D^\nul (1-\h_z)(\Pg) T \h_z(\Pg) D^\nur \pppg x^{-\d}} \lesssim \abs z ^{\abs \nul + \abs \nur - 2}
\]
and 
\[
\nr{\pppg x^{-\d} D^\nul (1-\h_z)(\Pg) T (1-\h_z)(\Pg) D^\nur \pppg x^{-\d}} \lesssim \abs z ^{\abs \nul + \abs \nur - 2},
\]
which concludes the proof.
\end{proof}

\subsection{Low frequency estimates: statement of the results}

We now turn to the low frequency estimates. The following result improves and completes Theorem 1.3 in \cite{boucletr14}:

\begin{theorem} \label{th-low-freq-R}

\begin{enumerate}[(i)]
\item Let $N\in\N$. Let $\nul,\nur \in \N^d$ be such that $\anul \leq 1$ and $\anur \leq 1$. Assume that $\nur = 0$ or $b_1 = \dots = b_d = 0$. Let $\d > N + \frac 12$. Let $\e > 0$. If $\anul + \anur \geq 1$ or $N\neq 0$ then there exists a neighborhood $\Uc$ of 0 in $\C$ and $C \geq 0$ such that for all $z \in \Uc \cap \C_+$ we have 
\[
\nr{\pppg x^{-\d} D^{\nul} R^{(N)}(z) D^{\nur} \pppg x^{-\d}} _{\Lc(L^2)} \leq C \left( 1 + \abs z^{d- N - 2 + \abs{\nul} + \abs{\nur} - \e} \right).
\]
\item Let $\d > \frac 1 2$ and $\s \in \big[0,\frac 12 \big]$. Then there exists a neighborhood $\Uc$ of 0 in $\C$ and $C \geq 0$ such that for all $z \in \Uc \cap \C_+$ we have 
\[
\nr{\pppg x^{-\d-\s}  R(z)  \pppg x^{-\d-\s}} _{\Lc(L^2)} \leq \frac C {\abs z^{1-2\s}}.
\]
\end{enumerate}
\end{theorem}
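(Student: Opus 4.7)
The argument splits into two parts, and I would prove (ii) first since it is the building block for (i).

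For (ii), the strategy is to establish the two endpoints $\s = 0$ and $\s = \frac 12$ and then interpolate in $\s$. The endpoint $\s = 0$ is exactly Theorem 1.3 of \cite{boucletr14}: $\nr{\pppg x^{-\d} R(z) \pppg x^{-\d}}_{\Lc(L^2)} \leq C \abs z^{-1}$ for $\d > \frac 12$ and $z$ near $0$. For the endpoint $\s = \frac 12$, I would establish the uniform bound
\[
\nr{\pppg x^{-\d-\frac 12} R(z) \pppg x^{-\d-\frac 12}}_{\Lc(L^2)} \leq C.
\]
The key input is the classical Agmon-type limiting absorption principle for $\Pg$ at zero energy: in dimension $d \geq 3$, since $\d + \frac 12 > 1$, the operator $\Pg^{-1}$ extends to a bounded operator from $L^{2,\d+1/2}$ to $L^{2,-\d-1/2}$. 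The resolvent identity $R(z) = \Pg^{-1} + \Pg^{-1}(iza + z^2) R(z)$ then allows one to transfer the bound from $\Pg^{-1}$ to $R(z)$, treating $iza + z^2$ as a small perturbation for $z$ near $0$ and using the short-range decay of $a$ to absorb the weights. Stein's complex interpolation theorem applied to the analytic family $\pppg x^{-\d-\z} R(z) \pppg x^{-\d-\z}$ for $\Re\z \in [0, \frac 12]$ then produces the intermediate rate $\abs z^{2\s-1}$.

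For (i), I would first apply Proposition \ref{prop-der-R2} to decompose $R^{(N)}(z)$ as a finite linear combination of products $z^\o R(z) a^{\nu_1} R(z) \cdots a^{\nu_n} R(z)$ with $N = 2n - \o - \Vc$. Each such product, sandwiched with $\pppg x^{-\d} D^\nul$ on the left and $D^\nur \pppg x^{-\d}$ on the right, would be estimated by applying (ii) to each of the $n+1$ resolvents with individually chosen exponents $\s_0, \dots, \s_n \in [0, \frac 12]$. The short-range decay $\abs{a(x)} \lesssim \pppg x^{-1-\rho}$ lets each factor $a^{\nu_j}$ absorb weights of the form $\pppg x^{\d + \s_j}$ on both of its sides. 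The outer derivatives $D^\nul, D^\nur$ are handled via spectral localization: writing $R(z) = \h_z(\Pg) R(z) + (1 - \h_z(\Pg)) R(z)$ with $\h_z$ a smooth cutoff to $\{\Pg \lesssim \abs z^2\}$, the bound $\nr{D^\nul \h_z(\Pg)}_{\Lc(L^2)} \lesssim \abs z^{\abs \nul}$ produces the announced improvement of $\abs z^{\abs \nul + \abs \nur}$, while on the high-frequency part the resolvent provides sufficient smoothing. The hypothesis ``$\nur = 0$ or $W = 0$'' is needed so that $D^\nur$ can be pulled through $R(z)$ on the right without uncontrolled corrections from the first-order part $W = \sum b_j D_j$ of $\Pg$.

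The main obstacle is the endpoint $\s = \frac 12$ in (ii): the resolvent identity relating $R(z)$ to $\Pg^{-1}$ contains $R(z)$ itself on the right, so the estimate is implicit and must be closed by a Fredholm/bootstrap argument exploiting the compactness (in suitable weighted spaces) of $\Pg^{-1}(iza + z^2)$ for small $z$. Once (ii) is secured, the proof of (i) reduces to combinatorial bookkeeping: one balances $\sum_j \s_j$ across the $n+1$ resolvents so that the total power $\o + 2\sum_j \s_j - (n+1)$ matches the target exponent $d - N - 2 + \abs \nul + \abs \nur - \e$, while respecting the constraint $\s_j \leq \frac 12$ for each resolvent.
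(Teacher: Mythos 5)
Your plan for (ii) rests on two endpoints that are not actually available. The $\s=0$ endpoint, $\nr{\pppg x^{-\d} R(z)\pppg x^{-\d}}\lesssim \abs z^{-1}$ for every $\d>\frac12$, is \emph{not} Theorem 1.3 of \cite{boucletr14}: as Remark \ref{rem-weight} explains, that earlier theorem corresponds to the $\s=\frac12$ case (weight stronger than $\pppg x^{-1}$), and the whole point of statement (ii) here is the improvement of the weight down to $\d>\frac12$, so you are citing as known precisely the new content to be proved. Your route to the $\s=\frac12$ endpoint is also problematic: in the identity $R(z)=\Pg^{-1}+\Pg^{-1}(iza+z^2)R(z)$ the term $z^2\Pg^{-1}$ cannot be treated as a small perturbation in the weighted spaces you use, because $\pppg x^{-\d-\frac12}\Pg^{-1}\pppg x^{\d+\frac12}$ is not bounded on $L^2$ (the zero-energy LAP only maps $L^{2,s}\to L^{2,-s}$, $s>1$); and the Jensen--Kato/Fredholm scheme you invoke to control $\Pg^{-1}$ itself from the flat Laplacian requires a short-range perturbation, whereas $G(x)-I_d$ is only long range, $O(\pppg x^{-\rho})$ with $\rho>0$ arbitrary. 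This is exactly why the paper does not argue by perturbation from zero energy: it rescales by $e^{\pm iA\ln\abs z}$ and proves uniform estimates for the rescaled operator by the Mourre method (Propositions \ref{prop-estim-res-amort} and \ref{prop-low-freq-iota}), the whole family in $\s$ coming out of one parameter $S$ in Proposition \ref{prop-low-freq-R} rather than from interpolation.

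For (i) the decomposition via Proposition \ref{prop-der-R2} is the right starting point, but estimating each factor $R(z)$ separately by (ii) and "absorbing weights into $a^{\nu_j}$" cannot close. Whenever $\nu_j=0$ (e.g.\ all terms $z^{\o}R(z)^{n+1}$, which occur as soon as $\o\ge 1$, and are the only terms when $a\equiv 0$) there is no decaying factor between two consecutive resolvents, so nothing pays for the weights $\pppg x^{-(\d_j+\s_j)}$ that each single-resolvent bound demands on both of its sides; even when $\nu_j=1$, the decay $\pppg x^{-1-\rho}$ barely covers the two adjacent weights $>\frac12$ and leaves essentially no room for the exponents $\s_j$ your bookkeeping needs. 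Bounds for products $\pppg x^{-\d}R(z)a^{\nu_1}\cdots a^{\nu_n}R(z)\pppg x^{-\d}$ with $\d$ only slightly larger than $n+\frac12-\Vc$ are genuinely multi-resolvent limiting absorption estimates; they are the content of Proposition \ref{prop-low-freq-R}, proved after rescaling via the Mourre theory for powers of the resolvent with inserted weights (Theorem \ref{th-mourre}), and they cannot be recovered by chaining or interpolating the one-resolvent estimate (ii). The treatment of the outer derivatives by a cutoff $\h_z(\Pg)$ and the role of the hypothesis ``$\nur=0$ or $b_1=\dots=b_d=0$'' are comparatively minor points, but they too enter the paper's argument inside the rescaled/perturbative steps (Propositions \ref{prop-low-freq-iota} and \ref{prop-low-freq-n0}), not as a separate spectral localization layered on top of (ii).
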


\begin{remark} \label{rem-weight}
The second statement will only be used with $\s = 0$, in which case we have an estimate of size $\abs z\inv$. With $\s = \frac 12$ we have $\d > 1$ and a uniform estimate, so Theorem \ref{th-low-freq-R} contains Theorem 1.3 in \cite{boucletr14}. We see that for a single resolvent the weight $\pppg x^{-\d}$ with $\d > 1$ remains almost optimal (as mentioned in introduction, the optimal weight is in fact $\pppg x \inv$, see \cite{boucletr15}). However in our analysis for the wave equation the resolvent $R(z)$ will never come alone. It will either be composed with a derivative (in which case we can use the first statement of the theorem), or multiplied by $\abs z$ (in this case an estimate of size $\abs z \inv$ is enough). This explains why for $N = 0$ it is enough to assume that $\d > \frac 12$ in Theorem \ref{th-low-freq}.
\end{remark}

\begin{remark}
Compared to Theorem 1.3 in \cite{boucletr14} we allow a derivative on the right. This is easier if $W = 0$, which explains the assumption $\nur = 0$ or $b_1 = \dots = b_d = 0$. In fact, Theorem \ref{th-low-freq-R} will only be used in the proof of Proposition \ref{prop-res-Agg}, for which we have $W = 0$. We take the contribution of $W$ into account in Proposition \ref{prop-Agg-Ac}.
\end{remark}

\begin{remark} \label{estim-a-nul}
For these low frequency estimates there is no damping assumption, so the same estimates hold when $a = 0$.
\end{remark}

Theorem \ref{th-low-freq-R} will be a consequence of Proposition \ref{prop-der-R2} and the following result:

\begin{proposition} \label{prop-low-freq-R}
Let $n\in\N$. Let $\nul,\nur \in \N^d$ be such that $\anul \leq 1$ and $\anur \leq 1$. Assume that $\nur = 0$ or $b_1 = \dots = b_d = 0$. 
Let $\nu_1,\dots,\nu_n \in \{0,1\}$ and $\Vc = \nu_1+\dots+\nu_n$.
For $z \in \C_+$ we set
\[
\Rc(z) = R(z) a(x)^{\n_1}R(z) a(x)^{\n_2}  \dots R(z) a(x)^{\n_n} R(z).
\]
Let
\begin{equation} \label{hyp-delta}
\d_l > n + \frac 12 - \Vc \qandq \d_r > n + \frac 12 - \Vc.
\end{equation}
Set 
\[
s_l = \min\left( \frac {d}2 , \d_l \right) \qandq s_r = \min\left( \frac {d}2 , \d_r \right),
\]
and let $S < s_l + s_r$. Then there exists a neighborhood $\Uc$ of $0$ in $\C$ and $C \geq 0$ such that for all $z \in \Uc \cap \C_+$ we have
\[
\nr{\pppg x^{-\d_l} D^{\nul} \Rc(z) D^{\nur} \pppg x^{-\d_r}} _{\Lc(L^2)} \leq C  \left( 1 + \abs z^{- 2(n+1) + S  + \Vc + \anul + \anur}\right).
\]
\end{proposition}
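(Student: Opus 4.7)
The strategy is to split each inserted factor $a(x)^{\n_j}$ into a product of two polynomial weights and a bounded remainder, thereby reducing $\Rc(z)$ to a composition of $n+1$ sandwiched single resolvents, and then to apply the single-resolvent low-frequency estimate to each of them. Concretely, the short-range assumption \eqref{hyp-a-short-range} lets me write, for every $j \in \{1,\dots,n\}$ and every $\alpha_j,\beta_j \geq 0$ with $\alpha_j + \beta_j \leq (1+\rho)\n_j$,
\[
a(x)^{\n_j} = \pppg x^{-\alpha_j} M_j(x) \pppg x^{-\beta_j}, \qquad M_j \in L^\infty.
\]
Inserted into $\Rc(z)$, this rewrites the operator as a composition of $n+1$ sandwiched resolvents of the form $\pppg x^{-\sigma_k^l} R(z) \pppg x^{-\sigma_k^r}$ for $k \in \Ii 0 n$, where $\sigma_0^l = \d_l$, $\sigma_n^r = \d_r$, $\sigma_{k-1}^r = \beta_k$, $\sigma_k^l = \alpha_k$, and the derivatives $D^{\nul}$ and $D^{\nur}$ remain on the leftmost and rightmost factors.

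Each single sandwiched resolvent is then estimated by the case $n=0$ of Proposition~\ref{prop-low-freq-R}: for any $\eta > 0$ and any $k$ with $\sigma_k^l,\sigma_k^r > 1/2$,
\[
\bigl\|\pppg x^{-\sigma_k^l} R(z) \pppg x^{-\sigma_k^r}\bigr\| \lesssim 1 + |z|^{-2 + \min(d/2,\sigma_k^l) + \min(d/2,\sigma_k^r) - \eta},
\]
with the natural modification when a derivative is present at the extremity (adding $|\nul|$ or $|\nur|$ to the exponent). Multiplying these bounds over $k \in \Ii 0 n$ produces a total exponent $-2(n+1) + T + |\nul| + |\nur| - \eta'$, where $T = \sum_k[\min(d/2,\sigma_k^l) + \min(d/2,\sigma_k^r)] \leq s_l + s_r + (1+\rho)\Vc$.

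It remains to choose the $\alpha_j,\beta_j$ so that $T \geq S + \Vc$ for any prescribed $S < s_l + s_r$. When $\n_j = 1$, the joint constraints $\alpha_j,\beta_j > 1/2$ and $\alpha_j + \beta_j \leq 1+\rho$ are feasible precisely because $\rho > 0$; one picks $\alpha_j + \beta_j$ arbitrarily close to $1+\rho$, extracting an effective weight larger than $1$ from each such $j$. When $\n_j = 0$ no decay is available from $a$, and the two adjacent resolvents must be treated jointly via the identity $R(z)^2 = (2z)^{-1}\bigl[R'(z) - R(z)\,ia(x)\,R(z)\bigr]$: the resulting $|z|^{-1}$ factor is compensated by the fact that the $R'$ term reduces $n$ by one while the $R\,ia\,R$ term effectively replaces the missing $\n_j = 0$ by an $\n_j = 1$, consistently with the target exponent. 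The main obstacle is this combinatorial bookkeeping: the $\alpha_j,\beta_j$ must simultaneously respect the lower bounds $>1/2$ at each interior resolvent, the upper bounds $(1+\rho)\n_j$ from the decay of $a$, and the saturation caps at $d/2$; the lower bound \eqref{hyp-delta} on $\d_l,\d_r$ provides precisely the amount of end-budget slack needed to compensate for the indices where $\n_j = 0$.
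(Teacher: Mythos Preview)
Your approach has a genuine gap: the product of single-resolvent bounds is too crude to recover the stated exponent. Look at the interior resolvents. When $\nu_j = \nu_{j+1} = 1$, the resolvent sitting between $a^{\nu_j}$ and $a^{\nu_{j+1}}$ receives weights $\pppg x^{-\beta_j}$ and $\pppg x^{-\alpha_{j+1}}$, each of which lies in $(1/2, 1/2+\rho)$ (since $\alpha_k, \beta_k > 1/2$ and $\alpha_k + \beta_k \leq 1+\rho$). The best single-resolvent bound for that factor is therefore $1 + \abs z^{e_k}$ with $e_k < -2 + \beta_j + \alpha_{j+1} < -1 + 2\rho$, which for $\rho < 1/2$ is strictly negative; the factor diverges as $\abs z \to 0$ no matter how large $\d_l, \d_r$ are. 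Concretely, take $d \geq 5$, $\rho < 1/2$, $n = \Vc = 2$, $\nul = \nur = 0$, and $\d_l, \d_r > d/2$: the target exponent $-2(n+1) + S + \Vc = S - 4$ can be made positive (choose $S \in (4, d)$), so the proposition asserts a \emph{uniform} bound, yet your product unavoidably contains a factor $\lesssim \abs z^{-1+2\rho}$ that blows up. The decay extracted from two distinct $a$'s simply cannot be redistributed across the intermediate resolvent to repair this.

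Your treatment of $\nu_j = 0$ is separately circular: by \eqref{der-R} one has $R'(z) = iR(z)a(x)R(z) + 2zR(z)^2$, so the identity $R(z)^2 = (2z)^{-1}\big[R'(z) - iR(z)a(x)R(z)\big]$ is a tautology and does not ``reduce $n$ by one''.

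The paper does not decouple the resolvents. It first rescales to the small-perturbation model $\tHii$ and then estimates the full chain $\RRiz_{j,k}(z)$ as a single object via the Mourre method with \emph{inserted} weights (Theorem~\ref{th-mourre} and Proposition~\ref{prop-estim-res-amort}): each rescaled absorption factor is rewritten as $\pppg A^{-1}\Phi_\eta(z)$, and Lemma~\ref{lem-super-mourre} shows that every inserted $\pppg A^{-1}$ relaxes the required end-weight by one unit. This yields $\nr{\pppg A^{-\d}\RRiz_{j,k}(z)\pppg A^{-\d}} \lesssim \abs z^{\Vc_{j,k}}$ for $\d > n + \tfrac12 - \Vc$, a genuinely global bound that cannot be reproduced by multiplying local ones. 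The passage from $\Ri(z)$ back to $R(z)$ is then a perturbative step (Proposition~\ref{prop-low-freq-n0} and the resolvent identities \eqref{res-id}).
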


\begin{proof}[Proof of Theorem \ref{th-low-freq-R}]
Let $T(z)$ be a term like \eqref{terme-dec-RN}. We can apply Proposition \ref{prop-low-freq-R} with $S = \min(d-\e,2N+1)$. With \eqref{eq-dec-RN} we obtain
\begin{align*}
\nr{T(z)}
& \lesssim 1+\abs z^{\o  - 2(n+1)+ d -\e  + \Vc + \anul + \anur } + \abs{z}^{\o + 2(N-n) - 1 + \Vc + \anul + \anur}\\
& \lesssim 1+\abs z^{d - N - 2  + \anul + \anur - \e} + \abs{z}^{N - 1 + \anul + \anur}.
\end{align*}
If $\abs \nul + \abs \nur \geq 1$ or $N \geq 1$ we obtain 
\[
\nr{T(z)} \lesssim 1 + \abs z^{d - N - 2  + \anul + \anur - \e} .
\]
This proves the first statement of the theorem.
Now assume that $N=0$ (and hence $n=0$). Let $\s \in \big[0,\frac 12\big]$. We apply proposition \ref{prop-low-freq-R} with $S = 1 +2\s$, and we get the second statement. This concludes the proof of Theorem \ref{th-low-freq-R}.
\end{proof}

In the proof of Theorem \ref{th-low-freq-R} we have only used Proposition \ref{prop-low-freq-R} with $\d_l, \d_r > n + \frac 12$. Now we use the refined assumption \eqref{hyp-delta}. The following result will be used in the proof of Proposition \ref{prop-res-Agg}:

\begin{corollary} \label{cor-sigma-r-un}
Let $\e > 0$. Let $T(z)$ be a term given by Proposition \ref{prop-der-R2}. Assume that $n = N$ and $\Vc \geq 1$. Let $\d > N + \frac 12$ and $j \in \Ii 1 d$. Then there exists $C \geq 0$ such that for $z \in \C_+$ small enough we have 
\[
\nr {\pppg x^{-(\d-1)} D_j T(z) \pppg x^{-(\d-1)}}_{\Lc(L^2)} \leq C \left( 1+ \abs z^{N-2} + \abs z ^{d-N-1-\e} \right).
\]
\end{corollary}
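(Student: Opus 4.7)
The plan is to apply Proposition \ref{prop-low-freq-R} directly with $\nul = e_j$ (so $\anul = 1$) and $\nur = 0$; the alternative hypothesis $\nur = 0$ of that proposition is trivially satisfied. Writing $T(z) = z^{N - \Vc} \Rc(z)$ with $\Rc(z) = R(z) a(x)^{\nu_1} R(z) \dots a(x)^{\nu_n} R(z)$, where the exponent $\omega = 2n - N - \Vc = N - \Vc$ comes from $n = N$ and \eqref{eq-dec-RN}, I apply the proposition with balanced weights $\d_l = \d_r = \d - 1$. The constraint $\d - 1 > N + \frac 12 - \Vc$ holds because $\d > N + \frac 12$ and $\Vc \geq 1$. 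With $s_l = s_r = \min(\frac d2, \d - 1)$ and any admissible $S < s_l + s_r$, the proposition combined with the prefactor $\abs z^{N-\Vc}$ yields
\[
\nr{\pppg x^{-(\d-1)} D_j T(z) \pppg x^{-(\d-1)}}_{\Lc(L^2)} \leq C \bigl(\abs z^{N - \Vc} + \abs z^{-N - 1 + S}\bigr).
\]

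Since $0 \leq N - \Vc \leq N - 1$, the first term is bounded by $1$ for $\abs z \leq 1$ and is absorbed into the constant. For the second term I split into two regimes. If $\d - 1 \geq \frac d2$, then $s_l + s_r = d$, so I take $S = d - \e$, obtaining exactly $\abs z^{d - N - 1 - \e}$. If $\d - 1 < \frac d2$, the strict inequality $\d > N + \frac 12$ gives $s_l + s_r = 2(\d - 1) > 2N - 1$, so any $S \in [2N - 1, 2(\d - 1))$ is admissible and yields $-N - 1 + S \geq N - 2$, whence $\abs z^{-N-1+S} \leq \abs z^{N-2}$ for small $\abs z$.

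In either regime the right-hand side is absorbed by $1 + \abs z^{N-2} + \abs z^{d-N-1-\e}$, which proves the corollary. The only genuinely delicate point is the small-$\d$ regime: one needs the fact that $\d > N + \frac 12$ is a \emph{strict} inequality to produce the positive gap $2(\d - N - \frac 12)$ that guarantees the existence of some $S \geq 2N - 1$ satisfying $S < s_l + s_r$; everything else is direct bookkeeping of exponents between Propositions \ref{prop-der-R2} and \ref{prop-low-freq-R}.
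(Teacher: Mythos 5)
Your proof is correct and follows essentially the same route as the paper: apply Proposition \ref{prop-low-freq-R} with $\d_l=\d_r=\d-1$ (admissible precisely because $\Vc\geq 1$ makes \eqref{hyp-delta} hold), then account for the prefactor $\abs z^{\o}=\abs z^{N-\Vc}$ via \eqref{eq-dec-RN} and choose $S$ to land on the exponents $d-N-1-\e$ or $N-2$. The paper simply takes $S=\min(d-\e,2n-1)$ in one stroke instead of your case split on $\d-1$ versus $\frac d2$, which is an inessential variation.
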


\begin{proof}
We apply Proposition \ref{prop-low-freq-R} with $S = \min(d-\e, 2n-1)$. 
If $S = {d-\e}$ we have
\[
\nr {\pppg x^{-(\d-1)} D_j T(z) \pppg x^{-(\d-1)}}_{\Lc(L^2)} \lesssim 1 + \abs{z}^{\o + d-\e - 2n - 1 + \Vc} \lesssim 1 + \abs{z}^{d-\e-N-1}.
\]
And if $S = 2n-1$:
\[
\nr {\pppg x^{-(\d-1)} D_j T(z) \pppg x^{-(\d-1)}}_{\Lc(L^2)} \lesssim  1 + \abs{z}^{\o - 2 + \Vc} \lesssim 1 + \abs{z}^{N-2}. \qedhere
\]
\end{proof}

It the rest of this section we prove Proposition \ref{prop-low-freq-R}. We first use a scaling argument to prove the result for the resolvent of a small perturbation of the free Laplacian (see Proposition \ref{prop-low-freq-iota}) and then we prove the general case by a perturbation argument.\\

Let $\h \in C_0^\infty(\R^d,[0,1])$ be equal to 1 on a neighborhood of 0. For $\y \in ]0,1]$ and $x \in \R^d$ we set $\h_\y(x) = \h(\y x)$ and $G_\y(x) = \h_\y(x) I_d + (1-\h_\y(x)) G(x)$. Then we consider the operators
\begin{equation} \label{def-Pii-Pcc}
\Pii = -\divg G_\y(x) \nabla \quad \text{and} \quad  \Pcc = \Pg - W - \Pii = -\divg \big(\h_\y(x) (G(x)-I_d)\big) \nabla .
\end{equation}
For the dissipative part we set $a_\y = (1 - \h_\y) a$, and finally, for $z \in \C_+$:
\[
\Hii = \Pii - i z a_\y
\quad 
\text{
and 
}
\quad
\Ri (z) = (\Hii - z^2)\inv.
\]

\subsection{Low frequency resolvent estimates for a small perturbation of the free case}

\newcommand{\sss}{\s}

We first prove Proposition \ref{prop-low-freq-R} with $R(z)$ replaced by $\Ri(z)$ (see Proposition \ref{prop-low-freq-iota}).  For this we use a scaling argument. For a function $u$ on $\R^d$ and $z \in \C^*$ we denote by $u_z$ the function 
\[
u_z : x \mapsto u \left( \frac x {{\abs z}} \right).
\]
The analysis is based on the fact that the multiplication by a decaying function somehow behaves like a derivative, and hence is small in the low frequency regime (this is in the spirit of the Hardy inequality). More precisely, for $\d \in \R$ we define $\symb^{-\d}$ as the set of smooth functions $\vf$ such that 
\[
\abs{\partial^\a \vf(x)} \leq c_\a \pppg x^{-\d - \abs \a}.
\]
For $\n \geq 0$, $N \in \N$ and $\vf \in \Sc^{-\n - \frac \rho 2}(\R^d)$ we set
\[
\nr{\vf}_{\n,N} = \sup_{\abs \a \leq d} \sum_{0\leq m \leq N} \sup _{x\in \R^d} \abs{\pppg x^{\n + \frac \rho 2 + \abs \a} \big( \partial^\a (x\cdot \nabla)^m \vf \big) (x)}.
\]
Then we have the following result:

\begin{proposition} \label{prop-dec-sob}
Let $\n \in \big[ 0, \frac d 2 \big[$ and $s \in \big] -\frac d 2, \frac d 2\big[$ be such that $s -\n \in \big] -\frac d 2, \frac d 2\big[$. Then there exists $C \geq 0$ such that for $\vf \in \symb^{-\n-\frac \rho 2}(\R^d)$, $u \in H^s$ and $\l > 0$ we have
\[
 \nr{\vf_\l u}_{\dot H^{s-\n}} \leq C \l ^\n \nr \vf_{\n,0} \nr u _{\dot H^s}
\]
and
\[
 \nr{\vf_\l u}_{H^{s-\n}} \leq C \l ^\n \nr \vf_{\n,0} \nr u _{H^s}.
\]
\end{proposition}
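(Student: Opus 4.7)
The plan is to reduce to the case $\lambda = 1$ by a scaling argument, and then prove the base estimate by combining the pointwise decay of $\varphi$ with a classical Hardy--Sobolev inequality.

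\textbf{Scaling reduction.} I would introduce the dilation $\tau_\lambda : f \mapsto f(\cdot/\lambda)$, which by a direct computation on the Fourier side satisfies $\|\tau_\lambda f\|_{\dot H^t} = \lambda^{d/2-t}\|f\|_{\dot H^t}$ for every $t\in\R$. Setting $v(y) = u(\lambda y)$, so that $u = \tau_\lambda v$, the identity $\varphi_\lambda u = \tau_\lambda (\varphi v)$ is immediate, and the two scaling laws together give
\[
\frac{\|\varphi_\lambda u\|_{\dot H^{s-\nu}}}{\|u\|_{\dot H^s}} \;=\; \lambda^{\nu}\cdot \frac{\|\varphi v\|_{\dot H^{s-\nu}}}{\|v\|_{\dot H^s}}.
\]
Thus the full claim reduces to the base case $\lambda = 1$, and the factor $\lambda^\nu$ is produced automatically by the scaling.

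\textbf{Base case.} For $\lambda=1$ I need $\|\varphi v\|_{\dot H^{s-\nu}} \le C \|\varphi\|_{\nu,0}\|v\|_{\dot H^s}$. Since $|\varphi(x)| \le \|\varphi\|_{\nu,0}\pppg x^{-\nu-\rho/2}$ and $\pppg x^\nu \varphi$ is bounded together with its first $d$ derivatives (also controlled by $\|\varphi\|_{\nu,0}$), I would factor $\varphi = \pppg x^{-\nu}\,(\pppg x^\nu \varphi)$ and handle the two pieces separately. The weight $\pppg x^{-\nu}$ maps $\dot H^s$ continuously into $\dot H^{s-\nu}$ by the Hardy--Sobolev (Pitt) inequality, which holds precisely in the range $\nu \in [0,d/2)$, $s,s-\nu \in (-d/2,d/2)$ assumed in the statement; this is what motivates these hypotheses. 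The smooth bounded factor $\pppg x^\nu \varphi$ then acts as a multiplier on $\dot H^{s-\nu}$: being a symbol of class $S^0_{1,0}$ (the $d$ controlled derivatives being enough for the usual Coifman--Meyer / paraproduct argument in dimension $d$), it is bounded on every homogeneous Sobolev space.

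\textbf{Inhomogeneous estimate.} The $H^{s-\nu}$ version follows from the $\dot H^{s-\nu}$ one. When $s-\nu \geq 0$ I use $\|\cdot\|_{H^{s-\nu}}^2 \simeq \|\cdot\|_{L^2}^2 + \|\cdot\|_{\dot H^{s-\nu}}^2$ and apply the homogeneous estimate both as stated (for the $\dot H^{s-\nu}$ part) and with $(s,s-\nu)$ replaced by $(\nu,0)$ for the $L^2$ part, bounding $\|u\|_{\dot H^\nu}$ by $\|u\|_{H^s}$ via Sobolev embedding (and interpolation if $\nu>s$). The case $s-\nu<0$ is then obtained by duality.

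\textbf{Main obstacle.} The heart of the matter is the Hardy--Sobolev inequality in the full symmetric range $s,\,s-\nu\in(-d/2,d/2)$; I would invoke it as a classical result (Pitt, Stein--Weiss) rather than rederive it. Once this is in hand, the rest is bookkeeping, and the extra half-power $\rho/2$ of decay built into the symbol class $\symb^{-\nu-\rho/2}$ gives precisely the slack needed to stay away from endpoint failures in all the intermediate estimates.
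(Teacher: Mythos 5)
Your homogeneous estimate is fine and is a reasonable self-contained argument: the exact scaling $\nr{f(\cdot/\l)}_{\dot H^t}=\l^{d/2-t}\nr f_{\dot H^t}$ does reduce everything to $\l=1$ and produces the factor $\l^\n$, the factorization $\vf=\pppg x^{-\n}\,(\pppg x^\n\vf)$ is legitimate, the Hardy--Stein--Weiss bound $\pppg x^{-\n}:\dot H^s\to\dot H^{s-\n}$ is a classical fact in exactly the range $\n\in[0,d/2)$, $s,s-\n\in(-d/2,d/2)$ assumed here, and $\pppg x^\n\vf$ together with its first $d$ derivatives is controlled by $\nr\vf_{\n,0}$ with decay $\pppg x^{-\rho/2-\abs\a}$, which is enough to make it a multiplier of $\dot H^{s-\n}$ since $\abs{s-\n}<d/2$. (Note there is nothing internal to compare with: the paper does not prove this proposition but refers to Proposition 7.2 of the cited earlier work, so a standalone argument like yours is exactly what is asked for.) Two small inaccuracies in your justification: a bounded function with bounded derivatives is \emph{not} a multiplier of ``every homogeneous Sobolev space'' (think of $\sin x_1$ acting on low-frequency data in $\dot H^t$, $t>0$); what saves you is the decay of the derivatives and the restriction $\abs{s-\n}<d/2$. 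Likewise $\nr u_{\dot H^\n}\lesssim\nr u_{H^s}$ cannot be obtained ``by interpolation'' when $\n>s$ -- it is simply false -- but in your case $s-\n\geq 0$ this situation never occurs.

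The genuine gap is in the inhomogeneous estimate. Your scheme is: (A) if $s-\n\geq0$, split $\nr{\cdot}_{H^{s-\n}}^2\simeq\nr{\cdot}_{L^2}^2+\nr{\cdot}_{\dot H^{s-\n}}^2$ (correct, since then $0\leq\n\leq s$); (B) if $s-\n<0$, conclude ``by duality''. But dualizing $H^s\to H^{s-\n}$ replaces the pair $(s,s-\n)$ by $(\n-s,-s)$, and this lands in case (A) only when $-s\geq0$. In the remaining range $0<s<\n$ (allowed by the hypotheses, e.g.\ $\n=1$, $s=\tfrac12$, and actually used in the paper when the proposition is applied with $\n=1$ and $H^{s+1}\to H^{s}$ for $s$ slightly negative), both the original pair and its dual have a negative target index, so your reduction is circular and the proof as written does not close. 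The simplest repair with the tools you already have: you have the two endpoint bounds $\nr{\vf_\l u}_{L^2}\lesssim\l^\n\nr\vf_{\n,0}\nr u_{H^\n}$ and, by duality, $\nr{\vf_\l u}_{H^{-\n}}\lesssim\l^\n\nr\vf_{\n,0}\nr u_{L^2}$ for the \emph{same} multiplication operator, so complex interpolation with parameter $\th=1-s/\n$ gives $H^s\to H^{s-\n}$ with the same constant $\l^\n\nr\vf_{\n,0}$ for all $0<s<\n$. (Alternatively, split the decay, writing $\vf=\pppg x^{-\n_1}\h$ with $\n_1\in(0,s)$ chosen so that each factor can be sent to one side of the duality pairing.) With that case added, your proof is complete.
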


For the proof we refer to Proposition 7.2 in \cite{boucletr14}. Now for $z \in \C_{+}$ we set
\begin{align*} 
\tPii =  \frac 1 {\abs z^2} e^{-iA \ln \abs z} \Pii e^{iA \ln \abs z} = -\divg G_{\y, z}(x) \nabla 
\end{align*}
and 
\begin{equation*}
\tHii =  \frac 1 {\abs z^2} e^{-iA \ln \abs z} \Hii e^{iA \ln \abs z} = \tPii - i \frac {\hat z} { \abs z} a_{\y, z}
\end{equation*}
(where $\hat z$ stands for $z / \abs z$). Then we set
\begin{equation*} 
\Riz(z) = \big(\tHii- \hat z^2\big) \inv,
\end{equation*}
so that
\begin{equation*} 
\Ri(z)  = \frac 1 { \abs z^2} e^{i A \ln \abs z} \Riz(z) e^{-i A \ln  \abs z}.
\end{equation*}

For $\m = (\m_1,\dots,\m_d) \in \N^d$ we set 
\[
\ad_x^\m := \ad_{x_{1}}^{\m_1} \dots \ad_{x_{d}}^{\m_d}.
\]
Let $s \in \big]- \frac d 2 ,\frac d 2 - 1 \big[$, $m \in \N$ and $\m \in \N^d$. According to Propositions \ref{prop-A} and \ref{prop-dec-sob} there exists $C \geq 0$ such that for all $j \in \Ii 1 n$, $z \in \C_+$ and $u \in \Sc$ we have 
\[
\nr{ \ad_{iA}^m a_{\y,z} u}_{H^{s}} = \nr{\big((-x \cdot \nabla )^m a_{\y}\big)_z u}_{H^{s}} \leq C \abs z \nr{u}_{H^{s+1}}.
\]
If $\m \neq 0$ we have $\ad_x^\m \ad_{iA}^m a_{\y,z} u = 0$. For $z \in \C_+$ we set 
\[
\tPhil = e^{-iA \ln\abs z} D^{\nul} e^{iA \ln \abs z} = \abs z^{\abs \nul} D^\nul \quad \text{and}\quad \tPhir  = \abs z^{\abs \nur} D^\nur.
\]
The only property which we are going to use on these two operators is that for $s \in \big[0 ,\frac d 2\big[$, $m \in \N$ and $\m \in \N^d$ there exists $C \geq 0$ such that for $z \in \C_+$ we have
\[
\nr{\ad_x^\m \ad_{iA}^m  \tPhil}_{\Lc(H^{s+1},H^s)} \leq C \abs z ^{\nul}
\]
and
\[
 \nr{\ad_x^\m \ad_{iA}^m  \tPhir}_{\Lc(H^{-s},H^{-(s+1)})} \leq C \abs z ^{\nur}.
\]

For $z \in \C_{+}$ and $j,k \in \Ii 0 n$ with $j \leq k$ we set
\begin{equation*} 
\Rc_{j,k} (z) =  R(z) a^{\nu_{j+1}}(x)  R(z)  \dots  a^{\nu_j}(x)  R(z),
\end{equation*}
\begin{equation*} 
\Rc_{j,k}^\y (z) =  \Ri(z) a_\y^{\nu_{j+1}}(x)  \Ri(z)  \dots  a_\y^{\nu_j}(x)  \Ri(z)
\end{equation*}
and
\begin{equation} \label{def-RRiz}
\RRiz_{j,k}(z) =   \Riz(z)  a_{\y,z}^{\nu_{j+1}}(x) \Riz(z) \dots  a_{\y,z}^{\nu_k}(x) \Riz(z).
\end{equation}
For $\a_{j},\dots , \a_{k} \in \N$ we also define
\begin{equation*} 
 \tThiota_{j ; \a_{j},\dots , \a_{k} }(z) =  \big( \tPii + 1\big)^{-\a_{j}}  a_{\y,z}^{\nu_{j+1}}(x)  \big( \tPii + 1\big)^{-\a_{j+1}}  \dots a_{\y,z}^{\nu_{k}}(x)  \big( \tPii + 1\big)^{-\a_{k}} .
\end{equation*}
Finally, for all $j,k \in \N$ with $j\leq k$ we set 
\begin{equation*}
\Vc_{j,k} = \sum_{l=j+1}^{k} \n_l. 
\end{equation*}

With the resolvent identity we can check by induction on $m \in \N$ that $\tPhil \RRiz_{0,n}(z) \tPhir$ can be written as a sum of terms either of the form 
\begin{equation} \label{terme1p}
\big( 1 + \hat z ^2 \big)^\b  \tPhil  \tThiota_{0 ; \a_0,\dots,\a_n}(z) \tPhir
\end{equation}
with $\b \in \N$ and $\a_0,\dots, \a_n \in \N^*$, or 
\begin{equation} \label{terme2p}
\big( 1 + \hat z ^2 \big)^\b   \tPhil  \tThiota_{0 ; \a_0,\dots,\a_j} (z)  \RRiz_{j,k}(z)  \tThiota_{k ; \a_{k},\dots , \a_n}(z) \tPhir
\end{equation}
where $\b \in \N$,  $j,k\in\Ii 0 n$, $j\leq k$, $\a_0,\dots, \a_{j-1} , \a_{k+1}, \dots , \a_n \in \N^*$, $\a_j,\a_{k} \in\N$, $\sum_{l=0}^j \a_l \geq m$ and $\sum_{l=k}^{n} \a_l \geq m$.\\

The following two results are Propositions 7.10 and 7.11 in \cite{boucletr14}:

\begin{proposition} \label{prop-Th}
Let $s \in \big[0,\frac d 2+1[$, $s^* \in \big]-\frac d 2-1,0\big]$, $m \in \N$ and $\m \in \N^d$.  Let $j,k \in \Ii 0 n$ be such that $j \leq k$. Let $\a_j,\dots , \a_{k} \in \N^*$ be such that $2 \sum_{l=j}^k \a_{l} - \Vc_{j,k}  \geq (s-s^*)$.
Then there exist $\y_0 \in ]0,1]$ and $C \geq 0$ such that for $\y \in ]0,\y_0]$ and $z \in \C_{+}$ we have
\[
 \nr{\ad_x^\m \ad_{A}^m  \tThiota_{j ; \a_{j},\dots , \a_{k} } (z) }_{\Lc(H^{s^*}, H^{s})} \leq C \abs z ^{\Vc_{j,k}}.
\]
Moreover this also holds when $\a_j = 0$ if $s < \frac d 2 -1$, and when $\a_k = 0$ if $s^* > -\frac d 2 +1$.
\end{proposition}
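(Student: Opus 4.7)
The plan is to control $\tThiota_{j;\alpha_j,\dots,\alpha_k}(z)$ factor by factor and chain Sobolev estimates from right to left. First, because $G_{\y,z}(x) - I_d = (1 - \h_\y(x/|z|))(G(x/|z|) - I_d)$ is supported where $|x|/|z| \gtrsim 1/\y$ and $(G-I_d)(y) = O(\pppg y^{-\rho})$, we have $\nr{G_{\y,z} - I_d}_{L^\infty} \lesssim \y^\rho$ uniformly in $z \in \C_+$. Choosing $\y_0$ small enough, $\tPii = -\divg G_{\y,z} \nabla$ is then a uniformly small perturbation of $-\Delta$, and a Neumann-series/elliptic-regularity argument yields uniform bounds
\[
\nr{(\tPii + 1)^{-\alpha}}_{\Lc(H^\sigma, H^{\sigma + 2\alpha})} \leq C
\]
for every $\alpha \geq 0$ and $\sigma$ in any fixed bounded range, uniformly in $z \in \C_+$ and $\y \in ]0,\y_0]$.

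Next, Proposition \ref{prop-dec-sob} applied to $\vf = a_\y^{\nu_l} \in \symb^{-\nu_l(1+\rho)}$ with $\lambda = |z|$ gives, for $\sigma,\sigma - \nu_l \in ]-d/2,d/2[$,
\[
\nr{a_{\y,z}^{\nu_l} u}_{H^{\sigma - \nu_l}} \leq C |z|^{\nu_l} \nr u_{H^\sigma}.
\]
Composing from right to left, each $(\tPii+1)^{-\alpha_l}$ gains $2\alpha_l$ derivatives, while each $a_{\y,z}^{\nu_l}$ loses $\nu_l$ of them and contributes a factor $|z|^{\nu_l}$. The hypothesis $2\sum_{l=j}^k \alpha_l - \Vc_{j,k} \geq s - s^*$ guarantees that the total regularity gain is sufficient to land in $H^s$, while the $|z|$-powers combine to $|z|^{\Vc_{j,k}}$. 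Intermediate Sobolev indices can always be kept inside $]-d/2,d/2[$ by using only as much of the available gain from the resolvents as necessary at each step. The boundary constraints $s < d/2 - 1$ when $\alpha_j = 0$ and $s^* > -d/2 + 1$ when $\alpha_k = 0$ are precisely what the outermost multiplication by $a_{\y,z}^{\nu}$ requires to remain inside the admissible range of Proposition \ref{prop-dec-sob}.

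To absorb the commutators $\ad_x^\mu \ad_A^m$, expand by Leibniz (both $\ad_{x_j}$ and $\ad_{iA}$ are derivations on products). On a multiplication factor, $\ad_{x_j} a_{\y,z}^{\nu_l} = 0$ and $\ad_{iA} a_{\y,z}^{\nu_l} = ((-x\cdot\nabla) a_\y^{\nu_l})_z$ is again a rescaled short-range function of the same class. On a resolvent factor, use recursively
\[
[T,(\tPii+1)^{-\alpha}] = \sum (\tPii+1)^{-\alpha'} [\tPii, T] (\tPii+1)^{-\alpha''},
\]
noting that $[\tPii, x_j]$ is a first-order operator with coefficients comparable to $G_{\y,z}$, and $[\tPii, iA]$ is a divergence-form operator with coefficients given by radial derivatives of $G_{\y,z}$. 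Both classes inherit the same uniform perturbative bounds as $G_{\y,z}$ itself, so every term produced by the expansion is of the same shape as $\tThiota$ and admits the same $H^{s^*} \to H^s$ estimate, with an extra constant depending only on $m$ and $\mu$.

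The main obstacle is the bookkeeping: one must track the combinatorial explosion of terms produced by the Leibniz expansion and, more importantly, verify at each stage of the right-to-left composition that the intermediate Sobolev index lies in the admissible interval $]-d/2, d/2[$ where Proposition \ref{prop-dec-sob} applies. The boundary cases $\alpha_j = 0$ and $\alpha_k = 0$ are exactly the places where this margin is tight, and the corresponding assumptions on $s$ and $s^*$ are optimal.
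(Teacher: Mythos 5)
You should first note that the paper does not prove this statement at all: it is quoted verbatim as Propositions 7.10 and 7.11 of \cite{boucletr14}, so the only comparison available is with the strategy of that cited work, and your argument is indeed the same one in outline (rescaling, elliptic gain of the resolvents $(\tPii+1)^{-\a_l}$, Proposition \ref{prop-dec-sob} turning each factor $a_{\y,z}$ into a loss of one derivative and a gain of $\abs z$, right-to-left chaining with throttled gains, and a Leibniz expansion of $\ad_x^\m\ad_{iA}^m$ whose commutator terms have the same structure). The power counting $\abs z^{\Vc_{j,k}}$ and your reading of the two boundary cases $\a_j=0$, $\a_k=0$ are correct.

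Two steps, however, are stated in a form that is false and must be repaired. First, the claim that $\nr{(\tPii+1)^{-\a}}_{\Lc(H^{\s},H^{\s+2\a})}\leq C$ uniformly in $z\in\C_+$ ``for $\s$ in any fixed bounded range'' cannot hold: the rescaled coefficients satisfy only $\abs{\partial^\a\big(G_{\y,z}-I_d\big)(x)}\lesssim \y^{\rho/2}\abs x^{-\abs\a}$ on their support, so their derivatives blow up as $\abs z\to 0$, and multiplication by $G_{\y,z}-I_d$ is uniformly bounded on $H^{\s}$ only for $\abs\s<\frac d2$. Consequently the uniform elliptic gain is confined to the window $\s\in\,]-\frac d2-1,\frac d2-1[$ (input) and $\s+2\in\,]-\frac d2+1,\frac d2+1[$ (output); this restriction is not bookkeeping, it is the reason the proposition carries the hypotheses $s\in[0,\frac d2+1[$ and $s^*\in\,]-\frac d2-1,0]$, and your chaining must be run entirely inside this window (which, as you observe for the multiplication factors, it can be). Second, $L^\infty$-smallness $\nr{G_{\y,z}-I_d}_{L^\infty}\lesssim\y^{\rho}$ only gives the Lax--Milgram bound $\Lc(H^{-1},H^{1})$; to sum the Neumann series at a nonzero Sobolev index you need smallness of the multiplication operator on $H^{\s+1}$ itself, which follows from Proposition \ref{prop-dec-sob} applied with $\n=0$ to $\vf=G_\y-I_d$ (giving an operator norm $O(\y^{\rho/2})$, uniform in $z$, for $\abs{\s+1}<\frac d2$), not from the sup norm alone. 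With these two corrections -- replace ``any bounded range'' by the explicit window, and invoke the weighted symbol-class smallness rather than $L^\infty$ -- your proof goes through and coincides with the argument of \cite{boucletr14}.
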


\begin{proposition} \label{prop-Th3}
 Let $\sss_l ,\sss_r \in \big[0,\frac d 2 \big[$, $\d_l > \sss_l$ and $\d_r > \sss_r$. Let $j,k\in\Ii 0 n$ and $\a_j,\dots,\a_k \in \N$ be such that $\sum_{l=j}^k \a_l$ is large enough (say greater than $\max(\d_l + 2 + \s_l,\d_r + 2 + \s_r)$).
\begin{enumerate} [(i)]
\item If $\a_j,\dots,\a_{k-1} \in \N^*$ then there exists $C \geq 0$ such that for all $z \in \C_{+}$ we have 
\[
 \nr{\pppg x ^{-\d_l} e^{iA \ln  \abs z} \tPhil(z)  \tThiota_{j ; \a_j,\dots,\a_k} (z) \pppg { A}^{\d_l}}_{\Lc(L^2)} \leq C \abs z^{\sss_l + \nul + \Vc_{j,k}}.
\]
\item If $\a_{j+1} ,\dots,\a_{k} \in \N^*$ then there exists $C \geq 0$ such that for all $z \in \C_{+}$ we have 
\[
 \nr{\pppg { A}^{\d_r}  \tThiota_{j ; \a_j,\dots,\a_k} (z) \tPhir e^{-iA \ln  \abs z} \pppg x ^{-\d_r} }_{\Lc(L^2)} \leq C \abs z^{\sss_r +  \nur + \Vc_{j,k}}.
\]
\end{enumerate}
\end{proposition}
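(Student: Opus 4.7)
The plan is, first, to use the intertwining identity $e^{-iA\ln\abs z}\pppg x^{-\d_l}e^{iA\ln\abs z}=\pppg{\abs z\inv x}^{-\d_l}$ coming from Proposition \ref{prop-A}. Since $\pppg A^{\d_l}$, being a function of $A$, commutes with $e^{iA\ln\abs z}$, and $e^{iA\ln\abs z}$ is unitary on $L^2$, the norm in (i) equals $\nr{\pppg{\abs z\inv x}^{-\d_l}\tPhil(z)\tThiota_{j;\a_j,\dots,\a_k}(z)\pppg A^{\d_l}}_{\Lc(L^2)}$, and this is the reduced expression I would estimate.

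The next step is to factor the operator through a chain of Sobolev spaces
\[
L^2\xrightarrow{\tThiota(z)\pppg A^{\d_l}}H^s\xrightarrow{\tPhil(z)}H^{s-\abs\nul}\xrightarrow{\pppg{\abs z\inv x}^{-\d_l}}L^2,
\]
with $s\in[s_l+\abs\nul,d/2+\abs\nul)$, which is non-empty since $s_l<d/2$ and $\abs\nul\leq 1$, and to extract a power of $\abs z$ at each arrow. The middle factor $\tPhil(z)=\abs z^{\abs\nul}D^\nul$ trivially produces $\abs z^{\abs\nul}$. For the right factor, Proposition \ref{prop-dec-sob} applied with $\vf(x)=\pppg x^{-\d_l}$ and $\l=\abs z$ yields $\nr{\pppg{\abs z\inv x}^{-\d_l}}_{\Lc(H^{s_l},L^2)}\lesssim \abs z^{s_l}$, which combined with the Sobolev embedding $H^{s-\abs\nul}\hookrightarrow H^{s_l}$ gives the $\abs z^{s_l}$ contribution; the hypothesis $\d_l>s_l$ (together with a harmless $\rho/2$ margin) ensures $\vf\in\symb^{-s_l-\rho/2}$.

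The only nontrivial step is the left arrow, namely $\nr{\tThiota(z)\pppg A^{\d_l}}_{\Lc(L^2,H^s)}\lesssim \abs z^{\Vc_{j,k}}$. For this I would commute $\pppg A^{\d_l}$ through $\tThiota(z)$ by a Helffer--Sj\"ostrand representation of the function $\a\mapsto\pppg\a^{\d_l}$ (or a plain iterated commutator identity when $\d_l\in\N$), writing $\tThiota\pppg A^{\d_l}$ as a finite sum $\sum_{m<M}c_m\ad_{iA}^m(\tThiota)\pppg A^{\d_l-m}+R_M$. Each commutator $\ad_{iA}^m(\tThiota)$ is bounded from some auxiliary $H^{s^*}$ with $s^*\in(-d/2,0)$ to $H^s$ with norm $\lesssim \abs z^{\Vc_{j,k}}$ by Proposition \ref{prop-Th}, provided $2\sum_{l=j}^k\a_l-\Vc_{j,k}\geq s-s^*$; this is the quantitative use of the largeness of $\sum_{l=j}^k\a_l$. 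The residual factor $\pppg A^{\d_l-m}$ is bounded on $L^2$ by functional calculus once $m\geq\d_l$, and for smaller $m$ is absorbed using the extra regularity provided by the resolvents in $\tThiota$. The remainder $R_M$ is made negligible by taking $M$ large enough.

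Statement (ii) follows from (i) by taking adjoints, observing that $\tThiota_{j;\a_j,\dots,\a_k}(z)^*$ has the same structural form (with the indices $\a_j,\dots,\a_k$ reversed) and that $D^{\nur}$ is symmetric. The main technical obstacle lies in the commutator expansion of $\pppg A^{\d_l}$ through $\tThiota$: one must simultaneously satisfy the Sobolev-index constraints imposed by Propositions \ref{prop-dec-sob} and \ref{prop-Th} and compensate the unboundedness of the intermediate factors $\pppg A^{\d_l-m}$ by the smoothing effect of the resolvents of $\tPii+1$ inside $\tThiota$, which is precisely what the unspecified lower bound on $\sum_{l=j}^k\a_l$ in the statement ensures.
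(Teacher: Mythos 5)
There is no in-paper proof to compare with here: the paper imports this statement verbatim as Proposition 7.11 of \cite{boucletr14} (together with Proposition \ref{prop-Th} = Proposition 7.10 there), so your argument has to stand on its own — and as written it has a genuine gap at its central step. Your factorization rests on the claim that $\tThiota_{j;\a_j,\dots,\a_k}(z)\,\pppg A^{\d_l}$ is bounded from $L^2$ to $H^s$ with norm $\lesssim \abs z^{\Vc_{j,k}}$. This operator is not bounded at all: $\pppg A^{\d_l}$ grows in the spatial variable (take $u$ a fixed bump translated to $\abs x\sim R$, with frequencies of size $O(1)$; then $\nr{\pppg A^{\d_l}u}_{L^2}\sim R^{\d_l}$), while nothing inside $\tThiota$ decays in $x$: the factors $\big(\tPii+1\big)^{-\a}$ only smooth in frequency, and when $\Vc_{j,k}=0$ there is no multiplication by $a_{\y,z}$ at all, so $\nr{\tThiota(z)\pppg A^{\d_l}u}_{H^s}$ still grows like $R^{\d_l}$ as $R\to\infty$. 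The only thing available to pay for $\pppg A^{\d_l}$ is the spatial weight $\pppg x^{-\d_l}$ sitting at the opposite end of the product (it is no accident that both weights carry the same exponent $\d_l$), and your bookkeeping never couples them: you spend the left weight entirely on the scaling gain $\abs z^{\s_l}$ via Proposition \ref{prop-dec-sob}, and then ask the resolvents of $\tPii+1$ to absorb the leftover factors $\pppg A^{\d_l-m}$, $m<\d_l$, produced by your commutator expansion. That absorption is the same false step again: Sobolev regularity cannot control spatial growth, so those terms remain unbounded on $L^2$. What is actually needed is a mechanism to transport $x$-weights through $\tThiota$ so that they can meet the $A$-weight; this is precisely why Proposition \ref{prop-Th} is stated for the mixed commutators $\ad_x^\m\ad_{A}^m\tThiota$ and not merely for $\ad_{iA}^m\tThiota$, a piece of structure your argument never exploits.

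Two smaller points. First, the ``harmless $\rho/2$ margin'' is not harmless as stated: Proposition \ref{prop-dec-sob} requires $\pppg x^{-\d_l}\in\symb^{-\s_l-\rho/2}$, i.e.\ $\d_l\geq \s_l+\tfrac\rho2$, whereas the proposition only assumes $\d_l>\s_l$, so even the scaling gain needs an extra argument in the range $\s_l<\d_l<\s_l+\tfrac\rho2$. Second, your initial unitary reduction and the deduction of (ii) from (i) by taking adjoints (reversing the $\a_l$'s and $\nu_l$'s, using the self-adjointness of $\tPii$ and of $D^{\nur}$) are correct. But since the key intermediate bound is false as an operator estimate and the proposed repair does not close, the proposal does not establish the statement.
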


In the following proposition we give uniform estimates for $\RRiz_{j,k}(z)$. For this we use the Mourre theory. Since $a$ is of short range, the inserted factors in \eqref{terme-dec-RN} or \eqref{def-RRiz} can be seen as inserted weights in the sense of Section \ref{sec-mourre}. Thus with Theorem \ref{th-mourre} we see that a weaker weight than usual is needed on both sides (with $\d > n + \frac 12$ we recover Proposition 7.12 in \cite{boucletr14}).

\begin{proposition} \label{prop-estim-res-amort}
Let $\d > n + \frac 12 - \Vc$. There exists $\y_0 \in ]0,1]$ and $C \geq 0$ such that for $\y \in ]0,\y_0]$ and $j,k \in \Ii 0 n$ with $j \leq k$ we have
\[
\forall z \in \C_+, \quad  \nr{\pppg A^{-\d} \RRiz_{j,k}(z) \pppg A^{-\d}}_{\Lc(L^2)} \leq C \abs z ^{ \Vc_{j,k}}.
\]
\end{proposition}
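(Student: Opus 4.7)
The strategy is to apply the improved dissipative Mourre method of Theorem \ref{th-mourre} to the rescaled operator $\tHii$, using the generator of dilations $A$ as the conjugate operator, and treating each non-trivial short-range insertion $a_{\y,z}$ as playing the role of an inserted weight $\pppg A^{-1}$ in the sense of that theorem.

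First I would verify the Mourre hypotheses (Definition \ref{def-mourre}) for $\tHii$ on an interval $J$ containing the portion of the real axis in which the modulus-one spectral parameter $\hat z^2$ takes its real part. The self-adjoint part $\tPii = -\divg G_{\y,z}\nabla$ is a perturbation of $-\D$ whose coefficients $G_{\y,z}(x) = \h(\y x/\abs z) I_d + (1-\h(\y x/\abs z)) G(x/\abs z)$ coincide with $I_d$ on an expanding region; a direct computation gives $[\tPii, iA] = 2\tPii - \divg((x\cdot\nabla)G_{\y,z})\nabla$, the second term being small in $\Lc(\Kc,\Kc^*)$ for $\y$ small by \eqref{hyp-long-range}. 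This yields a positive commutator estimate near the unit circle for $\y \leq \y_0$. The dissipative part $(\hat z /\abs z)\, a_{\y,z}$ and all its iterated $A$-commutators are controlled uniformly in $z$ via the estimate $\nr{\ad_{iA}^m a_{\y,z}}_{\Lc(H^1,L^2)} \leq C\abs z$ recalled before Proposition \ref{prop-Th}, the factor $1/\abs z$ in front being compensated by the gain of $\abs z$ from one application of the Hardy-type bound of Proposition \ref{prop-dec-sob}.

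Next I would split $\RRiz_{j,k}(z)$ at each index $l$ where $\nu_l = 1$. Grouping together the resolvents separated only by trivial ($\nu_l = 0$) insertions produces the factorisation
\[
\RRiz_{j,k}(z) = \Rc_0 \cdot a_{\y,z} \cdot \Rc_1 \cdot a_{\y,z} \cdots a_{\y,z} \cdot \Rc_{\Vc_{j,k}},
\]
where each $\Rc_i$ is a plain power $\Riz(z)^{m_i}$ of the single resolvent, with $\sum_i m_i = k-j+1$ and $\Vc_{j,k}$ genuine $a_{\y,z}$ factors in between. I then apply Theorem \ref{th-mourre} with $\n = \Vc_{j,k}$ blocks, inserted exponents $\d_1 = \cdots = \d_{\n} = 1$, and equal outer exponents $\d_0 = \d_{\n+1} = \d$. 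The role of the abstract weight $\pppg A^{-1}$ between consecutive blocks is taken by $a_{\y,z}$: the short-range nature of $a$ together with the rescaling ensures that $a_{\y,z}$ satisfies all the input estimates (i)--(iv) of Lemma \ref{lem-super-mourre} with constant $\g_j \sim \abs z$, exactly as a literal inserted weight would. Condition \eqref{hyp-d-+} on the $\d_j$ boils down, after reinserting $n_i = m_i$ and using $\sum_i m_i - \Vc_{j,k} \leq n - \Vc$ (since $(k-j)-\Vc_{j,k} \leq n - \Vc$), to $\d > n + \frac 12 - \Vc$, which is precisely our hypothesis. Collecting the $\Vc_{j,k}$ commutator constants produces the prefactor $\abs z^{\Vc_{j,k}}$ of the conclusion.

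The main obstacle is the rigorous justification that $a_{\y,z}$ may substitute for a literal weight $\pppg A^{-1}$ in the abstract framework. Beyond the $\Cc^n_N(A)$ control of $a_{\y,z}$, one must verify the off-diagonal bounds involving the spectral projections $\1_{\R_\pm}(A)$ from assumption (iv) of Lemma \ref{lem-super-mourre}; these are precisely the propagation-type estimates established in this setting in \cite{boucletr14} (Proposition 7.12 and the resolvent bounds leading to it), and reinterpreting them with $a_{\y,z}$ in the role of the Mourre weight is the new technical point. Once this correspondence is in place, the weight reduction from $\d > n + \frac 12$ to $\d > n + \frac 12 - \Vc$ follows mechanically from the structure of Theorem \ref{th-mourre}.
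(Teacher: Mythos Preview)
Your overall strategy coincides with the paper's: verify the Mourre hypotheses for $\tHii$ with conjugate operator $A$ (for $\y$ small, on an interval containing $\Re(\hat z^2)$), split $\RRiz_{j,k}(z)$ into $\nu=\Vc_{j,k}+1$ blocks of pure resolvent powers at the indices where $\nu_l=1$, then invoke Theorem~\ref{th-mourre} with inserted exponents $\d_1=\cdots=\d_\nu=1$ and outer exponents $\d_0=\d_{\nu+1}=\d$. Your verification of \eqref{hyp-d-+} via the counting inequality $(k-j)-\Vc_{j,k}\le n-\Vc$ is also exactly the argument the paper gives.

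The one point where you diverge is the handling of the obstacle you flag. You propose to let $a_{\y,z}$ ``play the role'' of the abstract weight $\pppg A^{-1}$ and then worry about justifying the off-diagonal bounds (iv) of Lemma~\ref{lem-super-mourre} for this substitute. The paper sidesteps this entirely by the factorisation
\[
a_{\y,z} \;=\; \pppg A^{-1}\,\Phi_\y(z),\qquad \Phi_\y(z):=\pppg A\, a_{\y,z},
\]
and then absorbs $\Phi_\y(z)$ into the adjacent block, setting $\Rc_\th(z)=\Phi_\y(z)\Riz(z)^{n_\th}$ for $\th\ge 1$. With this rewriting, the inserted weights are \emph{literally} $\pppg A^{-1}$, so Theorem~\ref{th-mourre} applies verbatim and no separate argument for (iv) is needed. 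The only thing left to check is that $\Phi_\y(z)$ is an admissible inserted factor, i.e.\ that $\nr{\ad_{iA}^m\Phi_\y(z)}_{\Lc(H^1,L^2)}\lesssim\abs z$ for all $m$; this follows from $\ad_{iA}^m\Phi_\y(z)=\pppg A\,((-x\cdot\nabla)^m a_\y)_z$ together with Proposition~\ref{prop-dec-sob}, exactly the estimate you already quoted. This trick is what turns your ``main obstacle'' into a two-line computation, and it is the only substantive simplification you are missing.
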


\begin{proof}
\stepp Let $J = \big] \frac 13 , 3 \big[$ and $I = \big[ \frac 2 3 , 2 \big]$. The result is clear for $z \in \C_+$ such that $\Re(\hat z) \notin I$. As in the proof of Proposition 7.2 in \cite{boucletr14} we can check that if $\y$ is small enough then $A$ is conjugate to $\tHii$ in the sense of Definition \ref{def-mourre} uniformly in $z \in \C_+$ with $\Re(\hat z) \in J$.

\stepp Let $\n = \Vc_{j,k}$ and $j_1\dots,j_\n$ be such that $j+1 \leq j_1 < j_2 < \dots < j_\n \leq  k$ and 
\[
\singl{j_1,\dots,j_\n} = \singl{\th  \in \Ii{j+1} k \st \n_\th  = 1}.
\]
Set $j_0 = j$ and $j_{\n+1} = k+1$.  For $\th  \in \Ii 0 \n$ we set $n_\th  = j_{\th +1} - j_\th  \in \N^*$.
Let $\Phi_\y(z) = \pppg A a_\y \big(\frac x {\abs z} \big)$. We can write
\[
\RRiz_{j,k}(z) = \Riz^{n_0}(z) \pppg A\inv \Phi_\y(z) \Riz^{n_1}(z) \pppg A \inv \Phi_\y(z) \Riz^{n_2}(z) \dots \pppg A\inv \Phi_\y(z) \Riz^{n_\n}(z) .
\]
For $m \in \N$ we have 
\[
\ad_{iA}^m (\Phi_\y(z)) = \pppg A (-x\cdot \nabla )^m a_{\y,z} = \pppg A \big((-x\cdot \nabla)^m(a_\y) \big)_z.
\]
Then for $u \in \Sc$
\begin{eqnarray*}
\lefteqn{\nr{\ad_{iA}^m (\Phi_\y(z))u}_{L^2} \lesssim \nr{\big((-x\cdot \nabla)^m(a_\y) \big)_z u}_{L^2} + \sum_{q = 1 }^d \nr{x_q D_q\big((-x\cdot \nabla)^m(a_\y) \big)_z u}_{L^2}}\\
&& \lesssim \abs z \nr{u}_{H^1} + \sum_{q = 1 }^d \nr{\big(x_q D_q(-x\cdot \nabla)^m(a_\y) \big)_z u}_{L^2} + \abs z \sum_{q = 1 }^d \nr{ \big(x_q(-x\cdot \nabla)^m(a_\y) \big)_z D_q u}_{L^2}\\
&& \lesssim \abs z \nr{u}_{H^1},
\end{eqnarray*}
and hence
\[
\nr{\ad_{iA}^m (\Phi_\y(z))}_{\Lc(H^1,L^2)} \lesssim \abs z.
\]

\stepp  
We set $\Rc_0(z) = \Riz^{n_0}(z)$ and, for $\th \in \Ii 1 \n$, $\Rc_\th (z) = \Phi_\y(z) \Riz^{n_\th }(z)$. For $\th  \in \Ii 1 {\n-1}$ we set $\d_\th  = 1$. We also set $\d_0 = \d_\n = \d$. For $\th  \in \Ii 0 \nu$ we have
\begin{align*}
j_{\th +1} - j_0 - \th  -1
& = \# \singl{ \s \in \Ii {j+1}{j_{\th +1}-1} \st \n_\s = 0}\\
& \leq  \# \singl{ \s \in \Ii {1}{n} \st \n_\s = 0}\\
& \leq  n - \Vc,
\end{align*}
so for $\th  \in \Ii 0 \n$
\[
\sum_{q = 0}^\th  \d_q = \d + \th  > n- \Vc  + \th  + \frac 12 \geq j_{\th +1} - j_0 - \frac 12 = \sum_{q=0}^{\th } n_q - \frac 12.
\]
Similarly,
\[
j_{\n + 1} - j_l - (\n-\th ) - 1 \leq n - \Vc
\]
and hence 
\[
\sum_{q = \th +1}^{\n+1} \d_j = \d  + \n - \th  > n - \Vc - \frac 12 + \n - \th  \leq j_{\n+1} - j_l - \frac 12 = \sum_{q = \th }^{\n} n_q - \frac 12.
\]
Thus we can apply Theorem \ref{th-mourre}, and the conclusion follows. 
\end{proof}

\begin{proposition} \label{prop-low-freq-iota}
There exists $\y_0 \in ]0,1]$ such that the statement of Proposition \ref{prop-low-freq-R} holds uniformly in $\y \in ]0,\y_0]$ if we replace $R(z)$ by $\Ri(z)$.
\end{proposition}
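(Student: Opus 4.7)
The plan is to combine the rescaling identities
\[
\Ri(z) = |z|^{-2}\, e^{iA\ln|z|}\, \Riz(z)\, e^{-iA\ln|z|}, \qquad a_\y = e^{iA\ln|z|}\, a_{\y,z}\, e^{-iA\ln|z|},
\]
(together with $D^{\nul} e^{iA\ln|z|} = e^{iA\ln|z|}\tPhil$ and $e^{-iA\ln|z|}D^{\nur} = \tPhir e^{-iA\ln|z|}$) with the three building blocks already in hand, namely Propositions \ref{prop-Th}, \ref{prop-Th3} and \ref{prop-estim-res-amort}. Telescoping the intermediate conjugations in the product defining $\Rci(z)$ yields
\[
\pppg x^{-\d_l} D^{\nul} \Rci(z) D^{\nur} \pppg x^{-\d_r} = |z|^{-2(n+1)}\, \pppg x^{-\d_l}\, e^{iA\ln|z|}\, \big[\tPhil\, \RRiz_{0,n}(z)\, \tPhir\big]\, e^{-iA\ln|z|}\, \pppg x^{-\d_r},
\]
so the task reduces to estimating the bracket sandwiched between the conjugated weights, uniformly for $\y$ in a neighborhood of $0$.

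Next, I would expand $\tPhil\, \RRiz_{0,n}(z)\, \tPhir$ as a finite linear combination of terms of type \eqref{terme1p} and \eqref{terme2p}, the integer $m$ being chosen large (depending only on $\d_l,\d_r,\anul,\anur$) so that the $\a$-hypotheses of Propositions \ref{prop-Th} and \ref{prop-Th3} are met in every term. A term of type \eqref{terme1p} contains only a single $\tThiota$-block framed by $\tPhil$ and $\tPhir$: Proposition \ref{prop-Th3} applies directly to each half, absorbing the factors $\pppg x^{-\d_l}e^{iA\ln|z|}$ and $e^{-iA\ln|z|}\pppg x^{-\d_r}$ into $\pppg A^{\pm \d}$-weights at the cost of a gain $|z|^{\s_l+\s_r}$ for any $\s_l\in[0,s_l)$ and $\s_r\in[0,s_r)$, and yields a total power $|z|^{\s_l+\s_r+\anul+\anur+\Vc}$.

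For a term of type \eqref{terme2p} I would split the inner operator as
\[
\tPhil\, \tThiota_{0;\a_0,\ldots,\a_j}(z)\, \pppg A^{\d_l} \cdot \pppg A^{-\d_l}\, \RRiz_{j,k}(z)\, \pppg A^{-\d_r} \cdot \pppg A^{\d_r}\, \tThiota_{k;\a_k,\ldots,\a_n}(z)\, \tPhir,
\]
controlling the two outer pieces (together with the conjugated weights on the extreme ends) by Proposition \ref{prop-Th3}, which contributes $|z|^{\s_l+\anul+\Vc_{0,j}}$ and $|z|^{\s_r+\anur+\Vc_{k,n}}$, and the middle piece by Proposition \ref{prop-estim-res-amort}, which contributes $|z|^{\Vc_{j,k}}$. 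Hypothesis \eqref{hyp-delta} is exactly what guarantees the condition $\d_l,\d_r > (k-j)+\tfrac 12 -\Vc_{j,k}$ required by Proposition \ref{prop-estim-res-amort}, via the elementary inequality $n-(k-j)\geq \Vc_{0,j}+\Vc_{k,n}$.

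Summing all the terms and restoring the prefactor $|z|^{-2(n+1)}$ produces a bound of order $|z|^{-2(n+1)+\anul+\anur+\Vc+\s_l+\s_r}$; since $\s_l+\s_r$ can be taken arbitrarily close to $s_l+s_r$ from below, the announced estimate with any $S<s_l+s_r$ follows. The step I expect to be the main obstacle is the bookkeeping of the Sobolev indices $s^*\in(-\frac d 2-1,0]$ and $s\in[0,\frac d 2+1)$ along the chain of $\tThiota$-blocks and $\Riz(z)$-resolvents, ensuring that every composition closes in $L^2$ even in the degenerate configurations $j=0$, $k=n$, $j=k$, or $\Vc_{j,k}=0$, where one or more extremal $\a$-exponents in Propositions \ref{prop-Th} and \ref{prop-Th3} are forced to vanish and only the sharper endpoint cases of those propositions are available.
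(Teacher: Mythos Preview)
Your overall scheme --- rescaling, expanding into types \eqref{terme1p} and \eqref{terme2p}, and handling the latter via Propositions \ref{prop-Th3} and \ref{prop-estim-res-amort} --- is exactly the paper's, and your treatment of type-\eqref{terme2p} terms matches the paper. One minor point: the hypothesis of Proposition \ref{prop-estim-res-amort} is already the global condition $\d > n + \tfrac12 - \Vc$, so your elementary inequality relating $n-(k-j)$ and $\Vc_{0,j}+\Vc_{k,n}$ is not needed there.

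The genuine gap is in your treatment of type-\eqref{terme1p} terms. You propose to split the single block $\tThiota_{0;\a_0,\ldots,\a_n}$ in two and apply Proposition \ref{prop-Th3} to each half. But Proposition \ref{prop-Th3} requires the sum $\sum_l \a_l$ of the half under consideration to be large (at least $\max(\d_l+2+\s_l,\d_r+2+\s_r)$), whereas the terms of type \eqref{terme1p} produced by the expansion carry only the constraint $\a_l\in\N^*$: their total sum may be as small as $n+1$. The parameter $m$ forces large sums in the \emph{outer} $\tThiota$-blocks of type-\eqref{terme2p} terms only, not in the pure \eqref{terme1p} terms. Since $\d_l,\d_r$ are allowed to be arbitrarily large in the statement, the splitting you describe cannot be carried out in general.

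The paper fills this by a different mechanism. For type-\eqref{terme1p} terms it uses Proposition \ref{prop-Th}, whose hypothesis $2\sum_l \a_l - \Vc \geq s - s^*$ is automatically met once one caps $\s_l+\s_r \leq 2(n+1)-\Vc-\anul-\anur$, and it extracts the factors $|z|^{\s_l}$ and $|z|^{\s_r}$ not from Proposition \ref{prop-Th3} but from the $L^p$-scaling of dilations (Proposition \ref{prop-A}(iii)) together with $\pppg x^{-\d_l}\in\Lc(L^{p_l},L^2)$ and the Sobolev embedding $H^{\s_l}\hookrightarrow L^{p_l}$, where $p_l = 2d/(d-2\s_l)$. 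This is also why the paper chooses $\s_l+\s_r = \min\big(S,\,2(n+1)-\Vc-\anul-\anur\big)$, which is what ultimately produces the ``$1+$'' in the final bound.
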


\begin{proof}
We consider $\sss_l \in [0, s_l[$ and $\sss_r \in [0, s_r[$ such that 
\[
\sss_l + \sss_r  = \min \big( S , 2(n+1)-\Vc - \anul - \anur \big).
\]
In particular we have 
\begin{equation} \label{eq-compte-reg}
\sss_l + \anul < \frac d 2 + 1, \quad \sss_r + \anur < \frac d 2 + 1 \qandq  \sss_l + \anul  + \sss_r + \anur \leq 2(n+1) - \Vc.
\end{equation}
Let $p_l = \frac {2d}{d-2  \sss_l}$ and $p_r =  \frac {2d}{d+2 \sss_r}$. Since $\d_l \geq s_l > \sss_l$ and $\d_r \geq s_r > \sss_r$ we can check that $\pppg x^{-\d_l}$ belongs to $\Lc(L^{p_l},L^2)$ and $\pppg x^{-\d_r} \in \Lc(L^2,L^{p_r})$. Since $\sss_l < \frac d 2$ and $\sss_r < \frac d 2$ we also have the Sobolev embeddings $H^{\sss_l} \subset L^{p_l}$ and $L^{p_r} \subset H^{- \sss_r}$.
We write
\begin{eqnarray*}
\lefteqn{\pppg x^{-\d_l} \Phil \Rc_{0,n}^\y(z) \Phir \pppg x^{-\d_r}}\\
&& = \abs z^{-2(n+1)} \pppg x^{-\d_l} e^{iA \ln \abs z} \tPhil \RRiz_{0,n}(z) \tPhir  e^{-iA \ln\abs z} \pppg x^{-\d_r}.
\end{eqnarray*}
We first estimate the contribution of a term of the form \eqref{terme1p}. With \eqref{eq-compte-reg} we can apply Proposition \ref{prop-Th} (with $s = \sss_l+\anul$ and $s^* = - \sss_r - \anur$). With Proposition \ref{prop-A} this gives
\begin{eqnarray*}
\lefteqn{\nr{\pppg x^{-\d_l} e^{iA \ln \abs z} \tPhil \tThiota_{0;\a_0,\dots,\a_n}(z) \tPhir  e^{-iA \ln\abs z} \pppg x^{-\d_r}}_{\Lc(L^2)}}\\
&& \leq \nr{e^{iA \ln \abs z}}_{\Lc(L^{p_l})} \nr{\tPhil}_{\Lc(H^{\sss_l + \anul},H^{\sss_l})} \nr{\tThiota_{0;\a_0,\dots,\a_n}(z)}_{\Lc(H^{-\sss_r - \anur},H^{\sss_l + \anul})} \\
&& \qquad \times  \nr{\tPhir}_{\Lc(H^{-\sss_r},H^{-\sss_r - \anur})} \nr{e^{-iA \ln\abs z}}_{\Lc(L^{p_r})}\\
&& \lesssim \abs{z}^{\sss_l + \anul + \Vc + \anur + \sss_r} .
\end{eqnarray*}
Now we consider the contribution of a term of the form \eqref{terme2p} with $m$ large enough. According to Propositions \ref{prop-Th3} and \ref{prop-estim-res-amort} we have 
\begin{eqnarray*}
\lefteqn{\nr{\pppg x^{-\d_l} e^{iA \ln \abs z} \tPhil \tThiota_{0;\a_0,\dots,\a_j}(z) \RRiz_{j,k}(z)  \tThiota_{k ; \a_{k},\dots , \a_n}(z)  \tPhir  e^{-iA \ln\abs z} \pppg x^{-\d_r}}_{\Lc(L^2)}}\\
&& \leq \nr{\pppg x^{-\d_l} e^{iA \ln \abs z} \tPhil \tThiota_{0;\a_0,\dots,\a_j}(z) \pppg A^{\d_l}}
\nr{\pppg A^{-\d_l} \Rc^\y_{j,k}(z)   \pppg A^{-\d_r} } \\
&& \qquad \times \nr{\pppg A^{\d_r} \tThiota_{k ; \a_{k},\dots , \a_n}(z) \tPhir  e^{-iA \ln\abs z} \pppg x^{-\d_r}}\\
&& \lesssim \abs{z}^{\sss_l + \anul + \Vc_{0,j}} \abs z^{\Vc_{j,k}} \abs z^{\Vc_{k,n} \anur + \sss_r}= \abs{z}^{\sss_l + \anul + \Vc + \anur + \sss_r} .
\end{eqnarray*}
It only remains to remark that $\sss_l + \anul + \Vc + \anur + \sss_r$ equals $2(n+1)$ or $S + \anul + \anur + \Vc$ to conclude.
\end{proof}

\subsection{Proof of the low frequency estimates in the general case}

Now we use Proposition \ref{prop-low-freq-iota} to prove Proposition \ref{prop-low-freq-R}. For this we have to add the contributions of $W$ and $\Pcc$ in the self-adjoint part, and $a\h_\y$ in the dissipative part. Let $\y_0 > 0$ be given by Proposition \ref{prop-low-freq-iota} and $\y \in ]0,\y_0]$. For $z \in \C_+$ we set  
\[
\Kcco = \Pcc + W \qandq \Kcc = \Kcco  - iza\h_\y.
\]
Then we have the resolvent identities
\begin{equation} \label{res-id}
R(z) = \Ri(z) - \Ri(z) \Kcc R(z) = \Ri(z) - R(z) \Kcc \Ri(z).
\end{equation}
We first estimate a single resolvent with a strong weight:

\begin{proposition} \label{prop-low-freq-n0}
Let $\nul,\nur \in \N^d$ be such that $\anul \leq 1$ and $\anur \leq 1$. Let $\s > 2$.
Then there exist a neighborhood $\Uc$ of 0 in $\C$ and $C \geq 0$ such that for all $z \in \Uc \cap \C_+$ we have 
\[
\nr{\pppg x^{-\s} D^\nul R(z) D^\nur \pppg x^{-\s}}_{\Lc(L^2)} \leq C.
\]
\end{proposition}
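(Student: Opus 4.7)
The strategy is to transfer the estimate from $\Ri(z)$, where Proposition~\ref{prop-low-freq-iota} applies, to $R(z)$ via the resolvent identity \eqref{res-id}, exploiting the fact that the coefficients of the perturbation $\Kcc = \Pcc + W - iza\h_\y$ are all compactly supported.

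First, I would apply Proposition~\ref{prop-low-freq-iota} with $n = 0$, $\Vc = 0$ and $\d_l = \d_r = \s$. Since $d \geq 3$ and $\s > 2$, we have $s_l = s_r = \min(d/2,\s) \geq 3/2$, so $s_l + s_r > 2$. Choosing $S = 2$, the exponent $-2 + S + \anul + \anur = \anul + \anur$ is nonnegative, and the bound is uniform for $\abs z$ small:
\[
\nr{\pppg x^{-\s} D^\nul \Ri(z) D^\nur \pppg x^{-\s}}_{\Lc(L^2)} \leq C.
\]

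Combining the two forms of \eqref{res-id} (apply $R(z) = \Ri(z) - \Ri(z)\Kcc R(z)$ on the outside and then $R(z) = \Ri(z) - R(z)\Kcc\Ri(z)$ in the inner resolvent), I would write
\[
R(z) = \Ri(z) - \Ri(z)\Kcc\Ri(z) + \Ri(z)\Kcc R(z)\Kcc\Ri(z).
\]
For the middle term, expand $\Kcc = \sum_{\abs\a \leq 1,\,\abs\b \leq 1} D^\a c_{\a\b}(x) D^\b$ with $c_{\a\b} \in C_0^\infty(\R^d)$ (the divergence form $\Pcc = \sum_{i,j} D_i K_{ij} D_j$ is essential here, since it keeps at most one derivative on each side of the middle factor). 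Each resulting piece factorizes as
\[
\bigl[\pppg x^{-\s} D^\nul \Ri(z) D^\a \pppg x^{-\s}\bigr]\,\bigl[\pppg x^{\s} c_{\a\b} \pppg x^{\s}\bigr]\,\bigl[\pppg x^{-\s} D^\b \Ri(z) D^\nur \pppg x^{-\s}\bigr],
\]
where the outer brackets are uniformly bounded by the first step (each $\Ri(z)$ carries at most one derivative on each side, since $\anul,\anur,\abs\a,\abs\b \leq 1$) and the middle bracket is bounded because $c_{\a\b}$ is compactly supported.

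For the last term, I would use the compact support of the coefficients of $\Kcc$ to write $\Kcc = \chi_1 \Kcc \chi_2$ for suitable cutoffs $\chi_1, \chi_2 \in C_0^\infty(\R^d)$. The inner factor $\chi_2 R(z) \chi_1$ is then uniformly bounded on $L^2$ (and, after combining with elliptic regularity for $\Pg$, between suitable local Sobolev spaces) by Theorem~\ref{th-low-freq-R}(ii) with $\s = 1/2$, which provides a uniform bound for $\pppg x^{-\d-1/2} R(z) \pppg x^{-\d-1/2}$, $\d > 1/2$. The outer pieces $\pppg x^{-\s} D^\nul \Ri(z) \Kcc \chi_2$ and its mirror are controlled by the same decomposition of $\Kcc$ as in the previous step. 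The main technical obstacle is exactly here: bookkeeping the derivatives so that, after all manipulations, at most one derivative lands on each side of every $\Ri(z)$ and the composition with $\chi_2 R(z) \chi_1$ makes sense across the relevant Sobolev spaces. This is made possible by the divergence-form structure of $\Pcc$ together with the elliptic regularity gain of $R(z)$ on compact sets.
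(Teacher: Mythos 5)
Your first two steps are fine: the uniform bound on $\pppg x^{-\s} D^\nul \Ri(z) D^\nur \pppg x^{-\s}$ does follow from Proposition \ref{prop-low-freq-iota} with $S=2$ (since $\s>2$ and $d\geq 3$ give $s_l+s_r\geq 3$), and the term $\Ri(z)\Kcc\Ri(z)$ factorizes as you say because all coefficients of $\Kcc$ are compactly supported and $\Pcc$ is in divergence form. The genuine gap is in the third term $\Ri(z)\Kcc R(z)\Kcc\Ri(z)$: you bound the inner factor $\chi_2 R(z)\chi_1$ by invoking Theorem \ref{th-low-freq-R}(ii), but in the paper's logical order that theorem is a consequence of Proposition \ref{prop-low-freq-R}, whose proof in turn uses Proposition \ref{prop-low-freq-n0} (precisely to handle the factors of the full resolvent $R(z)$ appearing between two $\Kcc$'s). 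So your argument is circular: a uniform low-frequency bound for the full resolvent $R(z)$ with strong (or compactly supported) weights — and in fact with a derivative on each side, since each copy of $\Kcc$ leaves a $D^\b$, $D^{\a'}$ next to $R(z)$ — is exactly the content of the proposition being proved, and nothing established earlier in the paper provides it. Note also that your twice-iterated identity offers no mechanism to close the loop: the unknown $R(z)$ appears in that term with no small factor, so it cannot be absorbed, and the appeal to "elliptic regularity on compact sets" does not by itself produce the derivative landing on the right of $R(z)$.

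The paper avoids this by never putting $R(z)$ between two perturbations. It first reduces to $\nur=0$ via a resolvent identity with $(\Hz-i)\inv$, then uses the single identity $R(z)=\Ri(z)-R(z)\Kcc\Ri(z)$ together with a low-frequency cutoff $\p_\e(P)$, proving the smallness estimate $\nr{\pppg x^\s \Kcc \Ri(z)\p_\e(P)\pppg x^{-\s}}\leq \tfrac13$ for $\e$, $\y$ and $\abs z$ small (compactness, quadratic estimates for the dissipative part, and the estimates on $\Ri$). The unknown quantity $\nr{\pppg x^{-\s}D^\nul R(z)\pppg x^{-\s}}$ then appears on the right-hand side multiplied by $\tfrac13+C_0\abs z<1$ and is absorbed into the left-hand side. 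Some such smallness/absorption (or an independent a priori bound on the cutoff resolvent near $z=0$) is what your proposal is missing; without it the third term cannot be controlled.
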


\begin{proof}
\stepp 
Assume that the proposition is proved when $\nur = 0$. Then if $\anur = 1$ we can write by the resolvent identity 
\begin{align*}
\nr{\pppg x^{-\s} D^\nul R(z) D^\nur \pppg x^{-\s}}
& \leq \nr{\pppg x^{-\s} D^\nul (\Hz - i)\inv  D^\nur \pppg x^{-\s}} \\
& \quad + (z-i) \nr{\pppg x^{-\s} D^\nul R(z)  \pppg x^{-\s}} \nr{\pppg x^{\s} (\Hz-i)\inv D^\nur \pppg x^{-\s}} \\
& \lesssim 1 .
\end{align*}
We have used the result for $\nur = 0$ and pseudo-differential calculus. Thus it is enough to prove the proposition in the case $\nur = 0$.

\stepp Let $\p \in C_0^\infty(\R,[0,1])$ be equal to 1 on a neighborhood of 0. For $\e > 0$ we set
$
\p_\e : \l \mapsto \p \left( \frac \l \e \right).
$
We recall that $R_0(z) = (\Pg-z^2)\inv$. By \eqref{res-id} and a similar resolvent identity between $R(z)$ and $R_0(z)$ we have 
\begin{align*}
R(z)
& = \Ri(z) \p_\e(P) - R(z) \Kcc \Ri(z) \p_\e(P)\\
& \quad + R_0 (z) (1-\p_\e)(P) + i z R(z) a R_0(z) (1-\p_\e)(P).
\end{align*}
Assume that there exists $\e > 0$ such that 
\begin{equation} \label{estim-Se}
\nr{\pppg x^\s \Kcc \Ri(z) \p_\e(P) \pppg x^{-\s}} \leq \frac 13.
\end{equation}
There exists a neighborhood $\Uc_\e$ of 0 such that for $z \in \Uc_\e \cap \C_+$ the operators $\p_\e(P)$ and ${R_0(z) (1-\p_\e)(P)}$ are bounded in $L^{2,\s}$ so according to Proposition \ref{prop-low-freq-iota} there exists $C,C_0 > 0$ such that 
\[
\nr{\pppg x^{-\s} D^\nul R(z)  \pppg x^{-\s} } \leq C + \left( \frac 13 + C_0 \abs z \right) \nr{\pppg x^{-\s} D^\nul R(z)  \pppg x^{-\s} }.
\]
For $z$ small enough this proves the result. It remains to prove \eqref{estim-Se}. 

\stepp Let $\g_0 \in C_0^\infty$, $j,k \in \Ii 1 d$. According to the Hardy inequality we have for $u \in \Sc$ 
\[
\nr{\pppg x^\s \g_0 D_k u}_{L^2} \lesssim \nr{\pppg x \inv D_k u}_{L^2} \lesssim \nr{u}_{\dot H^2}.
\]
We also have 
\[
\nr{\pppg x^\s D_j \g_0 D_k u}_{L^2} \leq \nr{\pppg x^\s (D^\a \g_0) D_k u}_{L^2} + \nr{\pppg x^\s  \g_0 D^\a D_k u}_{L^2} \lesssim \nr{u}_{\dot H^2}.
\]
If $\y$ is small enough we have 
\[
\nr{v}_{\dot H^2} \leq 2 \nr{\Pii v}_{L^2}
\]
for all $v \in \Sc$ (see Remark 7.8 in \cite{boucletr14}), so for $\m > 0$ we have
\begin{eqnarray} \label{estim-Pcc-W}
\lefteqn{\nr{\pppg x^\s (\Pcc + W) \Ri(i\m) \p_\e(P) \pppg x^{-\s} u}_{L^2} \lesssim \nr{\Ri(i\m) \p_\e(P) \pppg x^{-\s} u}_{\dot H^2}}\\
&&\nonumber \lesssim \nr{\Pii \Ri(i\m) \p_\e(P) \pppg x^{-\s} u}_{L^2}\\
&&\nonumber \lesssim \nr{\p_\e(P) \pppg x^{-\s} u}_{L^2} + \m \nr{a_\y \Ri(i\m) \p_\e(P) \pppg x^{-\s} u}_{L^2} + \m^2 \nr{\Ri(i\m) \p_\e(P) \pppg x^{-\s} u}_{L^2}.
\end{eqnarray}
Then we can finish the proof of \eqref{estim-Se} with the ideas of the proof of Proposition 7.15 in \cite{boucletr14}: by a compactness argument the norm of $\p_\e(P) \pppg x^{-\s}$ goes to 0 when $\e$ goes to 0, for the second term we use the quadratic estimates (see \cite[Proposition 3.7]{boucletr14}) and for the last term we use the easy estimate $\nr{\Ri(i\m)} \leq \m^{-2}$. The difference between $\Ri(\t + i\m)$ and $R(i\m)$ can be written as the integral over $s \in [0,\t]$ of $R'(s+i\t)$ so by the estimate of $\Ri'$ given by Proposition \ref{prop-low-freq-iota} we conclude that the norm \eqref{estim-Pcc-W} is as small as we wish if $\e > 0$ and $z \in \C_+$ are small enough.
It remains to estimate
\[
\abs z \pppg x^\s \h_\y a \Ri(z) \p_\e(P) \pppg x^{-\s}.
\]
But with Proposition \ref{prop-low-freq-iota} and the boundedness of $\p_\e(P)$ in $L^{2,\s}$ we see that for $z \in \C_+$ close to 0 the norm of this operator is as small as we wish. This concludes the proof of \eqref{estim-Se} and hence the proof of the proposition.
\end{proof}

Now we can finish the proof of Proposition \ref{prop-low-freq-R}:

\begin{proof}[Proof of Proposition \ref{prop-low-freq-R}]
For $z \in \C_+$, $j,k \in \Ii 0 n$ with $j \leq k$ we set 
\[
\Rc_{j,k}^0(z) = \Rc^\y_{j,k}(z) \qandq \Rc_{j,k}^1 (z) = \Rc_{j,k}(z).
\]
Using the resolvent identities \eqref{res-id} we can prove by induction on $k \in \Ii 0 n$ that $\Rc_{0,k}(z) = \Rc_{0,k}^1(z)$ can be written as a linear combination of terms of the form 
\[
T(z) = \Rc_{0,j_0}^{0}(z) \Kcc \Rc_{j_0,j_1}^{m_1} \Kcc \dots \Rc_{j_{p-2},j_{p-1}}^{m_{p-1}}(z) \Kcc \Rc_{j_{p-1},j_p}^0(z),
\]
where $p \in \N$, $j_p = k$, $m_1,\dots,m_{p-1} \in \{0,1\}$ and for $\th \in \Ii 1 {p-1}$
\[
m_\th = 1 \quad \implies j_{\th-1} = j_\th.
\]
This last property means that $R(z)$ only appears between two factors $\Kcc$ (see \cite{khenissir} for an analogous argument without inserted factors between the resolvents). We use this result with $k = n$. Let $\d = \max(\d_l,\d_r)$. If $p = 0$ then the contribution of $T(z) = \Rci_{0,n}(z)$ is estimated by Proposition \ref{prop-low-freq-iota}. If $p=1$ we can write 
\begin{eqnarray*}
\lefteqn{\nr{\pppg x^{-\d_l} D^{\n_l} \Rci_{0,j_0}(z) \Kcc \Rci_{j_0,n}(z) D^{\n_r} \pppg x^{-\d_r}}}\\
&& \lesssim \nr{\pppg x^{-\d_l} D^{\n_l} \Rci_{0,j_0}(z) \nabla \pppg x^{-\d}} \nr{\pppg x ^{-\d} \nabla \Rci_{j_0,n}(z) D^{\n_r} \pppg x^{-\d_r}} \\
&& \quad + \sum_{j=1}^d \nr{\pppg x^{-\d_l} D^{\n_l} \Rci_{0,j_0}(z) \pppg x^{-\d}} \nr{\pppg x ^{\d} b_j \nabla \Rci_{j_0,n}(z) D^{\n_r} \pppg x^{-\d_r}} \\
&& \quad +  \nr{\pppg x^{-\d_l} D^{\n_l} \Rci_{0,j_0}(z) \pppg x^{-\d}} \nr{z \pppg x ^{-\d}  \Rci_{j_0,n}(z) D^{\n_r} \pppg x^{-\d_r}}.
\end{eqnarray*}
Then we have by Proposition \ref{prop-low-freq-iota}
\[
\nr{\pppg x^{-\d_l} D^{\n_l} \Rci_{0,j_0}(z) \Kcc \Rci_{j_0,n}(z) D^{\n_r} \pppg x^{-\d_r}}\lesssim  \left( 1 + \abs z^{-2(n+1) + s_l + s_r + \Vc_{0,n}+ \anul + \anur} \right).
\]
It is important to notice that if $\anur = 1$ then $b_1 = \dots = b_d = 0$, so that the second term vanishes. Otherwise this estimate would not hold in the case 
$j_0 = n$ and $s_r = \frac {d} 2$.
We proceed similarly for any $p$ if $m_1 = \dots = m_{p-1} = 0$. If $m_l = 1$ for some $j \in \Ii 1 {p-1}$ then we have $j_{l-1} = j_l$ so according to Proposition \ref{prop-low-freq-n0} the corresponding contribution is uniformly bounded. This concludes the proof of Proposition \ref{prop-low-freq-R}.
\end{proof}

\section{Uniform resolvent estimates in the energy space.} \label{sec-res-estim-energy}

In this section we use the estimates on $R(z)$ to prove Theorems \ref{th-inter-freq}, \ref{th-high-freq} and \ref{th-low-freq}. After differentiation in \eqref{res-Ac} we get for all $N \in \N^*$:
\begin{equation} \label{res-Ac-N}
(\Ac-z)^{-N-1} =
\begin{pmatrix} 
R^{(N)}(z) (ia + z) + R^{(N-1)}(z)  &   R^{(N)}(z)\\
R^{(N)}(z) \Ho &  z R^{(N)}(z) + R^{(N-1)}(z)
\end{pmatrix}.
\end{equation}

\subsection{A generalized Hardy inequality}

Our purpose is to prove estimates of the form 
\[
\nr{(\Ac-z)^{-1-N} U_0}_{\Hc^{-\d}} \leq C(\abs z) \nr{U_0}_{\Hc^\d}.
\]
In particular, if $U_0 = (u_0,i u_1)$ then the estimates shoud not depend on the (weighted) $L^2$-norm of $u_0$, but only on the (weighted) $L^2$-norms of its first derivatives. Thus the Hardy inequality can play a crucial role since it somehow turns the multiplication by $\pppg x\inv$ into a differentiation. More generally, when we have a weight better than needed we can use the following result:

\begin{lemma} \label{lem-hardy-gen}
Let $\d \geq 0$ and $\s < \d - 1$. Then there exists $C \geq 0$ such that for $u \in \Sc$ we have 
\[
\nr{\pppg x^\s u}_{L^2} \leq C \nr{\pppg x^\d \nabla u}_{L^2}.
\]
\end{lemma}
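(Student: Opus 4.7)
The plan is to reduce to an integration-by-parts estimate against the radial vector field $V(x) = x \langle x \rangle^{2\sigma}$. First I would handle the case $\sigma > -d/2$, and then recover the general case (including very negative $\sigma$) by monotonicity of $\alpha \mapsto \langle x \rangle^\alpha$: since $d \geq 3$ and $\delta \geq 0$, the interval $(-d/2, \delta-1)$ is non-empty, so for any $\sigma < \delta - 1$ we can pick $\sigma' \in (\max(\sigma,-d/2), \delta-1)$ and use $\langle x \rangle^\sigma \leq \langle x \rangle^{\sigma'}$.

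For $\sigma > -d/2$ I would compute
\[
\nabla \cdot \big( x \langle x \rangle^{2\sigma}\big) = d \langle x \rangle^{2\sigma} + 2\sigma |x|^2 \langle x \rangle^{2\sigma-2} = \langle x \rangle^{2\sigma-2}\big(d + (d+2\sigma)|x|^2\big) \geq c \langle x \rangle^{2\sigma}
\]
for some $c = c(d,\sigma) > 0$, the sign condition on $d+2\sigma$ being precisely $\sigma > -d/2$. Integrating by parts for $u \in \Sc$ (no boundary terms, thanks to rapid decay) gives
\[
c \int \langle x \rangle^{2\sigma} |u|^2 \, dx \leq \int |u|^2 \, \nabla \cdot \big(x \langle x \rangle^{2\sigma}\big) \, dx = -2 \Re \int \langle x \rangle^{2\sigma}\, x \cdot \bar u\, \nabla u \,dx.
\]
Using $|x| \leq \langle x \rangle$ and Cauchy--Schwarz,
\[
c \,\nr{\pppg x^\sigma u}_{L^2}^2 \leq 2 \nr{\pppg x^\sigma u}_{L^2} \nr{\pppg x^{\sigma+1} \nabla u}_{L^2},
\]
from which $\nr{\pppg x^\sigma u}_{L^2} \leq (2/c) \nr{\pppg x^{\sigma+1} \nabla u}_{L^2}$.

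Finally, since $\sigma < \delta -1$ gives $\sigma + 1 \leq \delta$ and $\langle x \rangle \geq 1$, we have $\langle x \rangle^{\sigma+1} \leq \langle x \rangle^{\delta}$ pointwise, which yields the stated inequality. Combined with the reduction step (replacing $\sigma$ by $\sigma'$ when $\sigma \leq -d/2$), this completes the proof. There is no real obstacle here; the only thing to notice is that the admissible range of $\sigma$ in the divergence calculation is exactly $\sigma > -d/2$, which is precisely why one first has to dispose of very negative $\sigma$ by the trivial monotonicity argument.
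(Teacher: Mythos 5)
Your proof is correct, but it follows a genuinely different route from the paper. The paper splits space into two regions: near the origin it invokes the classical Hardy inequality ($\nr{\pppg x^{-1}u}_{L^2}\lesssim\nr{\nabla u}_{L^2}$, valid since $d\geq 3$), and at infinity it writes $u(r\o)=-\int_r^{+\infty}(\o\cdot\nabla)u(s\o)\,ds$ along rays and applies Cauchy--Schwarz in the radial variable, where the strict inequality $\s<\d-1$ is exactly what makes the resulting integral $\int_1^{+\infty}r^{2(\s+1-\d)-1}\,dr$ converge. You instead run a single global integration by parts against the vector field $x\pppg x^{2\s}$, whose divergence is bounded below by $c\pppg x^{2\s}$ precisely when $\s>-d/2$, and then dispose of the remaining range $\s\leq -d/2$ by the monotonicity $\pppg x^{\s}\leq\pppg x^{\s'}$, using $d\geq 3$ and $\d\geq 0$ to guarantee $(-d/2,\d-1)\neq\vide$. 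Both arguments are elementary and both use $d\geq 3$ (the paper through the classical Hardy inequality, you through the reduction step), but they buy slightly different things: your divergence computation actually proves the stronger endpoint estimate $\nr{\pppg x^{\s}u}_{L^2}\lesssim\nr{\pppg x^{\s+1}\nabla u}_{L^2}$ for every $\s>-d/2$ (the classical Hardy inequality being the case $\s=-1$), with the strict hypothesis $\s<\d-1$ only entering through the trivial weight comparison, whereas in the paper's radial argument the strictness is structurally necessary. Minor housekeeping aside (dividing by $\nr{\pppg x^{\s}u}_{L^2}$ requires noting it is finite and the case $u=0$ is trivial, and all integrals converge because $u\in\Sc$), the argument is complete.
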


\begin{proof}
Let $\h \in C_0^\infty(\R_+,[0,1])$ be equal to 1 on [0,1]. According to the usual Hardy inequality we have 
\[
\nr{\h(\abs x) \pppg x^\s u}_{L^2} \lesssim \nr{\pppg x\inv u}_{L^2} \lesssim \nr{\nabla  u}_{L^2} \leq \nr{\pppg x^\d \nabla u}_{L^2}.
\]
On the other hand
\[
\nr{(1-\h)(\abs x) \pppg x^\s u}_{L^2}^2 \leq  \int_{\o \in S^{d-1}}\int_{r=1}^{+\infty} \pppg r^{2\s} \abs{u(r \o)}^2 r^{d-1}  \, dr \, d\o.
\]
For $r \geq 1$ and $\o \in S^{d-1}$ we have
\begin{align*}
\abs{u(r\o)}^2 
& = \abs{\int_{s= r}^{+\infty} (\o \cdot \nabla) u (s \o) s^{\frac {2\d + d - 1} 2} s^{- \frac {2\d + d -1}2} \, ds   }^2\\
& \leq \int_{s = r}^{+\infty} \abs{\nabla u(s\o)}^2 s^{2\d} s^{d-1} \, ds \times \int_{s=r}^{+\infty} s^{-2\d - d + 1} \, ds\\
& \lesssim r^{-2\d +2 - d} \int_{s = r}^{+\infty} \abs{\nabla u(s\o)}^2 \pppg s^{2\d} s^{d-1} \, ds .
\end{align*}
Thus
\begin{align*}
\nr{(1-\h)(\abs x) \pppg x^\s u}_{L^2}^2
&  \lesssim \int_{r= 1}^{+\infty} \pppg r^{2(\s +1 - \d) -1}   \int_{\o \in S^{d-1}}\int_{s = r}^{+\infty} \abs{\nabla u(s\o)}^2 \pppg s^{2\d} s^{d-1} \, ds \, d\o\, dr\\
& \lesssim \nr{\pppg x^\d \nabla u}_{L^2}^2.
\end{align*}
The lemma is proved.
\end{proof}

\subsection{Intermediate and high frequency estimates}

We begin with the proofs of Theorems \ref{th-inter-freq} and \ref{th-high-freq}. The proofs are exactly the same. The only difference is that we need the damping assumption on bounded geodesics to use the estimates of Proposition \ref{prop-high-freq} uniformly in $\abs z \gg 1$.

\begin{proof} [Proof of Theorems \ref{th-inter-freq} and \ref{th-high-freq}]
Let $\g > 0$. Let $(u,v) \in \Sc\times \Sc$. In this proof the symbol $\lesssim$ stands for $\leq C$ for a constant $C$ which does not depend on $u$, $v$ and $z \in \C_+$ with $\g \leq \abs z \leq \g \inv$ (or $\abs z \geq \g$ under the damping assumption on bounded geodesics). According to Proposition \ref{prop-high-freq} we have
\[
 \nr{\pppg x^{-\d} \nabla  R^{(N)}(z) v}  \lesssim  \nr{\pppg x^\d v}_{L^2}
\]
and
\[
\abs z \nr{\pppg x^{-\d} R^{(N)}(z) v} + \nr{\pppg x^{-\d} R^{(N-1)}(z) v}_{L^2} \lesssim  \nr{\pppg x^\d v}_{L^2}.
\]
We now consider the contribution of the lower left coefficent in \eqref{res-Ac-N}. We have 
\begin{equation} \label{lower-left}
\begin{aligned}
\nr{\pppg x^{-\d}  R^{(N)}(z) \Ho u}_{L^2} 
& \lesssim \sum_{j,k = 1}^d \nr{\pppg x^{-\d} R^{(N)}(z) D_j \pppg x^{-\d}}_{\Lc(L^2)} \nr{\pppg x^\d G_{j,k}(x) D_k u}_{L^2}\\
& \quad + \sum_{k = 1}^d \nr{\pppg x^{-\d} R^{(N)}(z)  \pppg x^{-\d}}_{\Lc(L^2)} \nr{\pppg x^\d b_k(x) D_k u}_{L^2} \\
& \lesssim  \nr{\pppg x^\d \nabla u}_{L^2}.
\end{aligned}
\end{equation}
For the upper left coefficent in \eqref{res-Ac} we write 
\[
R(z) (ia + z) = \frac 1 z R(z)\Big( \Ho - \big(\Ho -iaz - z^2\big) \Big) = \frac 1 z R(z) \Ho - \frac 1 z.
\]
Thus
\begin{equation} \label{der-N-high-freq}
\frac {d^N}{dz^N} \big( R(z) (ia + z) \big)  = \sum_{l=0}^N \frac {C_N^l (-1)^{l} l!}{z^{1+l}} R^{(N-l)}(z) \Ho + \frac {(-1)^{N+1} N!}{ z^{1+N}},
\end{equation}
and hence
\begin{eqnarray*}
\lefteqn{\nr{\pppg x^{-\d} \nabla \frac {d^N}{dz^N} \big( R(z) (ia + z) \big) u}} \\
&& \lesssim \sum_{j,k=0}^d \sum_{l=0}^N \abs z^{-1-l} \nr{\pppg x^{-\d} \nabla R^{(N-l)}(z) D_j \pppg x^{-\d}} \nr{\pppg x^\d D_k u} \\
&& \quad + \sum_{k=0}^d \sum_{l=0}^N \abs z^{-1-l} \nr{\pppg x^{-\d} \nabla R^{(N-l)}(z) \pppg x^{-\d}} \nr{\pppg x^\d D_k u} + \abs z^{-1-N} \nr{\pppg x^{-\d} \nabla u}  \\
&& \lesssim \nr{\pppg x^\d \nabla u}.
\end{eqnarray*}
This concludes the proof.
\end{proof}

\subsection{Low frequency estimates}

We now turn to the proof of Theorem \ref{th-low-freq} concerning low frequencies. For the coefficents on the right in \eqref{res-Ac} the proof is as simple as for high frequencies: for $(u,v) \in \Sc \times \Sc$ we have
\begin{equation} \label{estim-upper-right}
 \nr{\pppg x^{-\d} \nabla  R^{(N)}(z) v}  \lesssim  \left( 1 + \abs z^{d-N- 1 -\e} \right)\nr{\pppg x^\d v}_{L^2}
\end{equation}
and
\begin{equation} \label{estim-lower-right}
\abs z \nr{\pppg x^{-\d} R^{(N)}(z) v} + \nr{\pppg x^{-\d} R^{(N-1)}(z) v}_{L^2} \lesssim  \left(1 + \abs z^{d-N- 1 -\e} \right)\nr{\pppg x^\d v}_{L^2}.
\end{equation}

This is not the case for the coefficents on the left. We have to be careful with the lack of derivative in the contribution of $W$ in $\Pg$ (see \eqref{lower-left} for the lower left coefficient). Moreover, we cannot follow the proof of Theorems \ref{th-inter-freq} and \ref{th-high-freq} for the upper left coefficient when $z$ is small because of the powers of $z \inv$ in \eqref{der-N-high-freq}.\\

Let $\Pgg = -\divg G(x) \nabla$ and for $z \in \C_+$: $\Rgg(z) = \big( \Pgg -iza(x) -z^2 \big)\inv$. We have 
\[
\Ac = \Agg + \Wc
\]
where 
\[
\Agg = \begin{pmatrix} 0 & 1 \\ \Pgg & -ia \end{pmatrix} \qandq  \Wc = \sum_{j=1}^d \begin{pmatrix} 0 & 0 \\ b_j D_j & 0 \end{pmatrix} \in \Lc(\Hc).
\]

\begin{proposition} \label{prop-Agg-Ac}
Assume that the estimates of Theorem \ref{th-low-freq} holds with $(\Ac-z)\inv$ replaced by $(\Agg-z)\inv$. Then the same estimates hold for $(\Ac-z)\inv$.
\end{proposition}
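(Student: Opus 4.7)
The approach is to treat $\Ac = \Agg + \Wc$ as a perturbation and use the resolvent identity, combined with induction on $N$. The structural observation that makes this work is that the coefficients $b_j$ of $W$ lie in $C_0^\infty(\R^d)$, so $\Wc$ extends to a bounded operator from $\Hc^{\d_1}$ to $\Hc^{\d_2}$ for every $\d_1, \d_2 \in \R$. In particular $\Wc$ absorbs arbitrary weight mismatches between consecutive factors in any product.

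Differentiating the first resolvent identity $(\Ac-z)^{-1} = (\Agg-z)^{-1} - (\Agg-z)^{-1} \Wc (\Ac-z)^{-1}$ at order $N$ and isolating the $k=0$ contribution yields
\[
\big(I + (\Agg-z)^{-1} \Wc\big) (\Ac-z)^{-1-N} = (\Agg-z)^{-1-N} - \sum_{k=1}^N (\Agg-z)^{-1-k} \Wc (\Ac-z)^{-1-(N-k)}.
\]
Argue by induction on $N$. For $k \geq 1$, the factor $(\Ac-z)^{-1-(N-k)}$ is controlled by the inductive hypothesis, with size $1 + |z|^{d-1-(N-k)-\e}$ in $\Lc(\Hc^{\d_1}, \Hc^{-\d_1})$ for any $\d_1 > N-k+\frac{1}{2}$; the factor $(\Agg-z)^{-1-k}$ is controlled by the hypothesis of the proposition with size $1 + |z|^{d-1-k-\e}$; the intermediate $\Wc$ absorbs the weights and renders the composition a bounded operator in $\Lc(\Hc^\d, \Hc^{-\d})$. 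A direct exponent book-keeping (using $d \geq 3$ and $\e$ small) shows that the product of the two $|z|$-dependent factors is $\lesssim 1 + |z|^{d-1-N-\e}$ for $z \in \C_+$ close to $0$; the term $(\Agg-z)^{-1-N}$ is bounded by hypothesis with the same size.

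It then remains to invert $I + (\Agg-z)^{-1}\Wc$ on $\Hc^{-\d}$ with a uniform bound as $z \to 0$. The operator $(\Agg-z)^{-1}\Wc$ is compact on $\Hc^{-\d}$: $\Wc$ has compactly supported coefficients and $(\Agg-z)^{-1}$ gains smoothness, so a Rellich--Kondrachov argument applies. Hence $I + (\Agg-z)^{-1}\Wc$ is Fredholm of index zero. Injectivity follows from the maximal dissipativity of $\Ac$: if $(I + (\Agg-z)^{-1}\Wc) f = 0$, then $f \in \Dom(\Agg)$ and $(\Ac - z) f = 0$, forcing $f = 0$ for $z \in \C_+$. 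Absorbing the bounded inverse on the left then delivers the target estimate for $(\Ac-z)^{-1-N}$.

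The main obstacle is the uniform boundedness of $\big(I + (\Agg-z)^{-1}\Wc\big)^{-1}$ as $z \to 0$. This rests on continuity of the Fredholm family $z \mapsto I + (\Agg-z)^{-1}\Wc$ in the operator topology on $\Hc^{-\d}$ (provided by the limiting absorption principle for $\Agg$ that is built into the hypothesis) together with invertibility at the boundary value $z = 0$, which is established by the same Fredholm plus injectivity argument applied to the boundary resolvent of $\Agg$.
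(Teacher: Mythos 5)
Your algebraic reduction is fine up to the last step, and your observation that $\Wc$ maps $\Hc^{\d_1}$ to $\Hc^{\d_2}$ for all $\d_1,\d_2$ is correct and is indeed the structural point. The genuine gap is the inversion of $I + (\Agg-z)\inv\Wc$ on $\Hc^{-\d}$ \emph{uniformly} for $z \in \Uc\cap\C_+$ down to $z=0$. Fredholm theory plus injectivity at each fixed $z$ with $\Im z>0$ gives nothing uniform; to get a uniform bound you would need (a) norm-continuity of $z\mapsto(\Agg-z)\inv\Wc$ in $\Lc(\Hc^{-\d})$ up to the real axis, (b) invertibility of $I+(\Agg-\t-i0)\inv\Wc$ for all small real $\t$, and (c) control at the threshold $\t=0$. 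None of these follows from the hypothesis of the proposition, which only provides uniform bounds for the powers of $(\Agg-z)\inv$, not boundary values or their continuity. Moreover your injectivity argument fails on the boundary: for real $\t$ a nonzero solution of $(I+(\Agg-\t-i0)\inv\Wc)f=0$ is an outgoing resonant state of $\Ac$ living in $\Hc^{-\d}$, not in $\Hc$, and maximal dissipativity of $\Ac$ on $\Hc$ does not exclude it. In effect, the uniform invertibility you need is essentially equivalent to the $N=0$ conclusion for $\Ac$, so the scheme is circular at the crucial point.

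This machinery is also unnecessary. The paper uses \emph{both} resolvent identities, $(\Ac-z)\inv=(\Agg-z)\inv-(\Agg-z)\inv\Wc(\Ac-z)\inv$ and $(\Ac-z)\inv=(\Agg-z)\inv-(\Ac-z)\inv\Wc(\Agg-z)\inv$, to get the closed expansion
\begin{equation*}
(\Ac-z)^{-1-N}=(\Agg-z)^{-1-N}-\sum_{k=0}^{N}(\Agg-z)^{-1-k}\Wc(\Agg-z)^{-1-N+k}+\sum_{j+k\leq N}(\Agg-z)^{-1-k}\Wc(\Ac-z)^{-1-N+k+j}\Wc(\Agg-z)^{-1-j},
\end{equation*}
in which the resolvent of $\Ac$ only appears sandwiched between two copies of $\Wc$. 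Since the $b_m$ are compactly supported, $\Wc(\Ac-z)^{-1-M}\Wc$ acts as $\sum_{m,l} b_m D_m R^{(M)}(z)\, b_l D_l$ on the first component, the factors $b_m,b_l$ absorb arbitrarily strong weights, and the scalar low frequency estimates of Theorem \ref{th-low-freq-R} (with one derivative on the left and none on the right) give directly $\nr{\Wc(\Ac-z)^{-1-M}\Wc}_{\Lc(\Hc^{-\d},\Hc^{\d})}\lesssim 1+\abs z^{d-1-M-\e}$. No inversion, no Fredholm argument and no induction on $N$ are needed; if you replace your one-sided iteration by this two-sided one, your exponent bookkeeping goes through and the proof closes.
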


\begin{proof}
For $z\in \C_+$ we have the resolvent identities
\begin{align*}
(\Ac -z)\inv
&= (\Agg -z)\inv -  (\Agg -z)\inv \Wc (\Ac -z)\inv\\ 
&= (\Agg -z)\inv -  (\Ac -z)\inv \Wc (\Agg -z)\inv.
\end{align*}
Then we can check that
\begin{align*}
(\Ac-z)^{-1-N}
& = (\Agg-z)^{-1-N} - \sum_{k=0}^{N} (\Agg-z)^{-1-k} \Wc (\Agg-z)^{-1-N+k}\\
& + \sum_{j+k \leq N} (\Agg-z)^{-1-k} \Wc (\Ac-z)^{-1-N+k +j} \Wc (\Agg-z)^{-1-j}.
\end{align*}
Let $j,k \in \N$ be such that $j + k \leq N$. For $(u,v) \in \Sc^2$ we have 
\[
\Wc (\Ac-z)^{-1-N+k +j} \Wc  \begin{pmatrix} u \\ v \end{pmatrix} = \sum_{m,l = 1}^d \begin{pmatrix} 0 \\ b_m D_m R^{(N-k-j)}(z) b_l D_l u \end{pmatrix}
\]
In particular
\[
\nr{\Wc (\Ac-z)^{-1-N+k +j} \Wc}_{\Lc(\Hc^{-\d},\Hc^\d)} \lesssim 1 + \abs {z}^{d-1-N +j +k -\e}.
\]
With the estimates on the powers of $(\Agg-z)\inv$ we obtain the estimates of Theorem \ref{th-low-freq}.
\end{proof}

Thus, Theorem \ref{th-low-freq} is a consequence of the following result:

\begin{proposition} \label{prop-res-Agg}
The estimates of Theorem \ref{th-low-freq} hold with $(\Ac-z)\inv$ replaced by $(\Agg-z)\inv$.
\end{proposition}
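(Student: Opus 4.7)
My plan is to expand $(\Agg - z)^{-1-N}$ using the matrix formula analogous to \eqref{res-Ac-N} (with $R(z)$ replaced by $\Rgg(z) = (\Pgg - iza - z^2)\inv$ and $\Ho$ replaced by $\Pgg$), and estimate each matrix entry separately on $(u_0, v_0) \in \Hc^\d$ with $\d > N + \frac 12$. Writing $(u_N,v_N) = (\Agg-z)^{-1-N}(u_0,v_0)$, the target is the bound $\nr{\pppg x^{-\d}\nabla u_N}_{L^2} + \nr{\pppg x^{-\d} v_N}_{L^2} \lesssim (1 + \abs z^{d-N-1-\e}) \nr{(u_0,v_0)}_{\Hc^\d}$.

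The contributions coming from $v_0$ (right column of the matrix) are controlled by the estimates \eqref{estim-upper-right} and \eqref{estim-lower-right}, since $\pppg x^\d v_0 \in L^2$ and Theorem \ref{th-low-freq-R} applies directly. For the lower-left entry $\Rgg^{(N)}(z) \Pgg u_0$, the point is that $\Pgg = \sum_{j,k} D_j G_{j,k}(x) D_k$ has no $W$-term (we work with $\Agg$ rather than $\Ac$), so one derivative can be moved: $\Rgg^{(N)}(z)\Pgg u_0 = \sum_{j,k} \Rgg^{(N)}(z) D_j (G_{j,k} D_k u_0)$. Theorem \ref{th-low-freq-R}(i) with $\anul = 0$, $\anur = 1$ (admissible precisely because $b_j \equiv 0$ for $\Pgg$) together with $G_{j,k} D_k u_0 \in L^{2,\d}$ yields the desired bound.

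The real work concerns the upper-left entry, whose $N$-th derivative is $\Rgg^{(N)}(z)(ia+z) + N \Rgg^{(N-1)}(z)$, applied to $u_0$ for which we only know $\pppg x^\d \nabla u_0 \in L^2$. I split $\nabla \bigl[\Rgg^{(N)}(z)(ia+z) + N\Rgg^{(N-1)}(z)\bigr] u_0$ into three pieces. First, for $\nabla \Rgg^{(N)}(z) \cdot ia u_0$, the short-range decay of $a$ combined with Lemma \ref{lem-hardy-gen} yields $\nr{\pppg x^\d a u_0}_{L^2} \lesssim \nr{\pppg x^{\d-1-\rho} u_0}_{L^2} \lesssim \nr{\pppg x^\d \nabla u_0}_{L^2}$, after which Theorem \ref{th-low-freq-R}(i) with $\anul = 1$ closes this piece. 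Second, for the combination $\nabla \bigl[z\Rgg^{(N)}(z) + N\Rgg^{(N-1)}(z)\bigr] u_0$, I expand both $\Rgg^{(N)}(z)$ and $\Rgg^{(N-1)}(z)$ via Proposition \ref{prop-der-R2} into monomials $z^\o \Rgg(z) a^{\n_1} \Rgg(z) \cdots a^{\n_n} \Rgg(z)$. Contributions with $\Vc \geq 1$ fall under Corollary \ref{cor-sigma-r-un}, whose key feature is the weaker weight $\d - 1$ on both sides, which matches exactly $\pppg x^{\d-1-\e'} u_0 \in L^2$ coming from Hardy's inequality.

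The main obstacle will be the pure terms $\nabla z^\o \Rgg(z)^{n+1} u_0$ (no factor of $a$, $\Vc = 0$), to which Corollary \ref{cor-sigma-r-un} does not apply. For these I would use the commutator identity
\[
\nabla \Rgg(z) = \Rgg(z) \nabla - \Rgg(z) \bigl([\nabla,\Pgg] - iz[\nabla,a]\bigr) \Rgg(z),
\]
where $[\nabla, \Pgg] = -i \sum_{j,k} D_j (\nabla G_{j,k}) D_k$: the transferred $\nabla$ eventually meets $u_0$ as $\nabla u_0 \in L^{2,\d}$, while the commutator term $[\nabla, \Pgg]$ introduces a short-range inserted factor $\nabla G_{j,k} \in \symb^{-\rho-1}$ that plays the role of an additional $a$, reducing matters to the $\Vc \geq 1$ regime already handled, and the $iz[\nabla,a]$ term carries an explicit factor of $z$ that absorbs any loss. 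The structural reason the hypothesis $\d > N + \frac 12$ suffices (rather than the $\d > N + \frac 32$ one would naively get from applying Hardy twice) is the heuristic of Remark \ref{rem-weight}: in every term of the decomposition, each bare $\Rgg(z)$ is either adjacent to a derivative (to which Theorem \ref{th-low-freq-R}(i) applies) or multiplied by $\abs z$ (to which Theorem \ref{th-low-freq-R}(ii) with $\s = 0$ applies).
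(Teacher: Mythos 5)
Your treatment of the right column, of the lower-left entry (moving one derivative to the right, admissible because $b_1=\dots=b_d=0$ for $\Pgg$), and of the $ia$-part of the upper-left entry coincides with the paper's, and you correctly single out the pure terms $z^{\o}\Rgg(z)^{n+1}$ (those with $\Vc=0$) as the crux. The gap is in your commutator treatment of precisely these terms. For $N\geq1$ (where $n=N$, $\o=N$) the paper does not commute $\nabla$ through the resolvents: it uses the identity $z^{N}\Rgg^{N+1}(z)(ia+z)u=z^{N-1}\Rgg^{N+1}(z)\bigl(\Pgg-(\Pgg-iza-z^2)\bigr)u=z^{N-1}\Rgg^{N+1}(z)\Pgg u-z^{N-1}\Rgg^{N}(z)u$, so that a derivative of $u$ appears via $\Pgg u$. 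Your commutator remainders are instead products of $N+2$ resolvents with an inserted factor $D_m(\partial_j G_{m,m'})D_{m'}$, still ending on the undifferentiated $u_0$. Corollary \ref{cor-sigma-r-un} does not cover them (it requires $n=N$ and inserted multiplication operators $a^{\n_i}$, with no derivatives between the resolvents), and if you split the product at the inserted factor, the two blocks require weights at the splitting point summing to more than $N+1$ while $\partial G$ only supplies $1+\rho$, and Hardy on the right only gives exponents $\s<\d-1$; so the estimate does not close under the hypothesis $\d>N+\frac12$.

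The $N=0$ case is where the paper spends most of its effort, and here your argument is circular. The main term $\Rgg(z)\,z\,D_ju_0$ is fine (Theorem \ref{th-low-freq-R}(ii) with $\s=0$), but the commutator remainder $\Rgg(z)D_k(\partial_jG_{k,l})D_l\Rgg(z)\,z\,u_0$ contains the very quantity you are trying to bound, namely $\pppg x^{-w}D_l\Rgg(z)z u_0$ with a weight $w\leq 1+\rho-\d$ comparable to $\d$, and the constant in front is in no way small; iterating merely exchanges the two weights. The claim that this "reduces to the $\Vc\geq1$ regime already handled" is not justified: for $N=0$ the corollary is vacuous, Proposition \ref{prop-low-freq-R} admits neither derivatives adjacent to inserted factors nor an operator bound with the growing right weight that $u_0$ (controlled only through $\pppg x^{\s}u_0$, $\s<\d-1<0$) would require, and the same objection applies to your $iz[\nabla,a]$ term, whose extra power of $z$ does not compensate the missing weight on $u_0$. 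The paper closes the loop by first replacing $G$ with the truncated metric $G_\y=\h_\y I_d+(1-\h_\y)G$, so that $D_jG_{\y,k,l}$ carries a small factor $\y^{\rho/2}$ against the weight $\pppg x^{-\frac12-\frac\rho4}$: this smallness allows the commutator term to be absorbed into the left-hand side. The compactly supported part $\h_\y(G-I_d)$ and then the damping are reinstated afterwards by resolvent identities, writing $z^2u=\Pgg u-(\Pgg-z^2)u$ to avoid a bare $u$. Some smallness/absorption mechanism of this kind (or the paper's $\Pgg$-substitution trick) is missing from your plan, and without it the key step of the proposition is not established.
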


\begin{proof} 
\stepp Estimates \eqref{estim-upper-right} and \eqref{estim-lower-right} hold for $\Rgg$. For the lower left coefficient we can use the estimates of Theorem \ref{th-low-freq-R}. For $u \in \Sc$ we have 
\begin{align*}
\nr{\pppg x^{-\d}  \Rgg^{(N)}(z) \Pgg u}_{L^2}
& \lesssim \sum_{j,k = 1}^d \nr{\pppg x^{-\d} \Rgg^{(N)}(z) D_j \pppg x^{-\d}}_{\Lc(L^2)} \nr{\pppg x^\d D_k u}_{L^2}\\
& \lesssim  \left(1 + \abs z^{d-N- 1 -\e} \right) \nr{\pppg x^\d \nabla u}_{L^2}.
\end{align*}

\stepp We now estimate the upper left coefficient for $N \geq 1$. Let 
\[
\s \in  \left] N - \frac 12 , \d-1 \right[.
\]
We estimate separately the two terms which appear in \eqref{res-Ac-N}. For the second term we use Lemma \ref{lem-hardy-gen}:
\begin{align*}
\nr{\pppg x^{-\d}   \nabla \Rgg^{(N-1)}(z)  u}_{L^2}
& \lesssim \nr{\pppg x^{-\s} \nabla \Rgg^{(N-1)}(z) \pppg x^{-\s}}_{\Lc(L^2)} \nr{\pppg x^\s u}_{L^2}\\
& \lesssim \left(1 + \abs z^{d-(N-1)- 1 -\e} \right) \nr{\pppg x^\d \nabla u}_{\Lc(L^2)}.
\end{align*}
We now turn to the contribution of $\Rgg^{(N)}(z) (ia+z)$. Let $j\in\Ii 1 d$ and $T(z)$ be a term of the decomposition of $\Rgg^{(N)}(z)$ given by Proposition \ref{prop-der-R2}. 
If $n < N$ then $\s > n + \frac 12$, so by Proposition \ref{prop-low-freq-R} we have
\[
\nr{\pppg x^{-\s} D_j T(z) (ia+z) u} \lesssim \left(1 + \abs z^{d- N - 1 -\e} \right) \nr{\pppg x^\s u},
\]
and we can conclude with Lemma \ref{lem-hardy-gen}.
If $n = N$ and $\Vc = 0$ then we have $\o = N$. In this case we can write 
\begin{align*}
\nr{\pppg x^{-\d} D_j T(z) (ia+z) u}
& = \abs z^{N} \nr{\pppg x^{-\d} D_j \Rgg^{N+1} (z) (ia + z) u}\\
& = \abs{z}^{N-1} \nr{\pppg x^{-\d} D_j \Rgg^{N+1} (z) \big( \Pgg - (\Pgg -iza(x) - z^2) \big) u}\\
& \lesssim \sum_{l,k} \abs{z}^{N-1} \nr{\pppg x^{-\d} D_j \Rgg^{N+1} (z) D_l \pppg x^{-\d}} \nr{\pppg x^\d D_k u}\\
& \qquad  + \abs{z}^{N-1} \nr{\pppg x^\d D_j \Rgg^{N}(z) \pppg x^{-\s}} \nr{\pppg x^\s u}\\
& \lesssim   \left(1 + \abs z^{d-N- 1 -\e} \right) \nr{\pppg x^\d \nabla u}_{L^2}.
\end{align*}
Using the decay of $a$ we have in any case
\[
\nr{\pppg x^{-\d} D_j T(z) a u} \lesssim \left(1 + \abs z^{d- N - 1 -\e} \right) \nr{\pppg x^\s u} \lesssim  \left(1 + \abs z^{d- N - 1 -\e} \right) \nr{\pppg x^\d \nabla u},
\]
so it remains to estimate $z \pppg x^{-\d} D_j T(z) u$ when $n = N$ and $\Vc \geq 1$. For this use Corollary \ref{cor-sigma-r-un}:
\begin{equation} \label{estim-sigma-r-un}
\abs z \nr{\pppg x^{-\s} D_j T(z) u} \lesssim \abs z \left( 1 +\abs z^{N-2} + \abs z ^{d-N-1-\e} \right) \nr{\pppg x^\s u}.
\end{equation}
We conclude again with Lemma \ref{lem-hardy-gen}.

\stepp We now estimate the contribution of the upper left coefficient in \eqref{res-Ac} when $N = 0$. More precisely we have to prove that for $\d > \frac 12$ we have
\begin{equation} \label{estim-N-nul}
\nr{\pppg x^{-\d} D_j \Rgg(z) (ia+z)u} \lesssim \nr{\pppg x^\d \nabla u}.
\end{equation}
Without loss of generality we can assume that $\d < \frac 12 + \frac \rho 4$. Then by Lemma \ref{lem-hardy-gen} we have
\begin{equation} \label{estim-DjRGa}
\nr{\pppg x^{-\d} D_j \Rgg (z) a  u} \lesssim \nr{\pppg x^{-\d} D_j \Rgg(z) \pppg x^{-\d}} \nr{\pppg x^{-\d}  u} \lesssim \nr{\pppg x^\d \nabla u} .
\end{equation}
It remains to estimate $\pppg x^{-\d} D_j \Rgg (z) z  u$. For $\y \in ]0,1]$ we set $R^0_G(z) = (\Pgg - z^2)\inv$ and $R_{0,\y}(z) = (\Pii - z^2)\inv$, where $\Pii$ is as in \eqref{def-Pii-Pcc}. We have 
\[
[D_j, R_{0,\y}(z)] = \sum_{k,l=1}^d  R_{0,\y}(z) D_k \big(D_j G_{\y,k,l}(x) \big) D_l R_{0,\y}(z)
\]
so
\begin{eqnarray*}
\lefteqn{\nr{\pppg x^{-\d} D_j R_{0,\y}(z)z u}}\\
&& \leq \nr{\pppg x^{-\d} R_{0,\y}(z) z D_j  u} +  \sum_{k,l=1}^d \nr{\pppg x^{-\d} R_{0,\y}(z) D_k \big(D_j G_{\y,k,l}(x) \big) D_l R_{0,\y}(z) z u} \\
&& \lesssim \nr{\pppg x^{-\d} R_{0,\y}(z) z \pppg x^{-\d}} \nr{\pppg x^\d  D_j  u}\\
&& \qquad  + \y^{\frac \rho 2} \sup_{k,l} \nr{\pppg x^{-\d} R_{0,\y} D_k \pppg x^{-\frac 12 - \frac \rho 4} } \nr{\pppg x^{- \frac 12 - \frac \rho 4} D_l R_{0,\y}(z) z u}
\end{eqnarray*}
Then according to the second statement in Theorem \ref{th-low-freq-R} (applied with $\s = 0$ and $a= 0$) there exists $C \geq 0$ such that for $\y > 0$ small enough we have 
\[
\sum_{j} \nr{\pppg x^{-\d} D_j R_{0,\y}(z)z u} \leq C \nr{\pppg x^\d \nabla u} + \frac 12 \sum_{j} \nr{\pppg x^{-\d} D_j R_{0,\y}(z)z u},
\]
and hence 
\[
\sum_{j} \nr{\pppg x^{-\d} D_j R_{0,\y}(z)z u} \lesssim \nr{\pppg x^\d \nabla u}.
\]
Now $\y > 0$ is fixed. Let $G_0(x) = \h_\y(x) (G(x) - I_d)$. We have the resolvent identity
\[
R^0_G(z)(z) = R_{0,\y}(z) + \sum_{k,l} R^0_G(z)(z) D_k G_{0,k,l}(x) D_l R_{0,\y}(z),
\]
so 
\begin{eqnarray*}
\lefteqn{\nr{\pppg x^{-\d} D_j R^0_G(z)(z)z u}}\\
&& \leq \nr{\pppg x^{-\d} D_j R_{0,\y}(z)z u} + \sum_{k,l} \nr{\pppg x^{-\d} D_j R^0_G(z)(z) D_k G_{0,k,l}(x) D_l R_{\y,0}(z)z u}\\
&& \lesssim \nr{\pppg x^\d \nabla u} +  \nr{\pppg x^{-\d} D_j R^0_G(z)(z) D_k \pppg x^{-\d}}\nr{\pppg x^{-\d} D_l R_{\y,0}(z)z u}\\
&& \lesssim \nr{\pppg x^\d \nabla u}.
\end{eqnarray*}
It remains to add the dissipative part. Again, we use the corresponding resolvent identity
\[
\Rgg(z) = R^0_G(z)(z) + iz \Rgg(z) a(x) R^0_G(z)(z)
\]
to write 
\begin{eqnarray*}
\lefteqn{\nr{\pppg x^{-\d} D_j \Rgg(z) z u} \leq \nr{\pppg x^{-\d} D_j R^0_G(z)(z)z u} + \nr{\pppg x^{-\d} D_j \Rgg(z) a R^0_G(z)(z) z^2 u}}\\
&& \lesssim \nr{\pppg x^\d \nabla u} + \nr{\pppg x^{-\d} D_j \Rgg(z) a R^0_G(z)(z) \big( \Pgg - (\Pgg-z^2)\big) u}\\
&& \lesssim \nr{\pppg x^\d \nabla u} + \sum_{k,l} \nr{\pppg x^{-\d} D_j \Rgg(z) \pppg x^{-\d}} \nr{\pppg x^{-\d} R^0_G(z)(z) D_k \pppg x^{-\d} } \nr{\pppg x^\d D_l u}\\
&& \qquad  + \nr{\pppg x^{-\d} D_j \Rgg(z) a u}\\
&& \lesssim \nr{\pppg x^\d \nabla u} .
\end{eqnarray*}
We have used \eqref{estim-DjRGa}. This concludes the proof of \eqref{estim-N-nul} and hence the proof of the proposition.
\end{proof}

\section{Local energy decay in the energy space.} \label{sec-loc-decay}

In this section we use the resolvent estimates of Theorems \ref{th-inter-freq}, \ref{th-high-freq} and \ref{th-low-freq} to prove Theorem \ref{th-loc-decay-improved}. We begin with a lemma about the propagation for finite times. It relies on the propagation at finite speed for the wave equation:

\begin{lemma} \label{lem-tps-fini}
Let $\d \geq 0$ and $T > 0$. Then there exists $C_T \geq 0$ such that for all $t \in [0,T]$ and $U_0 \in \Sc \times \Sc$ we have 
\[
\nr{e^{-it\Ac}U_0}_{\Hc^\d} \leq C_T \nr{U_0}_{\Hc^\d}.
\]
\end{lemma}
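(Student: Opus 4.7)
The plan is to run a Gronwall-type argument on a weighted energy. Fix $U_0 = (u_0, i u_1) \in \Sc \times \Sc$ and let $u \in C^\infty(\R_+ \times \R^d)$ denote the corresponding solution of \eqref{wave-lap}; the regularity assumed on $U_0$ will make all manipulations below legitimate. I would introduce the weighted energy
\[
E_\d(t) := \int_{\R^d} \pppg x^{2\d} w(x) \sum_{j,k} G_{jk}(x) (\partial_k u)(t,x) \overline{(\partial_j u)(t,x)} \, dx + \int_{\R^d} \pppg x^{2\d} w(x) \abs{\partial_t u(t,x)}^2 \, dx.
\]
Because $G$ and $w$ are bounded above and below by positive constants, $E_\d(t)$ is equivalent to $\nr{(u(t), i\partial_t u(t))}^2_{\Hc^\d}$, so it suffices to prove $E_\d(t) \leq C_T E_\d(0)$ for $t \in [0,T]$.

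The next step is to compute $E_\d'(t)$. Differentiating under the integral sign, substituting $\partial_t^2 u = -\Ho u - a \partial_t u$, and using that $\Ho$ is self-adjoint on $L^2_w$ with associated quadratic form $q(u,v) = \sum_{j,k} \int w G_{jk} (\partial_k u)(\overline{\partial_j v}) \, dx$, one can integrate by parts against the multiplier $\pppg x^{2\d} \partial_t u$. The symmetry $G_{jk} = G_{kj}$ makes the two mixed terms arising on the one hand from $\partial_t$ of $E_\d$ and on the other hand from the symmetric form of $\Ho$ equal, so that they cancel. What survives is the identity
\[
E_\d'(t) = -2\Re \int w \, (G \nabla u) \cdot (\nabla \pppg x^{2\d}) \, \overline{\partial_t u} \, dx - 2 \int \pppg x^{2\d} w \, a(x) \, \abs{\partial_t u}^2 \, dx.
\]
The second term is non-positive by the sign of $a$ and can be discarded.

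For the first term, I would use $\abs{\nabla \pppg x^{2\d}} \leq 2\d \pppg x^{2\d - 1}$, the uniform bounds on $w$ and $G$, and Cauchy--Schwarz:
\[
\abs{E_\d'(t)} \leq C_\d \int \pppg x^{2\d - 1} \abs{\nabla u} \abs{\partial_t u} \, dx \leq C_\d \nr{\pppg x^\d \nabla u}_{L^2} \nr{\pppg x^\d \partial_t u}_{L^2} \leq C_\d E_\d(t).
\]
Applying Gronwall's inequality on $[0,T]$ gives $E_\d(t) \leq e^{C_\d T} E_\d(0)$, which by the norm equivalence yields the desired $\Hc^\d$ bound.

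The only delicate point in this plan is the algebraic cancellation in the identity for $E_\d'(t)$, which relies crucially on the symmetry of the tensor $G$ and on the time-independence of the weight $\pppg x^{2\d} w$. Once that cancellation is observed, the remaining estimate is routine; the damping term is only helpful, and the fact that we work on a bounded interval $[0,T]$ is precisely what lets a rough Gronwall bound conclude the argument without any spectral information about $\Ac$.
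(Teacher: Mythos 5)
Your proof is correct, but it takes a genuinely different route from the paper. The paper argues via finite speed of propagation: a flux computation shows that $s \mapsto \nr{U(t-s)}^2_{\Hc(r_1-s,\,r_2+s)}$ is nondecreasing, hence $\nr{U(t)}_{\Hc(r_1,r_2)} \leq \nr{U_0}_{\Hc(r_1-t,\,r_2+t)}$, and the weight $\pppg x^{2\d}$ is then transferred by summing over unit annuli, using that $\pppg{n+T+1}^{2\d}/\pppg{n}^{2\d}$ is bounded uniformly in $n$ for fixed $T$. You instead differentiate a weighted energy directly and absorb the commutator term coming from $\nabla \pppg x^{2\d}$, which loses one power of $\pppg x$ and is therefore dominated by the weight itself; Gronwall then closes the estimate. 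Your key algebraic points are sound: the cross terms cancel by symmetry of $G$, and identifying the form of $\Ho$ on $L^2_w$ with the pure Dirichlet form $\sum_{j,k}\int w\,G_{jk}\,\partial_k u\,\overline{\partial_j v}\,dx$ correctly disposes of the first-order term $W$ (which is only an artifact of writing $-\D_g$ in divergence form on the unweighted $L^2$). What each approach buys: yours is shorter and uses no causality input, but gives a constant $e^{C_\d T}$, exponential in $T$, and leaves one justification implicit — for Schwartz (not compactly supported) data you should first know that $E_\d(t)<\infty$ and that no boundary terms appear at infinity; the routine fix is to run the same computation with a truncated weight (e.g. $\min(\pppg x, R)^{2\d}$), for which the commutator bound is uniform in $R$, and let $R\to\infty$. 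The paper's route yields a constant that grows only polynomially in $T$ and reflects the causal structure announced before the lemma; for the application in Section 5 (a fixed finite $T$) either constant is equally serviceable.
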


\begin{proof}
For $\d, r_1,r_2 \in \R$ and $u,v \in \Sc$ we set 
\[
\nr{(u,v)}_{\Hc^\d(r_1,r_2)}^2 = \int_{r_1 \leq \abs x \leq r_2} \pppg x^{2\d} \big( \abs{\nabla u}^2 + \abs v^2 \big) \, dx.
\]
We also write $\Hc(r_1,r_2)$ for $\Hc^0(r_1,r_2)$. Let $u$ be a solution of \eqref{wave-lap} and $U : t \mapsto (u(t),i\partial_t u(t))$. Let $r_1,r_2 \in \R_+$ with $r_1 \leq r_2$. For $t \geq 0$ and $s \in [0,t]$ we have
\begin{eqnarray*}
\lefteqn{\frac d {ds} \nr{U(t-s)}_{\Hc(r_1 - s,r_2 + s)}^2}\\
&& = \int_{\abs x = r_2+s} \big( \abs {\nabla u(t-s)}^ 2 + \abs {\partial_t u(t-s)}^2 \big) + \int_{\abs x = r_1 -s} \big( \abs {\nabla u(t-s)}^ 2 + \abs {\partial_t u(t-s)}^2 \big) \\
&& \quad - 2 \Re\int_{r_1-s \leq \abs x \leq r_2 + s} \big(  \nabla u(t-s) \cdot \partial_t\nabla \bar u(t-s) + \partial_t^2  u (t-s) \, \partial_t \bar u(t-s) \big)\\
&& = \int_{\abs x = r_2 + s} \big(\abs {\nabla u(t-s)}^ 2 + \abs {\partial_t u(t-s)}^2  -  2 \Re \partial_r u(t-s) \, \partial _t \bar u(t-s)  \big)\\
&& \quad + \int_{\abs x = r_1 - s} \big(\abs {\nabla u(t-s)}^ 2 + \abs {\partial_t u(t-s)}^2  +  2 \Re \partial_r u(t-s) \, \partial _t \bar u(t-s)  \big)\\
&& \quad + 2 \int_{r_1 - s \leq \abs x \leq  r_2 + s} a \abs{\partial_t u(t-s)}^2 \\
&& \geq 0.
\end{eqnarray*}
We have denoted by $\partial_r u$ the radial derivative of $u$ with respect to the spacial variable. Moreover the integrals over $\singl{\abs x = r_1-s}$ vanish when $s \geq r_1$.
This proves that 
\[
\nr{U(t)}_{\Hc(r_1,r_2)} \leq \nr{U(0)}_{\Hc(r_1-t,r_2 + t)}.
\]

\stepp 
Let $U_0 \in \Sc \times \Sc$. For $t \in [0,T]$ we have 
\begin{align*}
\nr{e^{-it\Ac}U_0}^2_{\Hc^\d} 
& \leq  \pppg T ^ {2\d} \nr{e^{-it\Ac}U_0}^2_{\Hc(0,T)} + \sum_{n \in \N} \pppg {n+T+1}^{2\d} \nr{e^{-it\Ac}U_0}^2_{\Hc(T+n,T+n+1)}\\
& \leq  \pppg T ^ {2\d} \nr{U_0}^2_{\Hc(0,2T)} + \sum_{n \in \N} \pppg {n+T+1}^{2\d} \nr{U_0}^2_{\Hc(n,2T+n+1)}\\
& \leq  \pppg T ^ {2\d} \nr{U_0}^2_{\Hc^\d} + \sum_{n \in \N} \frac {\pppg {n+T+1}^{2\d}}{\pppg n^{2\d}} \nr{U_0}^2_{\Hc^\d(n,2T+n+1)}\\
& \leq  \pppg T ^ {2\d} \nr{U_0}^2_{\Hc^\d} + (2T+2) \nr{U_0}^2_{\Hc^\d} \sup_{n\in\N} \frac {\pppg {n+T+1}^{2\d}}{\pppg n^{2\d}} \\
& \lesssim \nr{U_0}_{\Hc^\d}.
\end{align*}
This concludes the proof.
\end{proof}

Now we can prove Theorem \ref{th-loc-decay-improved}:

\begin{proof} [Proof of Theorem \ref{th-loc-decay-improved}]
Let $U_0 \in \Sc \times \Sc$. We denote by $U(t)$ the solution of \eqref{wave-A}. Let $\h \in C^\infty(\R, [0,1])$ be equal to 0 on $]-\infty,1[$ and equal to 1 on $]2,+\infty[$.

\stepp Let $z \in \C_+$. We multiply \eqref{wave-A} by $e^{itz} \h(t)$ and take the integral over $\R$. After a partial integration we get
\begin{equation} \label{eq-Fourier-U-V}
\int_\R \h(t) e^{itz} U(t) \, dt = (\Ac-z)\inv V(z), 
\end{equation}
where we have set 
\[
V(z) = -i \int_\R \h'(t) e^{itz} U(t) \, dt = -i \int_1^2 \h'(t) e^{itz} U(t) \, dt.
\]

\stepp Let $\m > 0$. The map $t \mapsto e^{-t\m} \h(t) U(t)$ and its inverse Fourier transform 
\[
\t \mapsto {(\Ac - (\t+i\m))\inv} {V(\t +i\m)}
\]
belong to $\Sc(\R)$ so we can inverse \eqref{eq-Fourier-U-V}:
\[
\h(t) e^{-t\m} U(t) = \frac 1 {2\pi} \int_\R e^{-it\t}  \big(\Ac-(\t+i\m)\big) \inv V(\t + i\m) \, d\t.
\]
Let $\h_0 \in C_0^\infty(\R,[0,1])$ be equal to 1 on a neighborhood of 0. Let $\h_1 = 1-\h_0$. We can write 
\begin{equation} \label{dec-U0U1}
\h(t) e^{-t\m} U(t) = \frac 1 {2\pi} \big( U_{0,\m} (t) + U_{1,\m} (t) \big)
\end{equation}
where for $j \in \{0,1\}$ we have set
\[
U_{j,\m} (t) = \int_\R \h_j(\t) e^{-it\t} \big(\Ac-(\t+i\m)\big) \inv V(\t + i\m) \, d\t.
\]

\stepp 
With partial integrations we see that 
\[
(it)^{d-1} U_{0,\m}(t) =  \int_\R e^{-it\t} f_\m(\t) \, d\t
\]
where
\[
f_\m(\t) = \left(\frac d {d\t} \right)^{d-1} \left( \h_0(\t) \big(\Ac-(\t+i\m)\big) \inv V(\t + i\m) \right).
\]
According to Lemma \ref{lem-tps-fini} we have for any $k \in \N$
\begin{equation*} 
\nr{V^{(k)}(z)}_{\Hc^\d} \lesssim \int_1^2 \nr{U(t)}_{\Hc^\d} \lesssim \nr{U_0}_{\Hc^\d}.
\end{equation*}
With Theorem \ref{th-inter-freq} and Theorem \ref{th-low-freq} applied with $\frac \e 2$ we obtain that there exists $C \geq 0$ such that for all $\t \in \R^*$ and $\m > 0$ we have 
\[
\nr{f_\m(\t)}_{\Hc^{-\d}} \leq C \abs \t^{-\frac \e 2} \nr{U_0}_{\Hc^\d} \quad \text{and} \quad \nr{f'_\m(\t)}_{\Hc^{-\d}} \leq C \abs \t^{-1- \frac \e 2} \nr{U_0}_{\Hc^\d}.
\]
According to Lemma 4.3 in \cite{boucletr14} we obtain
\begin{equation} \label{estim-U0}
\nr{U_{0,\m}(t)}_{\Hc^{-\d}} \lesssim \pppg t^{-(d -\e)} \nr{U_0}_{\Hc^\d}.
\end{equation}

\stepp We now estimate $U_{1,\m}(t)$. Let $N \in \N$ and $\d_N \in \big] N + \frac 12,N+1\big[$. As above we see that $(it)^N U_{1,\m}(t)$ is a linear combination of terms of the form 
\[
\tilde U _{\m,j,k,l}(t) :=  \int_\R  e^{-it\t}  \h_1^{(l)}(\t)\big(\Ac-(\t+i\m)\big)^{-j-1} V^{(k)}(\t + i\m)  \, d\t,
\]
where $j,k,l \in \N$ are such that $j+k+l = N$. According to the Plancherel Theorem (used twice), Theorems \ref{th-inter-freq}, \ref{th-high-freq} and Lemma \ref{lem-tps-fini} we have 
\begin{equation} \label{estim-U-L2}
\begin{aligned}
\int_{\R} \nr {\tilde U _{\m,j,k,l}(t)}_{\Hc^{-{\d_N}}}^2 \, dt
& = \int_{\R} \nr {\h_1^{(l)}(\t) (\Ac-(\t+i\m))^{-j-1} V^{(k)}(\t+i\m)}_{\Hc^{-\d_N}}^2 \, d\t\\
& \lesssim  \int_{\R} \nr {V^{(k)}(\t+i\m)}_{\Hc^{\d_N}}^2 \, d\t\\
& \lesssim \nr{U_0}_{\Hc^{\d_N}}^2.
\end{aligned}
\end{equation}
In particular there exists $C \geq 0$ such that for $U_0 \in \Sc \times \Sc$ we can find $T(U_0) \in [0,1]$ which satisfies
\[
\nr{\tilde U _{\m,j,k,l}(T(U_0))}_{\Hc^{-\d_N}} \leq C \nr{U_0}_{\Hc^{\d_N}}.
\]
For $t \geq 1$ we have
\begin{equation} \label{eq-tilde-U}
\tilde U _{\m,j,k,l}(t) = e^{-i(t-T(U_0))\Ac} \tilde U _{\m,j,k,l}(T(U_0)) +  \int_{T(U_0)} ^t \frac {\partial}{\partial s} \left( e^{-i(t-s)\Ac} \tilde U _{\m,j,k,l}(s) \right) \,ds,
\end{equation}
where for $s \in [T(U_0),t]$
\begin{eqnarray*}
\lefteqn{\frac {\partial}{\partial s} \left( e^{-i(t-s)\Ac}  \tilde U _{\m,j,k,l}(s) \right)}\\
&& = \frac {\partial}{\partial s}\int_\R \h_1^{(l)}(\t) e^{-is\t} e^{-i(t-s)\Ac} \big(\Ac -(\t+i\m)\big)^{-j-1} V^{(k)}(\t+i\m) \, d\t\\
&& = i \int_\R \h_1^{(l)}(\t) e^{-is\t} e^{-i(t-s)\Ac} (\Ac-\t) \big(\Ac -(\t+i\m)\big)^{-j-1} V^{(k)}(\t+i\m)  \, d\t\\
&& = -\m e^{-i(t-s)\Ac}   \tilde U _{\m,j,k,l}(s)  + i e^{-i(t-s)\Ac}  \int_\R e^{-is\t} \h_1^{(l)}(\t) \big(\Ac -(\t+i\m)\big)^{-j} V^{(k)}(\t+i\m) \, d\t.
\end{eqnarray*}
With \eqref{estim-U-L2} and a similar computation for the second term this proves that the map $s \mapsto \frac {\partial}{\partial s} \left( e^{-i(t-s)\Ac}  \tilde U _{\m,j,k,l}(s) \right)$ belongs to $L^2([0,t],\Hc^{-\d_N})$ uniformly in $t$, and its $L^2([0,t], \Hc^{-\d_N})$ norm is controlled by the norm of $U_0$ in $\Hc^{\d_N}$. Using the Cauchy-Schwarz inequality in \eqref{eq-tilde-U} we obtain
\begin{equation*} 
\nr{\tilde U _{\m,j,k,l}(t)}_{\Hc^{-\d_N}} \lesssim (1+ \sqrt t) \nr{U_0}_{\Hc^{\d_N}},
\end{equation*}
and hence 
\begin{equation} \label{estim-HN}
\nr{U_{1,\m}(t)}_{\Hc^{-\d_N}} \lesssim \pppg t^{-N + \frac 12} \nr{U_0}_{\Hc^{\d_N}}.
\end{equation}

\stepp Let $\g \geq 0$ and $\s > \g + \frac 12$. Let $\d_0 \in \big] \frac 12 , \s - \g\big[$. According to \eqref{dec-U0U1} we have
\begin{equation} \label{estim-H0}
\nr{U_{1,\m}(t)}_{\Hc^{-\d_0}} \leq 2\pi \nr{\h(t) e^{-t\m} U(t)}_{\Hc^{-\d_0}} + \nr{U_{0,\m}(t)}_{\Hc^{-\d_0}}  \lesssim  \nr{U_0}_{\Hc^{\d_0}}.
\end{equation}
Now we use interpolation between \eqref{estim-H0} and \eqref{estim-HN} for large $N$. Let $\th_N = {\g} \left( N - \frac 12\right) \inv$. For $N$ large enough we have
\[
\th_N \d_N + (1-\th_N) \d_0 \leq \frac {\g}{ N - \frac 12} (N+1) + \d_0 < \s,
\]
so by interpolation we get 
\begin{equation} \label{estim-U1}
\nr{U_{1,\m}(t)}_{\Hc^{-\s}} \lesssim \pppg t^{-\g} \nr{U_0}_{\Hc^{\s}}.
\end{equation}
In particular this can be applied with $\g = d$ and $\s = \d$. With \eqref{estim-U0}, this concludes the proof.
\end{proof}

\begin{remark}
As usual, the restriction of the time decay in Theorem \ref{th-loc-decay-improved} is due to the contribution of low frequencies. According to \eqref{estim-U1}, the contribution of high frequencies decays like any power of $t$, as long as we choose a suitable weight. This is close to the result of \cite{wang87} (except that we have $\s > \g + \frac 12$ instead of $\s = \g$), even if in our dissipative context we do not have a functionnal calculus to localize the solution spectrally on high frequencies.
\end{remark}

\begin{paragraph}{\bf Acknowledgements}
I am very grateful to Jean-Marc Bouclet for fruitful discussions about this paper. This work is partially supported by the french ANR Project NOSEVOL (ANR 2011 BS01019 01).
\end{paragraph}

\bibliographystyle{alpha}
\bibliography{bibliotex}

\end{document}